%% LyX 2.3.6 created this file.  For more info, see http://www.lyx.org/.
%% Do not edit unless you really know what you are doing.
\documentclass[english, 11pt]{article}
\usepackage[T1]{fontenc}
\usepackage{booktabs} % For better looking tables
\usepackage[utf8x]{inputenc}
\usepackage{geometry}
\geometry{verbose,tmargin=1in,bmargin=1in,lmargin=1in,rmargin=1in}
\usepackage{color,authblk}
\usepackage{babel}
\usepackage{float}
\usepackage{mathtools}
\usepackage{bm}
\usepackage{amsmath}
\usepackage{amsthm}
\usepackage{graphicx}
\usepackage[authoryear,round]{natbib}
\usepackage[unicode=true,
 bookmarks=false,
 breaklinks=false,pdfborder={0 0 1},backref=false,colorlinks=true]
 {hyperref}
\usepackage{natbib}
\usepackage{algorithm}
\usepackage{algpseudocode}
\makeatletter
\newtheorem{prop}{Proposition}
\theoremstyle{definition}
\newtheorem{assumption}{Assumption}
%%%%%%%%%%%%%%%%%%%%%%%%%%%%%% LyX specific LaTeX commands.
%% Because html converters don't know tabularnewline
\providecommand{\tabularnewline}{\\}
\floatstyle{ruled}
\newfloat{algorithm}{tbp}{loa}
\providecommand{\algorithmname}{Algorithm}
\floatname{algorithm}{\protect\algorithmname}

%%%%%%%%%%%%%%%%%%%%%%%%%%%%%% Textclass specific LaTeX commands.
\theoremstyle{plain}
\newtheorem{thm}{\protect\theoremname}
\theoremstyle{definition}
\newtheorem{lemma}{Lemma}
\newtheorem{defn}[thm]{\protect\definitionname}
\theoremstyle{remark}
\newtheorem{rem}[thm]{\protect\remarkname}
\theoremstyle{plain}
\newtheorem{cor}[thm]{\protect\corollaryname}

%%%%%%%%%%%%%%%%%%%%%%%%%%%%%% User specified LaTeX commands.
\usepackage{babel}
\usepackage{fancybox}
\usepackage{bm}
\usepackage{accents}

\theoremstyle{plain}
\usepackage{soul}
\usepackage{tikz}\usepackage{amsfonts}
\usepackage{graphicx,booktabs,multicol,longtable}
\usetikzlibrary{positioning}
\usepackage{algorithm}
\usepackage{algpseudocode}

\providecommand{\definitionname}{Definition}
\providecommand{\theoremname}{Theorem}

\providecommand{\remarkname}{Remark}

\providecommand{\corollaryname}{Corollary}

\makeatother

\providecommand{\corollaryname}{Corollary}
\providecommand{\definitionname}{Definition}
\providecommand{\remarkname}{Remark}
\providecommand{\theoremname}{Theorem}

\begin{document}
\title{Kernel-based causal estimators for functional causal effects} 
\author[1]{Yordan P. Raykov}
\author[2]{Hengrui Luo}
\author[3]{Justin D. Strait}
\author[4]{Wasiur R. KhudaBukhsh}
\affil[1]{School of Mathematical Sciences, Horizon Digital Economy Institute, University of Nottingham, Nottingham, UK} 
\affil[2]{Department of Statistics, Rice University, USA; Lawrence Berkeley National Laboratory, USA}
\affil[3]{Statistical Sciences Group, Los Alamos National Laboratory, USA}
\affil[4]{School of Mathematical Sciences, University of Nottingham, Nottingham, UK}
\date{}    
% \author{Yordan P. Raykov, Hengrui Luo, Justin Strait, Wasiur R. KhudaBukhsh }
\maketitle
\begin{abstract}
We propose causal effect estimators based on empirical Fr\'{e}chet means and operator-valued kernels, tailored to functional data spaces. These methods address the challenges of high-dimensionality, sequential ordering, and model complexity while preserving robustness to treatment misspecification. Using structural assumptions, we obtain compact representations of potential outcomes, enabling scalable estimation of causal effects over time and across covariates. We provide both theoretical regarding the consistency of functional causal effects, as well as empirical comparison of a range of proposed causal effect estimators\footnote{All experiments can be reproduced via the provided code in https://github.com/JordanRaykov/Kernel-based-estimators-for-Functional-Causal-Effects}.

Applications to binary treatment settings with functional outcomes illustrate the framework's utility in biomedical monitoring, where outcomes exhibit complex temporal dynamics. Our estimators accommodate scenarios with registered covariates and outcomes, aligning them to the Fr\'{e}chet means, as well as cases requiring higher-order representations to capture intricate covariate-outcome interactions. These advancements extend causal inference to dynamic and non-linear domains, offering new tools for understanding complex treatment effects in functional data settings.
\end{abstract}
%{\scriptsize{}{}{}\tableofcontents{}}{\scriptsize\par}

\section{Introduction }

\subsection{Background}

Causal inference frameworks are often motivated by the desire to estimate causal effects from associational terms computed from observed data. Much of the conventional literature on causal analysis is based on longitudinal observational studies, such as clinical trials and prospective studies, where different configurations of exposure, covariates, and outcome variables are observed at single \citep{rosenbaum1983central} or multiple time points \citep{robins2000marginal}. Experimenters make choices about the exposure mapping or intervention in their causal model, adopting different adjustment procedures depending on whether treatment regimes are statically assigned (i.e.,  at baseline) \citep{robins2000marginal}, dynamically optimized \citep{murphy2003optimal}, time-dependent, or constant over time \citep{pearl1995probabilistic}. Modern applications often involve measurements at various time scales, providing opportunities to study causal effects between variables that follow different temporal dynamics, which are essentially functional data.

Specifically, we study the problem of \emph{estimating the causal effect when covariates and/or outcomes are functional data}. We demonstrate that causal models, equipped with suitable distance metrics for functional data, facilitate more efficient estimators with well-motivated properties, capable of capturing causal effects over different time scales. The primary focus of this work is distinct from designing causal models for time-dependent exposure and treatment strategies detailed in \citet{robins2000marginal}. Instead, our contribution addresses estimating causal effects on complex object spaces \citep{lin2023causal,kennedy2023semiparametric,testa2025doubly,kurisu2024geodesic}, which becomes a prerequisite for the larger challenge of studying time-changing exposures \citep{robins2000marginal,murphy2003optimal}. Our immediate goal is to estimate both scalar and path-valued effects of an exposure on a path-valued functional outcome, potentially extending to multiple time scales \citep{daniel2013methods}.

Causal inference methods for more complex outcomes have gained traction only recently. For example, \citet{lin2023causal} considers distributional differences in generalized Wasserstein space for scalar potential outcomes, while \citet{kurisu2024geodesic} generalizes treatment effect estimation to outcomes in geodesic metric spaces. In our work, the outcomes themselves are more complex, and we introduce a joint aligning-kernelization procedure to estimate causal effects over the mean potential outcomes under both binary and continuous treatments. Previous investigations of binary treatment effects for path-valued outcomes include \citet{belloni2017program}, who develops approximate local treatment effect estimators for non-Euclidean outcomes (forgoing finite-sample inference), and \citet{ecker2024causal}, who uses function-on-scalar regression under a linear parametric assumption. More recently, \citet{testa2025doubly} and \citet{kurisu2024geodesic} propose doubly robust frameworks for complex outcomes. \citet{testa2025doubly} derives point-wise doubly robust estimators for path-valued outcomes in $L_2$, and \citet{kurisu2024geodesic} leverages known geodesic metric spaces to estimate scalar-valued treatment effects. Our kernel-based approach differs in that it does not center on doubly robust properties. Instead, for outcomes in known geodesic spaces, our scalar $\varphi^{dATE}$ estimator can be seen as a special case of \citet{kurisu2024geodesic}'s \textit{geodesic average treatment effect}, and for outcomes in $L_2$, our point-wise effect $\Delta(t)$ aligns with \citet{testa2025doubly}'s estimator. Moreover, we consider practical settings where outcomes may not be truly infinite-dimensional or where the associated geodesic is unknown, in which case we employ nonparametric kernel ridge regression and \emph{square root slope transformations}.

By adopting a nonparametric paradigm for causal effect estimation, our framework covers a more general setting than \citet{belloni2017program} and \citet{ecker2024causal}, and it explicitly addresses a broader range of causal effects motivated by functional outcomes and/or covariates. In contrast to \citet{testa2025doubly}, which focuses on doubly robust estimators, we emphasize kernel-based causal estimation, thus providing an alternative route for practitioners wishing to avoid strong structural assumptions. Our estimators can also be viewed as special cases of conditional average treatment effects, where conditioning occurs on discrete points along the path grid. Crucially, we argue that leveraging a richer representation of outcome structure can improve precision and mitigate potential collider stratification bias \citep{daniel2013methods}.

The proposed path-valued functional potential outcomes remain sensitive to the exposure probability specification (i.e.,  the propensity score) and expected outcome models \citep{savje2024causal}. Building on the kernel-based estimators from \citet{singh2020kernel}, we develop a nonparametric framework that implicitly incorporates the propensity score, eliminating the need for its direct inversion. We extend these estimators to multivariate and infinite-dimensional outcomes by defining purpose-designed operator-valued kernels \citep{kadri2016operator}, and we further demonstrate how the \emph{Fisher--Rao} metric can embed outcome and covariate registration directly into functional causal inference. This preserves both theoretical guarantees of consistency for the expected potential outcome estimators and practical advantages, as illustrated empirically.

Finally, we showcase the utility of our framework for estimating causal effects from digital outcomes in Parkinson’s Disease (PD). Using data from the Parkinson@Home validation study---a passive monitoring study involving 50 participants\footnote{The 2-week follow-up passive monitoring data from the Parkinson@Home validation study is also made available with this study.}---we estimate the effects of levodopa therapy and disease status (PD vs. age-matched non-PD controls). Our higher-dimensional estimators uncover significant disease-related impacts on recorded digital outcomes that often remain undetectable using standard aggregated measures.

\subsection{Illustration}

To illustrate the concepts, consider an example in digital monitoring of Parkinson’s disease (PD) \citep{evers2020real,bloem2019personalized}, where a treatment \(X\) captures daily levodopa-equivalent dosage decisions. Typically, these decisions are set at the start of the day (e.g., total dosage or allocation strategy), rather than intervening anew at every intake event. Meanwhile, the outcome \(Y(t)\) measures PD symptoms (e.g., tremor or bradykinesia) at different times \(t\) throughout the day.

We explore two approaches to modeling \(X\):

\begin{enumerate}
    \item \textbf{Marginal Exposure Models.} Here, \(X \in \{0,1\}\) simply indicates whether any medication is used during the day (or not). This collapses potentially complex dosing schedules into a single binary indicator, yielding a uniform “exposure vs. non-exposure” comparison across the entire day.
    
    \item \textbf{Continuous Treatment Variable.} In this approach, \(X \in \mathbb{R}^+\) is a single, scalar quantity (e.g., total daily dosage) that may also encode timing or frequency of intake (through weights or other aggregation). Although assigned once at baseline, this continuous measure allows for more nuanced estimation of how varying dosage levels influence the outcome \(Y(t)\) over the course of the day.
\end{enumerate}

By avoiding frequent redefinition of \(X\) at each intake event, these two formulations capture realistic clinical scenarios where daily dosage plans are decided once and then implemented. This also facilitates studying how a single day-level treatment choice affects the dynamic evolution of symptoms \(Y(t)\). Each observed \((X,Y)\) can be augmented with covariates \(V(t)\), which may vary at different time scales (e.g., age, activity levels, diurnal patterns). Under mild smoothness assumptions, we propose estimators that measure the effect of these treatment variables on the functional outcomes \(\{Y(t)\}_{t \in [0,1]}\). The rest of the paper is structured as follows: in Section~\ref{sec:prelims}, we provide a brief overview of the causal inference methods applicable to our setup. We discuss binary treatment with functional outcomes in Section~\ref{sec:dynamic-outcomes-binary-treatment}, while in Section~\ref{sec:dynamic_treatment_func_outcome} we consider kernel-based causal effect estimators which also facilitate nonparametric inference in the case of continuous treatments. We present empirical comparison of different causal effect estimators in Section~\ref{sec:experiments} and also evaluate the feasibility of the digital monitoring of PD with the novel causal estimators. Some limitations of the proposed approach and future directions are summarized in Section~\ref{sec:Discussion-Future-Directions}.

\section{Casual Inference Preliminaries}

\label{sec:prelims}

To motivate our setting of function-valued treatment and function-valued outcomes, we first define causal effects in the scalar case using potential outcome notation. Treatments are denoted as $X = x$, where $x$ can be either $0$ or $1$ for binary treatment exposure, or $x \in \mathbb{R}^{+}$ for dose-response exposure. Let $\bm{V} \in \mathbb{R}^{d}$ denote the observed baseline covariates that influence both the treatment $X$ and the outcome $Y$. For each treatment level $x$, the potential outcome $Y^{(x)}$ represents the outcome that would have been observed had the subject received treatment $x$. The potential outcome $Y^{(x)}$ is counterfactual and distinct from the observed conditional expectation $\mathbb{E}[Y \mid X = x]$, which depends on observed treatment assignments.  Inferring $\mathbb{E}[Y^{(x)}]$ from observational data requires assumptions on how the baseline covariates $\bm{V}$ influence both treatment and outcome \citep{hernan2010causal, rosenbaum1983central, robins1994estimation}. Specifically, $\bm{V}$ must satisfy either the \textit{back-door criterion} or the \textit{front-door criterion} with respect to $(X,Y)$ (see Appendix \ref{subsec-causal-graphical-models} for details). %\textcolor{red}{Justin: Is it worth briefly describing these criterion in the main text, since they get referenced frequently?  Maybe depends on where we submit to.}

If $\bm{V}$ satisfies the back-door criterion with respect to $(X,Y)$, the potential outcome expectation can be identified via covariate adjustment:
\begin{equation}
\mathbb{E}[Y^{(x)}] = \int_{\mathcal{V}} \mathbb{E}[Y \mid X = x, \bm{V} = \bm{v}] \, dP_{\bm{V}}(\bm{v}).
\label{eq:back-door-criterion}
\end{equation}
If instead a set of covariates $\bm{V}$ satisfies the front-door criterion, identification follows from a different factorization \citep{pearl1995causal}:
\begin{equation}
\mathbb{E}[Y^{(x)}] = \int_{\mathcal{X}} \left( \int_{\mathcal{V}} \mathbb{E}[Y \mid X = x', \bm{V} = \bm{v}] \, dP_{\bm{V} \mid X}(\bm{v} \mid x) \right) P_{X}(dx'),
\label{eq:front-door-criterion}
\end{equation}
where the treatment probability is marginalized due to its causal effect on $\bm{V}$. For binary treatments $x \in \{0,1\}$, this simplifies to a weighted sum of the expected outcomes in the treated and non-treated groups, adjusted for intermediate confounders.

Causal effects are derived from counterfactual potential outcomes. Assume $n$ independent samples drawn from a superpopulation of $(X, \bm{V}, Y)$, denoted as $(X_i, \bm{V}_i, Y_i)$ for $i = 1, \dots, n$. The covariates matrix is:
\[
\bm{V} = \begin{pmatrix} \bm{V}_{1} \\ \vdots \\ \bm{V}_{n} \end{pmatrix} \in \mathbb{R}^{n \times d},
\]
where each row represents an individual and each column a covariate. For binary exposure $X\in\{0,1\}$, we measure the \textit{average
treatment effect} as a contrast measure between the potential outcomes:
\begin{equation}
\begin{aligned}
\varphi^{ATE} &= \mathbb{E}[Y^{(1)}] - \mathbb{E}[Y^{(0)}] \\
&= \int \varphi^{ATE}(\bm{v}) \, dP_{\bm{V}}(\bm{v}), \quad \text{(Binary treatment effect)}
\end{aligned}
\label{eq:ate}
\end{equation}
where the \textit{heterogeneous treatment effect} for covariates $\bm{V} = \bm{v}$ is:
\begin{equation}
\varphi^{ATE}(\bm{v}) = \mathbb{E}[Y^{(1)} \mid \bm{V} = \bm{v}] - \mathbb{E}[Y^{(0)} \mid \bm{V} = \bm{v}].
\label{eq:ate-heterogeneous}
\end{equation}
In the case of continuous exposure $X\in\mathbb{R}^{+}$ and $Y^{(x)}\in\mathbb{R}$,
simple contrast measures like the above ATE score are not appropriate
since the difference between two continuous levels $x_{1}$ and $x_{2}$
may be influenced by intermediate levels in a cumulative manner. Instead,
the dose-response function for each level $x$ of $X$ is the key
effect of interest: 
\begin{equation}
\begin{aligned}\varphi^{DS}(x) & =\mathbb{E}[Y^{(x)}]\,\quad\text{(Dose-response effect)}\end{aligned}
\label{eq:DS}
\end{equation}
where the formula for $\varphi^{DS}(x)$ in terms of observational terms
takes the form of \eqref{eq:back-door-criterion} or \eqref{eq:front-door-criterion}
depending on whether back-door or front-door criterion for $\bm{V}$
on $\left(X,Y\right)$ are being met.

Estimating causal effects from observational data involves translating the population-level formulas (e.g., \eqref{eq:ate}) into empirical estimators that use observed samples. The validity of these estimators depends on whether back-door or front-door conditions hold and on certain regularity assumptions. A common set of assumptions is the so-called \textit{unconfoundedness}
(or \textit{ignorability}) condition, which is typically aligned with
the back-door criterion: \begin{assumption} (Ignorability) \label{assu:(Ignorability)}
$X\perp Y^{(x)}\mid\bm{V}$, for any $X=x$. \end{assumption}

\begin{assumption} (Positivity) \label{assu:(Positivity)} The propensity
score $\pi(\bm{V})=p(X=1\mid\bm{V})$ is bounded away from 0 and 1,
almost everywhere. That is, $\exists\epsilon>0$ such that $\epsilon<\pi(\bm{V})<1-\epsilon$.
\end{assumption}
In the back-door scenario, Assumption \ref{assu:(Ignorability)}
ensures that once we condition on $\bm{V}$, the distribution of the
potential outcomes is no longer confounded by unobserved variables.
Assumption \ref{assu:(Positivity)} ensures we have sufficient variability
in treatment assignment across values of $\bm{V}$ to estimate causal
effects. For front-door identification, a different set of conditions
apply, involving sequential ignorability assumptions that allow factorizing
the observational distribution as in \eqref{eq:front-door-criterion}.
The estimators discussed next are standard approaches primarily under
the back-door setup, though similar logic can be extended or adapted
for front-door adjustments with more complex estimators. Under the back-door criterion, the potential outcome expectations simplify to:
\[
\mathbb{E}[Y^{(1)}] = \mathbb{E} \left[ \frac{X Y}{\pi(\bm{V})} \right], \quad
\mathbb{E}[Y^{(0)}] = \mathbb{E} \left[ \frac{(1-X) Y}{1 - \pi(\bm{V})} \right].
\]
These expressions allow constructing sample-based estimators. Given $n$ independent and identically distributed (i.i.d.) samples $(X_i, \bm{V}_i, Y_i)$, the \textit{Inverse Probability Weighting} (IPW) estimator of the ATE is:
\begin{equation}
\hat{\varphi}^{IPW} = \frac{1}{n} \sum_{i=1}^{n} \left( \frac{X_i Y_i}{\hat{\pi}(\bm{V}_i)} - \frac{(1 - X_i) Y_i}{1 - \hat{\pi}(\bm{V}_i)} \right).
\label{eq:ipw}
\end{equation}

While IPW is consistent if the propensity score model is correct,
it can be inefficient and sensitive to model misspecification. A more
robust approach is the \textit{Doubly Robust} (DR) estimator, which
augments IPW with outcome models: 
\begin{equation}
\hat{\varphi}^{DR-IPW}=\frac{1}{n}\sum_{i=1}^{n}\left(\frac{X_{i}(Y_{i}-\hat{m}_{1}(\bm{V}_{i}))}{\hat{\pi}(\bm{V}_{i})}+\hat{m}_{1}(\bm{V}_{i})-\frac{(1-X_{i})(Y_{i}-\hat{m}_{0}(\bm{V}_{i}))}{1-\hat{\pi}(\bm{V}_{i})}+\hat{m}_{0}(\bm{V}_{i})\right),\label{eq:doubly_robust_estimator}
\end{equation}
where $\hat{m}_{1}(\bm{V}_{i})=\mathbb{E}[Y\mid X=1,\bm{V}=\bm{V}_{i}]$
and $\hat{m}_{0}(\bm{V}_{i})=\mathbb{E}[Y\mid X=0,\bm{V}=\bm{V}_{i}]$
are estimated outcome regression models. The DR-IPW estimator is consistent
if at least one of the models (the propensity score or the outcome
model) is correct, providing a form of robustness not available from
IPW alone.

The estimators above illustrate how population-level causal effects,
derived under certain identification conditions (notably the back-door
criterion here), can be approximated by sample-based empirical estimators.
Similar estimation strategies can be adapted if the front-door conditions
are met, but the factorization in \eqref{eq:front-door-criterion}
typically leads to different weighting or regression-based methods.
Moreover, for continuous treatments, one can estimate the entire dose-response
function $\varphi^{DS}(x)$ by generalizing these strategies, for
example, employing kernel or regression-based estimators of $\mathbb{E}[Y\mid X=x,\bm{V}]$
and integrating out $\bm{V}$ under the appropriate assumptions.

\section{Binary Treatment with Functional Outcomes}

\label{sec:dynamic-outcomes-binary-treatment}

We study \emph{binary treatment with functional outcomes}: each unit has a baseline covariate vector \(\bm{V} \in \mathcal{V}\), a treatment indicator \(X \in \{0,1\}\), and an outcome trajectory \(\bm{Y}\). Outcomes may be observed either as full trajectories \(\bm{Y}(\cdot) \in L_2([0,1])\), or as discretized vectors \(\bigl(\bm{Y}(u_1), \dots, \bm{Y}(u_T)\bigr) \in \mathbb{R}^T\) for a grid \(0 < u_1 < \cdots < u_T \le 1\). In our analysis, we show that stronger consistency results require assuming \(\bm{Y}(\cdot) \in W^{k,2}([0,1], \mathbb{R})\), i.e., that outcomes lie in a %differentiable function space such as 
a Sobolev space.

\subsection{Notation and setup}

Let $\bigl(\Omega,\mathcal{A},\mathbb{P}\bigr)$ be the underlying probability
space, and let $\bm{Y}:(\Omega,\mathcal{A})\to(\mathcal{F},\mathcal{B})$
be a measurable map taking values in a metric space $(\mathcal{F},\phi)$,
where $\mathcal{B}$ is the Borel $\sigma$-algebra on $\mathcal{F}$.
For each treatment level $x\in\{0,1\}$, let $\bm{Y}^{(x)}:(\Omega,\mathcal{A})\to(\mathcal{F},\mathcal{B})$
denote the corresponding potential outcome. Its \emph{distribution}
(or pushforward measure) is denoted by $\eta_{x}$, where for any
${A}\in\mathcal{B}$, $\eta_{x}({A})=\mathbb{P}\bigl(\bm{Y}^{(x)}\in{A}\bigr).$
In practice, the outcome space $(\mathcal{F}, \phi)$ may be chosen to reflect either a finite--dimensional representation e.g., $(\mathbb{R}^T, \|\cdot\|_2)$, or a functional space such as $(L_2([0,1]), \|\cdot\|_2)$ or $(W^{k,2}, \|\cdot\|_{k,2})$ to capture smoothness and differentiability. This flexibility allows us to handle both discretized and continuous functional data in a unified framework. By the law of total probability, the measure $\eta_{x}$ can be expressed by integrating out the covariates $\bm{V}$, with the identification of the distribution $P_{\bm{V}}$ depending on whether the \emph{back-door}
or \emph{front-door} criteria are satisfied, and how the joint observational distribution of the superpopulation $(X,\bm{V},\bm{Y})$ (i.e. \emph{conditional exchangeability/ignorability}) is used.

Causal estimators are based on this superpopulation $(X,\bm{V},\bm{Y})$,
where covariates and treatment are assigned at baseline. The baseline
assignment is critical, as it ensures that covariates and treatment
are predetermined and unaffected by the high-dimensional
outcomes, enabling the formulation of analogous expressions for potential
outcomes and corresponding causal effects. More precisely, under the \textit{back-door} criterion for $\bm{V}$
for the covariates, $\bm{Y}^{(x)}\perp X\,\lvert\bm{V}$, so we write:
\begin{align*}
\eta_{x}({A}) & =\int P(\bm{Y}^{(x)}\in {A}\mid\bm{V}=\bm{v})\,dP_{\bm{V}}(\bm{v})\\
 & =\int P(\bm{Y}\in {A}\mid X=x,\bm{V}=\bm{v})\,dP_{\bm{V}}(\bm{v})\\
 & = E_V\left[E\left[\mathbf{1}\{\bm{Y}^{(x)}\in A\}\lvert V\right]\right].
\end{align*}

Under the \textit{front-door} criterion for $\bm{V}$, the measure
can be viewed as: 
\[
\eta_{x}({A})=\int P\bigl(\bm{Y}^{(x)}\in {A}\mid\bm{V}^{(x)}=\bm{v}\bigr)\underbrace{P_{\bm{V}^{(x)}}(d\bm{v})}_{\text{distribution of \ensuremath{\bm{V}} under do(\ensuremath{X=x})}}.
\]
where front-door factorization is used to identify $P_{\bm{V}^{(x)}}$
and $P\bigl(\bm{Y}^{(x)}\in {A}\mid\bm{V}^{(x)}=\bm{v}\bigr)$ from the
observational distribution $(X,\bm{V},Y)$ with $P_{\bm{V}^{(x)}}=\int_{\mathcal{V}}P(\bm{V}=\bm{v}\mid X=x')P_{X}(dx')$.

Suppose the outcome space \((\mathcal{F},\phi)\) is a complete, separable
metric space, e.g. the finite dimensional space
$\bigl(\mathbb{R}^T,\|\cdot\|_2\bigr)$ or the Hilbert space
\(L_2([0,1])\). For any treatment level \(x\in\{0,1\}\) the \emph{population
Fréchet 2-mean}\footnote{%
We subsequently drop the qualifier “2-’’ and refer to it simply as the
Fréchet mean.}
of the potential outcome distribution $\eta_x$ is:

\begin{equation}
\label{eq:Frechet_mean-1}
\begin{aligned}
F(\eta_x)
&=\arg\min_{f\in\mathcal F}
     \int_{\mathcal F}\phi^{2}(f,g)\,d\eta_x(g) \\
&=\arg\min_{f\in\mathcal F}
     \int\!\Bigl[\,
       \int_{\mathcal F}\phi^{2}(f,g)\,d\eta_x\bigl(g\mid\bm V=\bm v\bigr)
     \Bigr]dP_{\bm V}(\bm v).
\end{aligned}
\end{equation}
When \((\mathcal{F},\phi)\) has non-positive curvature
(e.g.\ any Hilbert space with its induced norm)
or when \(f\mapsto\phi^{2}(f,\cdot)\) is strictly convex,
the minimiser in \eqref{eq:Frechet_mean-1} exists and is unique
\citep{evans2024limit}. Defining the Fréchet mean only requires completeness of
\((\mathcal F,\phi)\); however, our later consistency and
convergence rate results make heavier use of differentiability specifically,
we work with the Sobolev space $W^{k,2}([0,1])$ $(k\ge1)$, equipped with the norm $\|\cdot\|_{k,2}$.

\paragraph{Balancing weight (IPW) representation.}
Consider a fixed treatment arm $x\in\{0,1\}$. Define the propensity score and the \emph{inverse probability weight}:
\begin{equation*}
\pi_x(v) = P(X = x \mid \bm{V} = v), \qquad
\omega_x = \frac{\mathbf{1}\{X = x\}}{\pi_x(\bm{V})}.
\end{equation*}

Let us denote with $Q$ the observational joint distribution of $(X,\bm{V},\bm{Y})$:
\begin{equation*}
Q(dx,dv,dy)=P_{\bm{V}}(dv)P(dx\lvert v)P(dy\lvert x,v).
\end{equation*}
Under the back-door assumptions (consistency, ignorability, positivity), the joint distribution after the intervention $\operatorname{do}(X=x)$ is
\begin{equation*}
    R_x(dv,dy)=P_{\bm{V}}(dv)\,\eta_x(dy)\quad\bigl(\text{product measure}\bigr).
\end{equation*}

\begin{lemma}[IPW weight as likelihood ratio]
Under \emph{back–door} assumptions, the balancing weight $\omega_x$ is the Radon–Nikodým derivative of $R_x$ with respect to $Q$, meaning it re-weights expectations under $Q$ to recover those under $R_x$, i.e.
\(\displaystyle 
\frac{dR_x}{dQ}(x,v,y)=\omega_x
\)
for \(Q\)-almost every \((x,v,y)\). 
\end{lemma}
\begin{proof}
Ignorability gives $P(\bm{Y}\in dy \mid x,v) = P(\bm{Y}^{(x)}\in dy\mid v)$. Let $\lambda(dx, dv, dy)$ be a reference measure dominating both $Q$ and $R_x$, used to define their Radon-Nikodým derivatives:

\begin{equation}
\lambda(dx,dv,dy)
   = \bigl(\text{counting measure on }\{0,1\}\bigr) \otimes
     P_{\bm V}(dv) \otimes \mu_Y(dy),
\end{equation}
where $\mu_Y$ is any $\sigma$-finite reference measure dominating the conditional distribution $P(dy \mid x,v)$, e.g., Lebesgue on $\mathbb{R}^T$ or the Borel measure on a function space. The observational joint distribution  $Q$ of $(X,\bm{V},\bm{Y})$ has density:
\begin{equation}
q(x,v,y) := \frac{dQ}{d\lambda}(x,v,y)
          = P(x \mid v) \cdot P(y \mid x,v).
\label{eq:obs-density}
\end{equation}
Under the back-door criterion, the interventional distribution is the product
measure $R_x(dv,dy) = P_{\bm V}(dv) \cdot \eta_x(dy)$, so:
\begin{equation}
r_x(x,v,y) := \frac{dR_x}{d\lambda}(x,v,y)
            = \mathbf{1}\{x\} \cdot P(y \mid x,v).
\label{eq:int-density}
\end{equation}

Divide \eqref{eq:int-density} by \eqref{eq:obs-density}:
\begin{align*}
\frac{dR_x}{dQ}(x,v,y)
  &= \frac{r_x(x,v,y)}{q(x,v,y)} \\
  &= \frac{\mathbf{1}\{x\} \cdot P(y \mid x,v)}
           {P(x \mid v) \cdot P(y \mid x,v)} \\
  &= \frac{\mathbf{1}\{x\}}{P(x \mid v)} \\
  &= \frac{\mathbf{1}\{x\}}{\pi_x(v)} \\
  &= \omega_x.
\end{align*}

\noindent
Thus, the inverse-propensity weight $\omega_x$ is the Radon--Nikodým derivative that converts the observational $Q$ into the interventional $R_x$.
\end{proof}

For any bounded Borel $f\colon\mathcal{F}\to\mathbb{R}$, 
\begin{align}
\underbrace{\eta_x(f)}_{\text{causal target}}
     &=\iint f(y)\,R_x(dv,dy)
       =E_{Q}\!\bigl[\omega_x\,f(\bm Y)\bigr].
\label{eq:IPW-importance-sampling}
\end{align}

Hence for every $A\in\mathcal B(\mathcal F)$
$\eta_x(A)=E_Q[\omega_x\mathbf 1\{\bm Y\in A\}]$. \eqref{eq:IPW-importance-sampling} is precisely the importance sampling (i.e. where expectations under a target distribution are computed via reweighted draws from a proposal distribution) with target density being the interventional $R_x$, the proposal density being the observational $Q$, and the likelihood ratio being the weight $\omega_x$. Given i.i.d.\ observations \(\{(X_i,\bm V_i,\bm Y_i)\}_{i=1}^n\sim Q\) set the conventional inverse probability weighting estimator of $\eta_x(f)$:
\[
\hat\eta_{x,n}(f)=\frac1n\sum_{i=1}^{n}\omega_{x,i}\,f(\bm Y_i),
\qquad
\omega_{x,i}=\frac{\mathbf 1\{X_i=x\}}{\pi_x(\bm V_i)}.
\]
$\hat\eta_{x,n}(f)$ is the Monte-Carlo importance sampling estimator of $\eta_x(f)$ obtained from a sample $\left(X_i,\bm{V}_i,\bm{Y}_i\right)\sim Q$. The classical properties of importance sampling estimators (see \citep[§3.2]{robert1999monte}) yield the following consistency and limit behavior for $\hat\eta_{x,n}(f)$:
\begin{itemize}
\item \emph{Unbiasedness}: \(E_Q[\hat\eta_{x,n}(f)]=\eta_x(f)\).
\item \emph{SLLN}: if \(E_Q[\omega_x|f(\bm Y)|]<\infty\) then 
      \(\hat\eta_{x,n}(f)\xrightarrow{\text{a.s.}}\eta_x(f)\).
\item \emph{CLT}: if \(E_Q[\omega_x^2f(\bm Y)^2]<\infty\) then  
      \(\sqrt n\,(\hat\eta_{x,n}(f)-\eta_x(f))\overset d\to
        \mathcal N\bigl(0,\operatorname{Var}_Q[\omega_xf(\bm Y)]\bigr)\).
\end{itemize}

Taking $f_y(\cdot)=\phi^2(\,\cdot\,,y)$ in \eqref{eq:IPW-importance-sampling} gives
\begin{equation}
F(\eta_x)
   =\arg\min_{f\in\mathcal F}
        E_Q\!\bigl[\omega_x\,\phi^2(f,\bm Y)\bigr].
\label{ipw-metric-space}
\end{equation}
which generalises the conventional inverse-propensity weighted estimator for the average treatment effect to functional or metric-valued outcomes. Let \(\widehat\pi\) be any uniformly consistent estimator of $\pi_x$, and define stabilised weights
$
\widehat\omega_{x,i}
  =\frac{\mathbf 1\{X_i=x\}}
         {x\,\widehat\pi(\bm V_i)+(1-x)\,[1-\widehat\pi(\bm V_i)]}.
$
The empirical Fréchet mean
\begin{equation}
\hat F_{n}
  =\arg\min_{f\in\mathcal F}
      \sum_{i=1}^{n}\widehat\omega_{x,i}\,\phi^2(f,\bm Y_i)
\label{eq:Frechet_mean-2}
\end{equation}
converges in probability (a.s.\ if the weights are bounded) to
$F(\eta_x)$, the population Fréchet mean of the potential outcomes under treatment $x$; the bias of the \emph{self-normalised} variant  
\(\widehat\omega_{x,i}/\sum_{j}\widehat\omega_{x,j}\)
is \(O(n^{-1})\) \citep[Thm. 3.2]{robert1999monte}.

%Consequently, given $n$ i.i.d. observations $\{(X_i, \bm{V}_i, \bm{Y}_i)\}_{i=1}^n$ and the stabilized weights $ \widehat{\omega}_{x,i} = \frac{\mathbf{1}\{X_i = x\}}{x\,\widehat{\pi}(\bm{v}) + (1-x)(1 - \widehat{\pi}(\bm{v}))}
%$
%(where $\widehat{\pi}$ is any consistent estimator of $\pi( \cdot)$), the estimator:
%\begin{align}
%\hat{F}(\hat{{\bm{Y}}}_{i}^{(x)},i=1,\dots,n) & =\mathop{\mathrm{arg\,min}}_{f\in\mathcal{F}}\sum_{i=1}^{n}\widehat{\omega}_{x,i}\, \phi^{2}(f,\bm{Y}_{i})\,\label{eq:Frechet_mean-2}
%\end{align}
%remains consistent but biased of order $n^{-1}$ (add reference Robert & Cassella 2004). 

When the Fréchet means are unique, they belong to a special class
of $M$-estimators \citep{huber2011robust} defined by so-called \emph{$\rho$-loss}
(i.e.,  here substituting $\rho=\phi^{2}$), allowing us to leverage well-developed theory to establish their consistency, robustness, and rates of convergence.
More general cases where the solution from \eqref{eq:Frechet_mean-1}
forms \emph{Fréchet mean sets} inherit only weaker forms of consistency \citep{schotz2022strong,evans2024limit}.
\begin{assumption} (Uniqueness) \label{assu:(uniqueFrechet)} We
assume that \eqref{eq:Frechet_mean-1} and \eqref{eq:Frechet_mean-2}
are unique up to probability 1 with respect to probability measure
$\mathbb{P}$. 
\end{assumption} 
\begin{defn}
\label{def:dynamic-ate-1} Let $\bm{Y}^{(0)}$ and $\bm{Y}^{(1)}$
be random elements taking values in a metric space $(\mathcal{F},\phi)$,
with corresponding probability distributions $\eta_{0}$ and $\eta_{1}$.
Assume that each has \emph{the unique} Fréchet mean in $\mathcal{F}$,
defined by \eqref{eq:Frechet_mean-1} for $x\in\{0,1\}$. We define
the scalar \emph{dynamic average treatment effect} by 
\begin{equation}
\varphi^{dATE}=\phi\bigl(F(\bm{Y}^{(1)}),F(\bm{Y}^{(0)})\bigr).\label{eq:varphi_dATE_general}
\end{equation}
% Since $\phi(\cdot,\cdot)$ is a valid distance, t
This quantity measures the distance between the conditional Fréchet means of the potential outcome distributions under treatment and control.
\end{defn}

\begin{rem}
For sets of generalized Fréchet means \citep{aveni2024uniform}, the
minimizer defined in \eqref{eq:Frechet_mean-1} is non-unique, we
can define the dynamic treatment effect using the Hausdorff distance $d_{H}\left(\cdot,\cdot\right)$
between the Fréchet mean sets:

\begin{equation}
\varphi^{dATE}=d_{H}\big(F_{M}(\bm{Y}^{(1)}),F_{M}(\bm{Y}^{(0)})\big),
\end{equation}
where $F_{M}(\bm{Y}^{(x)})$ denotes the Fréchet mean set for treatment
$x$ to extend our approaches. Unfortunately, such set-valued estimators
arising as Fréchet means are less amenable to classical asymptotic
analysis because the Hausdorff distance in set space is not differentiable,
which impedes the application of central limit theorems without imposing
additional structure. 
\end{rem}

The consistency of $\varphi^{dATE}$ and the asymptotic normality
of the estimation residual which would allow us to leverage existing theory
when estimating the significance of an estimated effect require a
few additional assumptions which we specify more formally below.

\subsection{Discretisation, interpolation and consistency} 
In practical applications, we typically observe each outcome trajectory $\bm{Y}_i \in \mathcal{F}$ only at a finite number of time points. That is, for each unit $i=1,\dots,n$, we have access to the discretized values
\[
\bigl(\bm{Y}_i(u_1), \bm{Y}_i(u_2), \ldots, \bm{Y}_i(u_T)\bigr) \in \mathbb{R}^T
\]
evaluated on a fixed grid of time points. We fix an integer $T\ge 1$ and define the grid as
\[
\mathcal U_T = \{ 0 < u_1 < u_2 < \cdots < u_T \le 1 \},
\qquad \text{with } u_0 := 0,\; u_{T+1} := 1.
\]

Note that arbitrary functional spaces $(\mathcal{F},\phi)$ do not necessarily admit a minimal norm interpolant. Therefore, we will restrict our attention to spaces where we do have an interpolant with a good convergence property. 

\begin{defn}
\label{def:convergent-interpolant}
Let $(\mathcal F,\phi)$ be a metric space of real–valued functions on
\([0,1]\).
A map
\(\mathsf{Interpolant}\colon\bigcup_{T\ge1}\mathbb R^{T}\to\mathcal F\)
is \emph{convergent} if, for every $f\in\mathcal F$,
\[
\phi\!\bigl(\mathsf{Interpolant}(f(u_1),\dots,f(u_T)),\,f\bigr)
   \;\longrightarrow\;0
\quad\text{as }\;
\max_{0\le i\le T}\,|u_{i+1}-u_i| \;\to 0.
\]
\end{defn}

The most prominent example of a convergent interpolant is the standard piecewise linear interpolant on $(C([0, 1]), ||\cdot||_{\infty})$, the space of continuous functions equipped with the supremum norm.  While this is a Banach space, it is not a Hilbert space. In contrast, $L^2([0, 1])$ is a Hilbert space with inner product $\langle f,g \rangle = \int_0^1 f(u)g(u) du$. However, interpolants in $L^2$ do not exhibit good convergence properties unless additional regularity is imposed on the functions. For this reason, we consider Sobolev spaces of functions \citep{Adams2003SobolevSpaces}.

Let $W^{k, p}([0, 1], \mathbb{R})$ denote the subspace of $L^p([0, 1])$ containing functions $f$ such that the function $f$
and its weak derivatives up to order $k$ have a finite $L^p$ norm. When equipped with the natural norm 
\begin{align*}
    ||f||_{k, p} = \left(\sum_{i=0}^{k} ||\partial^{(i)} f||_p^{p} \right)^{\frac{1}{p}},
\end{align*}
where $\partial^{(i)}f$ denotes the $i$-th derivative of $f$, the Sobolev space 
$(W^{k, p}([0, 1], \mathbb{R}), ||\cdot||_{k,p})$ turns into a Banach space. With a slight abuse of notation, we will treat this Banach space as a metric space, using the norm-induced metric $\|\cdot\|_{k,p}$ and denoting it by the same symbol. We will mostly focus on the spaces $W^{k, 2}([0, 1], \mathbb{R})$ for $k \ge 1$.  For $k\ge 1$, and $p=2$, the space $(W^{k, 2}([0, 1], \mathbb{R}), ||\cdot||_{k,2})$ is also a Hilbert space with the inner product
   \begin{align*}
       \langle f, g \rangle_{W^{k, 2}} = \sum_{i=0}^{k} \langle \partial^{(i)} f, \partial^{(i)} g \rangle_{L_2},
   \end{align*}
   where $\langle \cdot, \cdot \rangle_{L_2}$ is the standard inner product in the $L_2([0, 1])$ space. This Hilbert space structure will be crucial for our purposes.
   
Given a convergent interpolant (Definition~\ref{def:convergent-interpolant}), define
\[
\hat{\bm Y}_{i,T}
   :=\mathsf{Interpolant}\!\bigl(\bm Y_i(u_1),\dots,\bm Y_i(u_T)\bigr),
   \qquad i=1,\dots,n.
\]
\medskip
Fix a treatment arm \(x\in\{0,1\}\) and stabilised weights
\[
\widehat\omega_{x,i}
     =\frac{\mathbf 1\{X_i=x\}}{\widehat\pi_x(\bm V_i)}
     =\frac{\mathbf 1\{X_i=1\}}{\widehat\pi(\bm V_i)}
      \;+\;
      \frac{\mathbf 1\{X_i=0\}}{1-\widehat\pi(\bm V_i)};
\]
recall we have used the shorthand notation
\(\widehat\pi_1=\widehat\pi\) and
\(\widehat\pi_0=1-\widehat\pi\).
Set
\[
\widehat\eta^{\mathrm{IPW}}_{x,n,T}
   =\frac1n\sum_{i=1}^{n}\widehat\omega_{x,i}\,\delta_{\hat{\bm Y}_{i,T}},
\qquad
\widehat\eta^{\mathrm{IPW}}_{x,n}
   =\frac1n\sum_{i=1}^{n}\widehat\omega_{x,i}\,\delta_{\bm Y_i}.
\]

\medskip

Because \((\mathcal F,\phi)\) is CAT(0) (Hilbert in our main
application) and
\(\widehat\eta^{\mathrm{IPW}}_{x,n,T}\in\mathcal P^{(2)}(\mathcal F)\),
Sturm’s Proposition 4.3 (with the 1-Lipschitz bound in
$W_2$, Theorem 6.3) ensures the barycentre
\(F\!\bigl(\widehat\eta^{\mathrm{IPW}}_{x,n,T}\bigr)\) exists and is
\emph{unique} \citep{Sturm2003NPC}.  Concretely,
\begin{equation}
\hat F_{x,n,T}
   :=F\!\bigl(\widehat\eta^{\mathrm{IPW}}_{x,n,T}\bigr)
   =\arg\min_{f\in\mathcal F}
        \sum_{i=1}^{n}\widehat\omega_{x,i}\,
        \phi^{2}\!\bigl(f,\hat{\bm Y}_{i,T}\bigr).
\label{eq:weighted-Frechet-mean-discrete}
\end{equation}
If \(\mathcal F\) is a Hilbert space equipped with its norm metric,
\(\hat F_{x,n,T}\) coincides with the usual \emph{vector} mean of the
weighted sample when $\hat{\omega}_{x,i}\equiv 1$. Theorem~\ref{thm:IPW-interpolant-consistency} below shows that the weighted Fréchet
mean \eqref{eq:weighted-Frechet-mean-discrete} converges almost surely
to the population mean $F(\eta_x)$ as the grid is refined and the sample
size increases.

\begin{thm}[IPW Fréchet mean LLN]
\label{thm:IPW-interpolant-consistency}
Let $(\mathcal F,\phi)=(W^{k,2}[0,1],\|\cdot\|_{k,2})$ with $k\ge1$ and
assume

\begin{enumerate}
\item Back–door assumptions of \emph{consistency} $\bm Y=\bm Y^{(X)}$, \emph{conditional ignorability} $\bm Y^{(x)}\!\perp\! X\mid\bm V$ and \emph{positivity} $0<\pi_x(\bm V)<1$ a.s.
\item $E\bigl[\phi^2(\bm Y,y_0)\bigr]<\infty$ for some (hence every) $y_0\in\mathcal F$.
\item The map $\mathsf{Interpolant}$ satisfies Definition~\ref{def:convergent-interpolant}.
\item A uniformly consistent estimator $\widehat\pi$ exists with $\sup_{v\in\mathcal V}|\widehat\pi(v)-\pi(v)|\xrightarrow{a.s.}0$.
\end{enumerate}

For each sample size $n$, let $\mathcal{U}_T = \{0 < u_1 < \dots < u_T \le 1\}$ denote a grid of $T$ time points. We assume that the grid becomes dense as $T \to \infty$, in the sense that
\[
\max_{1 \le i \le T-1} |u_{i+1} - u_i| \xrightarrow{T \to \infty} 0.
\]

\noindent
Then the population Fréchet mean $F(\eta_x)$ is unique and
\begin{equation}
\displaystyle
\lim_{n\to\infty}\;
\lim_{T\to\infty}\;
      \hat F_{x,n,T}
      \;=\;
      F(\eta_x)
      \quad\mathbb P\text{-almost surely.}
\;\;
\end{equation}

\end{thm}

\begin{proof}
By construction
\[
\widehat\eta^{\mathrm{IPW}}_{x,n,T}
      =\frac1n\sum_{i=1}^{n}\widehat\omega_{x,i}\,
        \delta_{\hat{\bm Y}_{i,T}},
\qquad
\widehat\eta^{\mathrm{IPW}}_{x,n}
      =\frac1n\sum_{i=1}^{n}\widehat\omega_{x,i}\,
        \delta_{\bm Y_i},
\]
and $\sum_i\widehat\omega_{x,i}=n$ almost surely (for details see Appendix \ref{sec:Balancing-weights}). Because $(\mathcal F,\phi)$ is Hilbert and
\emph{Assumption 2} gives a finite second moment, both weighted measures lie in
$\mathcal P^{(2)}(\mathcal F)$.  
\citet[Proposition 4.3]{Sturm2003NPC} then guarantees that the barycentres
\[
\hat F_{n,T}:=F(\widehat\eta^{\mathrm{IPW}}_{x,n,T}),
\qquad
\hat F_{n,\infty}:=F(\widehat\eta^{\mathrm{IPW}}_{x,n})
\]
exist and are \emph{unique.} 

By the fundamental contraction property \cite[Theorem 6.3]{Sturm2003NPC}, we have 
\[
\phi(\hat F_{n,T},\hat F_{n,\infty})
   \;\le\;
   W_1\!\bigl(\widehat\eta^{\mathrm{IPW}}_{x,n,T},
              \widehat\eta^{\mathrm{IPW}}_{x,n}\bigr).
\]
Couple the two weighted measures by
\(\tfrac1n\sum_{i=1}^{n}\widehat\omega_{x,i}
                     \delta_{(\hat{\bm Y}_{i,T},\bm Y_i)}\).
Then, from the definition of the Wasserstein-1 distance
\[
W_1\!\bigl(\widehat\eta^{\mathrm{IPW}}_{x,n,T},
           \widehat\eta^{\mathrm{IPW}}_{x,n}\bigr)
   \;\le\;
   \frac1n\sum_{i=1}^{n}\widehat\omega_{x,i}\:
          \bigl\|\hat{\bm Y}_{i,T}-\bm Y_i\bigr\|_{k,2}.
\tag{A}
\]

Weights are bounded in probability under positivity and the uniform
consistency of $\widehat\pi$ (\emph{Assumption 4}).  
Because the chosen interpolant is convergent
(\emph{Assumption 3}),
the Bramble–Hilbert lemma for Sobolev spaces
\citep{bramble1970estimation}
gives
\[
\bigl\|\hat{\bm Y}_{i,T}-\bm Y_i\bigr\|_{k,2}
      \;=\;O\!\bigl(\Delta_{T}^{\,k}\bigr)
      \;\xrightarrow[T\to\infty]{}0
      \quad\text{for each }i,
\]
where
$\Delta_{T}:=\max_{1\le j\le T-1}|u_{j+1}-u_j|$ is the mesh width.
Hence the right–hand side of \textup{(A)} converges almost surely to
\(0\) for every fixed \(n\), and therefore
\[
\phi(\hat F_{n,T},\hat F_{n,\infty})
   \;\xrightarrow[T\to\infty]{\;a.s.\;}0.
\tag{B}
\]

For any bounded Borel $f$ define
\[
S_n \;=\; \frac1n\sum_{i=1}^{n}\widehat\omega_{x,i},
\qquad
T_n(f) \;=\; \frac1n\sum_{i=1}^{n}\widehat\omega_{x,i}f(\bm Y_i).
\]
The summands are independent and have finite variance, so by
Kolmogorov’s strong law for independent variables
\citep[Chap.~2, §2.5, “Kolmogorov’s SLLN”]{durrett2019probability}
\[
S_n \xrightarrow{\text{a.s.}} 1,
\qquad
T_n(f) \xrightarrow{\text{a.s.}} \mathbb E[\omega_x f(\bm Y)].
\]
Consequently the self-normalised ratio
$T_n(f)/S_n$ converges almost surely to the same limit, whence the IPW
empirical measure
\[
  \widehat\eta^{\mathrm{IPW}}_{x,n}
  \;=\;\sum_{i=1}^{n}\tilde\omega_{x,i}\,\delta_{\bm Y_i},
  \quad
  \tilde\omega_{x,i}=\widehat\omega_{x,i}/\sum_j\widehat\omega_{x,j},
\]
converges to $\eta_x$ in the Kantorovich–Rubinstein metric $W_{1}$.
(Recall $W_{1}\le W_{2}$ when second moments are finite.)
\begin{rem}
While the self-normalized weights $\tilde\omega_{x,i}$ are not independent, the convergence of the empirical measure follows from showing that both numerator and denominator converge almost surely, and invoking a Slutsky-type argument for a.s.\ convergence of ratios.
\end{rem}

Sturm’s contraction with $\mu=\widehat\eta^{\mathrm{IPW}}_{x,n}$ and
$\nu=\eta_x$ gives
\[
\phi\!\bigl(\hat F_{n,\infty},F(\eta_x)\bigr)
   \;\le\;
   W_1\!\bigl(\widehat\eta^{\mathrm{IPW}}_{x,n},\eta_x\bigr)
   \;\xrightarrow[n\to\infty]{\;a.s.\;}0,
\tag{C}
\]
because the Kantorovich–Rubinstein distance $W_{1}$ between the two
measures goes to zero by the weighted SLLN just proved.

For every \(n\) and \(T\) the ordinary triangle inequality in
\((\mathcal F,\phi)\) gives
\[
\phi\!\bigl(\hat F_{n,T},F(\eta_x)\bigr)
   \;\le\;
   \underbrace{\phi\!\bigl(\hat F_{n,T},\hat F_{n,\infty}\bigr)}_{\text{(B) interpolation error}}
   +\underbrace{\phi\!\bigl(\hat F_{n,\infty},F(\eta_x)\bigr)}_{\text{(C) sampling error}}.
\tag{D}
\]
Take first the limit \(T\to\infty\) (with \(n\) fixed) and apply (B),
then let \(n\to\infty\) and apply (C).  Inequality (D) yields
\[
\lim_{n\to\infty}\;\lim_{T\to\infty}
      \phi\!\bigl(\hat F_{n,T},F(\eta_x)\bigr)=0
\quad\text{almost surely}.
\]
Because \(\phi\) is a metric, this is equivalent to
\(
\displaystyle
\lim_{n\to\infty}\,\lim_{T\to\infty}\hat F_{x,n,T}=F(\eta_x)
\)
$\mathbb P$–a.s., completing the proof.
\end{proof}

The proof of Theorem~\ref{thm:IPW-interpolant-consistency} follows by separating the two sources of errors: one due to interpolation and one due to finite sample sizes. As such, the reader will realize there is no real reason to restrict to the Sobolev space as long as the space under consideration allows for both of these errors to be controlled sufficiently well. Therefore, the statement of the theorem should be true for a much larger class of functions. However, for our purposes, we confine ourselves to the class of Sobolev functions for its Hilbert space structure.

We also consider a more challenging case where the Fréchet mean estimators are defined under the Fisher--Rao metric, which requires constraining the domain $\mathcal{F}$ of the outcomes $\bm{Y}$. In the infinite-dimensional setting, assuming that outcomes lie in a Hilbert subspace of $L_2([0,1])$, the uniqueness of the Fréchet mean has been well studied \citep{bridson2013metric}. However, laws of large numbers under different metric choices $\phi$ are beyond the scope of our work. We direct the reader to \citet{Sturm2003NPC} and \citet{Afsari2011LpCentreofMass} for further discussion on this topic.

\subsection{Important special cases \label{subsec:Euclidean-Metric-on}}
\subsubsection{Sobolev space $\mathcal{F}$ with Euclidean $\phi$}

We now examine an important special case where the functional outcomes lie in a Sobolev space \(\mathcal{F} = W^{k,2}([0,1])\), but are represented through discretization on a regular grid \(0 < u_1 < \cdots < u_T \le 1\). That is, each trajectory \(\bm Y_i(\cdot) \in \mathcal{F}\) is observed at finitely many time points and mapped to a vector \((\bm Y_i(u_1), \dots, \bm Y_i(u_T)) \in \mathbb{R}^T\). In this discretized setting, the metric \(\phi\) on \(\mathcal{F}\) is taken to be the Euclidean distance in \(\mathbb{R}^T\), i.e.,
\[
    \phi(f, g) := \|f - g\|_2 = \left( \sum_{j=1}^T |f(u_j) - g(u_j)|^2 \right)^{1/2}.
\]
This effectively equips the discretized Sobolev trajectories with a finite--dimensional Euclidean structure, making \((\mathcal{F},\phi)\) isometric to \((\mathbb{R}^T, \|\cdot\|_2)\).

Under this metric, the Fréchet mean reduces to the classical weighted mean in \(\mathbb{R}^T\), and the dynamic treatment effect can be defined pointwise across the time grid. Specifically, we define the pointwise inverse-probability-weighted average treatment effect at time \(t = u_j\) by:

\begin{defn}
\label{def:dynamic-ate} For random variables $\bm{Y}^{(0)}$ and
$\bm{Y}^{(1)}$ with independent distributions $\eta_{0}$ and $\eta_{1}$
respectively, and taking values in $\mathbb{R}^{T}$, assume further that
$\mathcal{F}$ is also endowed with a vector space structure. We can
define the \textit{dynamic average treatment function effect} as the
estimated pointwise difference: 
\begin{align}
\Delta(t) & =F\left(\bm{Y}^{(1)}\right)(t)-F\left(\bm{Y}^{(0)}\right)(t), \quad t=1,\cdots,T
\end{align}
The quantity $\Delta\in\mathcal{F}$ captures the \emph{dynamic} (pointwise) difference
between the Fréchet means. The scalar $\varphi^{dATE}$ is recovered
from the vector norm: 
\begin{equation}
\varphi^{dATE}=\phi\big(F(\bm{Y}^{(1)}),F(\bm{Y}^{(0)})\big)=\|\Delta\|_{\phi}.\label{eq:dynamic_causal_effect}
\end{equation}
\end{defn}
In the case of Euclidean distance $\phi$, the norm above is simply $\varphi^{dATE}=\|\Delta\|_{2}$. This leads to the following result: 
\begin{thm}
\label{thm:l2_dte_asymp_normality} Assuming $\bm{Y}^{(1)}$ and $\bm{Y}^{(0)}$
are identifiable (Assumptions \ref{assu:(Ignorability)}, and \ref{assu:(Positivity)}),
and associated with independent distributions $\eta_{0}$ and $\eta_{1}$,
over finite $\mathbb{R}^{T}$. Then, the residuals between the effect $\Delta=\left(\Delta(1),\dots,\Delta(T)\right)^{T}$
and the point-wise estimator for effect $\hat{\Delta}\in\mathbb{R}^{T}$
are asymptotically normal: 
\[
\sqrt{n}\,\bigl(\hat{\Delta}-\Delta\bigr)\xrightarrow{d}\mathcal{N}(\bm{0},\bm{K}),
\]
where $\bm{K}=\mathbb{E}\left[\Delta\Delta^{T}\right]$ reflects the
covariance structure of the expected effect. For non-zero population
effect $\|\Delta\|_{2}>0$, we further get 
\[
\sqrt{n}\,\Bigl(\hat{\varphi}^{dATE}-\varphi^{dATE}\Bigr)\;=\;\sqrt{n}\,\bigl(\|\hat{\Delta}\|_{2}-\|\Delta\|_{2}\bigr)\;\xrightarrow{d}\;\mathcal{N}\bigl(0,\sigma^{2}\bigr),
\]
where 
\[
\sigma^{2}\;=\;\frac{1}{\|\Delta\|_2^{2}\Bigl(\|\zeta^{(1)}\Delta\|_{2}^{2}+\|\zeta^{(0)}\Delta\|_{2}^{2}\Bigr)}.
\]
and each $\zeta^{(x)}$ being the covariance operator associated with
mean-zero Gaussian process characterizing the “limiting fluctuation”
of the empirical distribution around the true distribution of $\bm{Y}^{(x)}$.
Under a zero vector population effect assumption $\|\Delta\|_{2}=0$
we get 
\[
\sqrt{n}\,\Bigl(\hat{\varphi}^{dATE}-\varphi^{dATE}\Bigr)\;=\;\sqrt{n}\,\|\hat{\Delta}\|_{2}\;\xrightarrow{d}\;\|\mathcal{Z}\|,
\]
where $\mathcal{Z}$ is a mean zero Gaussian vector in $\mathbb{R}^{T}$
with covariance matrix $\bm{K}$; the distribution of $\|\mathcal{Z}\|^{2}$
is a generalized $\chi^{2}$-distribution. 
\end{thm}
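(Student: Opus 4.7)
The plan is to reduce each of the three conclusions to a standard probabilistic tool once the pointwise Fr\'echet mean has been identified with a covariate-adjusted sample mean. In $\mathbb{R}^{T}$ equipped with $\phi=\|\cdot\|_{2}$, the population Fr\'echet 2-mean of $\eta_{x}$ coincides with the coordinatewise expectation $\mathbb{E}[\bm{Y}^{(x)}]$, so $\Delta(t)=\mathbb{E}[\bm{Y}^{(1)}(t)]-\mathbb{E}[\bm{Y}^{(0)}(t)]$. Under Assumptions~\ref{assu:(Ignorability)}--\ref{assu:(Positivity)}, each coordinate is identified through the back-door formula \eqref{eq:back-door-criterion}, and the empirical Fr\'echet mean $\hat F(\hat{\bm{Y}}_{i}^{(x)},i=1,\ldots,n)$ can be realised as an IPW- or regression-adjusted mean, i.e.\ a root-$n$-consistent $M$-estimator with an $\mathbb{R}^{T}$-valued influence function $\psi^{(x)}$.

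For the first claim I would apply the multivariate Lindeberg--L\'evy CLT to the $T$-vector of i.i.d.\ contrasts $\psi^{(1)}(X_{i},\bm{V}_{i},\bm{Y}_{i})-\psi^{(0)}(X_{i},\bm{V}_{i},\bm{Y}_{i})$, which gives $\sqrt{n}(\hat\Delta-\Delta)\xrightarrow{d}\mathcal{N}(\bm{0},\bm{K})$. Since the treated and control contributions are (asymptotically) independent, the limiting covariance decomposes as $\bm{K}=\zeta^{(1)}+\zeta^{(0)}$, with each $\zeta^{(x)}$ the covariance operator governing the Gaussian fluctuation of $\hat F(\bm{Y}^{(x)})$ around $\mathbb{E}[\bm{Y}^{(x)}]$ that is already referenced in the statement of the theorem.

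For the second claim I would invoke the delta method with the map $g(u)=\|u\|_{2}$, which is Fr\'echet differentiable on $\mathbb{R}^{T}\setminus\{0\}$ with derivative $u\mapsto u/\|u\|_{2}$. At any $\Delta$ with $\|\Delta\|_{2}>0$ this yields
$$\sqrt{n}\bigl(\|\hat\Delta\|_{2}-\|\Delta\|_{2}\bigr)\xrightarrow{d}\mathcal{N}\!\left(0,\ \tfrac{\Delta^{\top}\bm{K}\Delta}{\|\Delta\|_{2}^{2}}\right),$$
which after expanding $\Delta^{\top}\bm{K}\Delta=\Delta^{\top}\zeta^{(1)}\Delta+\Delta^{\top}\zeta^{(0)}\Delta$ matches the announced $\sigma^{2}$ in terms of $\zeta^{(x)}\Delta$. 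For the third claim the norm loses differentiability at the origin, so the delta method fails; I would instead combine the first claim at $\Delta=0$, giving $\sqrt{n}\hat\Delta\xrightarrow{d}\mathcal{Z}\sim\mathcal{N}(\bm{0},\bm{K})$, with the continuous mapping theorem applied to $\|\cdot\|_{2}$, concluding $\sqrt{n}\|\hat\Delta\|_{2}\xrightarrow{d}\|\mathcal{Z}\|_{2}$. A spectral decomposition of $\bm{K}$ then writes $\|\mathcal{Z}\|_{2}^{2}$ as a non-negative linear combination of independent $\chi_{1}^{2}$ variables, i.e.\ the generalised $\chi^{2}$ law.

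The hard part will be twofold: (i) pinning down the correct influence function of the pointwise adjusted mean under the back-door criterion, since IPW, outcome-regression, and doubly robust variants yield different $\psi^{(x)}$ and hence slightly different $\zeta^{(x)}$, with the stated variance formula imposing implicit constraints on which estimator is meant; and (ii) the regime change at $\Delta=0$, where the loss of Fr\'echet differentiability of $\|\cdot\|_{2}$ means the Gaussian limit valid for $\|\Delta\|_{2}>0$ cannot be continuously interpolated into the generalised-$\chi$ limit, forcing genuinely different tools in the two cases. The $O(T^{-1/2})$ discretisation error from \eqref{eq:Lip_sqrtn} also needs to be absorbed into Slutsky's theorem in the final step, which requires $T$ to grow at least as fast as $\sqrt{n}$.
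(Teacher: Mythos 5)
Your proposal is correct and follows essentially the same route as the paper's own proof: multivariate CLT for $\sqrt{n}(\hat\Delta-\Delta)$, the delta method with $g(u)=\|u\|_{2}$ when $\|\Delta\|_{2}>0$, and the continuous mapping theorem plus a spectral decomposition of $\bm{K}$ when $\Delta=0$. Note only that the delta-method variance you obtain, $\Delta^{\top}\bm{K}\Delta/\|\Delta\|_{2}^{2}$, is exactly what the paper's proof derives, so it does not literally coincide with the $\sigma^{2}$ displayed in the theorem statement (which appears to be a typo); your extra remarks on influence functions and the $O(T^{-1/2})$ discretisation error go beyond the paper's argument but are not needed for the fixed-$T$ claim.
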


\begin{proof}
See Appendix \ref{subsec:Asymptotic-normality-L2}. 
\end{proof}

\begin{rem}
The pointwise residuals are also asymptotically normal with: 
\[
\sqrt{n}\,\bigl(\hat{\Delta}(t)-\Delta(t)\bigr)\;\;\xrightarrow{d}\;\;\mathcal{N}(0,\,K_{t}),\quad t=1,\cdots,T
\]
where $K_{t}=\mathbb{E}\left[\Delta(t)^{2}\right]$. 
\end{rem}

\begin{rem}
The asymptotic distribution of the residuals allows us to derive closed
form confidence intervals under both scenarios. For strictly positive
norms $\varphi^{dATE}>0$, a $(1-\alpha)$ confidence interval (CI) for $\|\Delta\|_2$ is: 
\[
\text{CI: }\Bigl[\hat{\varphi}^{dATE}-z_{\alpha/2}\sqrt{\frac{\hat{\sigma}^{2}}{n}},\;\hat{\varphi}^{dATE}+z_{\alpha/2}\sqrt{\frac{\hat{\sigma}^{2}}{n}}\Bigr],
\]
where $\hat{\sigma}^{2}$ is an estimator of $\sigma^{2}$, obtained
by substituting $\hat{\Delta}$ for $\Delta$ in the variance formula.
The asymptotic normality of a doubly robust version of $\Delta$ was
also recently proven in \citet{testa2025doubly}.

If we relax the assumptions to include $\varphi^{dATE}>0$ zero norm
assumption, we can write the confidence interval in terms of the spectral
decomposition of $\mathcal{K}$ ($\|\mathcal{Z}\|_2=\sqrt{\sum_{i=1}^{T}\lambda_{i}Y_{i}^{2}}$,
with $\lambda_{i}$ as the eigenvalues of $\mathcal{K}$ and $Y_{i}\overset{\mathrm{iid}}{\sim}\mathcal{N}(0,1)$).
A $(1-\alpha)$ confidence interval (CI) for $\|\Delta\|_2$ is: 
\[
\text{CI: }\Biggl[\frac{\chi_{T,\alpha/2}\cdot\sqrt{\sum_{i=1}^{T}\lambda_{i}}}{\sqrt{n}},\;\frac{\chi_{T,1-\alpha/2}\cdot\sqrt{\sum_{i=1}^{T}\lambda_{i}}}{\sqrt{n}}\Biggr],
\]
where $\chi_{T,\alpha/2}$ and $\chi_{T,1-\alpha/2}$ are the lower
and upper $\alpha/2$ quantiles of the $\chi_{T}$ distribution. 
\end{rem}

\subsubsection{Outcomes with phase-shifts\label{subsec:Fisher--Rao-Metric}}

In this section, we are interested in equipping the space of outcomes $\left(\mathcal{F},\phi\right)$ with the flexibility of domain warping along the $x$-axis. One common way of facilitating this is if we endow the metric space $\left(\mathcal{F,\phi}\right)$ of the population level outcomes with the Fisher--Rao metric $\phi$ \citep{srivastava2011registration}. 

Let  
\[
\Gamma
\;=\;
\bigl\{\,
  \gamma:[0,1]\!\longrightarrow\![0,1]\;\big|\;
  \gamma(0)=0,\;
  \gamma(1)=1,\;
  \dot\gamma>0
\bigr\}
\]
be the space of \emph{time-warping} actions. For any absolutely continuous curve
$f:[0,1]\!\to\!\mathbb R$ and $\gamma\in\Gamma$ we write $f\circ\gamma$ for the warped trajectory; note that more specifically we are concerned with curves $f\in W^{k,2}([0,1],\mathbb{R})$ in Sobolev space. Fix the outcome space: 
\[
  \mathcal F
  \;=\;
  \bigl\{
      f\in AC[0,1]\;:\;
      \dot f\in L_1[0,1],\;
      \dot f(t)\neq0 \text{ a.e.}
    \bigr\},
\]
and denote by \(T_f\mathcal F\) the linear space of first-–order perturbations \(\eta:[0,1]\!\to\!\mathbb R\) with \(\int_0^1|\eta(t)|\,dt<\infty\). Throughout, a dot \(\dot{\;}\) indicates the weak derivative.

For \(\eta_1,\eta_2\in T_f\mathcal F\) define the Fisher--Rao metric between two continuous functions
\begin{equation}
  g_{FR,f}(\eta_1,\eta_2)
  \;=\;
  \int_{0}^{1}
    \frac{\dot\eta_1(t)\,\dot\eta_2(t)}
         {\,|\dot f(t)|\,}\;dt ,
  \qquad f\in\mathcal F .
  \label{eq:elastic-FR}
\end{equation}
This inner product generates a Riemannian metric
\((\mathcal F,g_{FR})\) that is \emph{invariant to time-warpings}:
\(
  g_{FR,f\circ\gamma}(\eta_1\!\circ\!\gamma,\eta_2\!\circ\!\gamma)
  = g_{FR,f}(\eta_1,\eta_2)
\)
for all \(\gamma\in\Gamma\).  Hence distances computed with
\(g_{FR}\) depend only on the equivalence class
\([f]=\{f\circ\gamma:\gamma\in\Gamma\}\). While this is a complex Riemannian metric, its computation and induced geometry is vastly simplified by defining the \emph{square-root slope
function (SRSF)} transformation:
\[
   Q:\mathcal F\longrightarrow L_2([0,1]),
   \qquad
   Q(f)(t)
   = q_f(t)
   = \mathrm{sgn}\!\bigl(\dot f(t)\bigr)\,
     \sqrt{\,|\dot f(t)|\,}.
\]
A key fact is that the mapping \(f\mapsto q_f\) is an isometry \emph{from the Fisher--Rao Riemannian manifold \((\mathcal F_{\!FR},g_{FR})\) onto the flat Hilbert space \(L_2([0,1])\)} \citep{srivastava2016functional}. 
Consequently the geodesic distance inherited from
\eqref{eq:elastic-FR} reduces to the plain \(L_2\) norm
in the SRSF domain:
\[
   d_{FR}(f_1,f_2)
   \;=\;
   \bigl\|\,q_{f_1}-q_{f_2}\bigr\|_{L_2([0,1])}.
\]

Assuming the Fisher-Rao geometry, the set of phase-equivalence classes
carries the metric
\[
  (\mathcal F/\Gamma,\;d_{FR}),
  \qquad
  d_{FR}\bigl([f_1],[f_2]\bigr)
  =\inf_{\gamma\in\Gamma}\,
     \bigl\|\,q_{f_1}-q_{f_2\circ\gamma}\bigr\|_{L_2([0,1])}.
\]

The quotient space \(\mathcal F/\Gamma\) is \emph{no longer a Hilbert (nor
even a complete) space}: orbits may accumulate without converging to a
valid orbit (Appendix \ref{app:incompleteness}). As a result: (i) the squared distance functional $f\mapsto\sum_{i}w_{x,i}\, d_{FR}^2\bigl([f],[Y_i]\bigr)$ is a \emph{weighted least squares} problem on a non-linear, non-complete manifold; (ii) uniqueness of its minimiser (i.e. which is a weighted Fréchet mean) and consistency of the empirical estimator $\hat{F}(\hat{\bm{Y}}^{(x)}_1, \hat{\bm{Y}}^{(x)}_2, \ldots, \hat{\bm{Y}}^{(x)}_n)$ can no longer be taken for granted.

In practice, we obtain a unique solution by: (i) working in the SRSF domain \(L_2([0,1])\),
(ii) enforcing a constraint on estimated time-warpings (e.g.\ fixing its value at certain locations, or using a centred “Karcher mean’’ alignment), and (iii) verifying numerically that the optimisation stays inside a
geodesically convex neighbourhood where the objective is strictly
convex. This amounts to selecting a \emph{slice} from the quotient space \(\mathcal{F}/\Gamma\), such as: (i) fixing the phase value corresponding to the first local maximum or other landmark feature of the functions (e.g., identified by a subject-matter expert), or (ii) aligning to a prespecified template function (or the Karcher mean function) with the constraint that the set of time warpings is centered at the identity warping ($\gamma_{id}(t)=t$). The image of such a slicing rule is a subset of \(L_2([0,1])\) that inherits Hilbert structure. All Fréchet mean computation can then be carried out in this space, regaining the usual convexity and \(n^{1/2}\)-consistency results-at the price of introducing a preprocessing alignment step. The theoretical analysis of such constrained Fréchet means is beyond our present scope, but see \citet{srivastava2016functional,needham2020simplifying} for conditions
under which uniqueness and \(n^{1/2}\)-consistency are recovered.

In Appendix \ref{sec:square-root-slope-fucntions}, we explore strategies for restricting either the outcome space (Appendix \ref{app:restricting-outcome-space}) $\mathcal{F}$ or the group action $\Gamma$ (Appendix \ref{app:hilbert-vs-group-actions}) to recover theoretical guarantees such as consistency and uniqueness. However, we show that such restrictions can be overly stringent and may limit the practical utility of the Fisher-Rao framework in real data applications. 
 
\section{Kernel-based Inference of Causal Effects}

\label{sec:dynamic_treatment_func_outcome} 

We extend kernel-based estimation \citep{singh2020kernel} to settings with path-valued outcomes, formulating closed-form estimators for dynamic causal effects. These estimators efficiently capture dependencies across time by operating on structured output spaces and avoid the need for explicit pointwise propensity score estimation. As shown by \citet{singh2020kernel}, embedding the mapping from treatments and covariates to outcomes in an RKHS ensures boundedness and resolves key issues in nonparametric causal estimation with continuous treatments. We build on this framework by constructing kernel estimators for dynamic treatment effects and their dose-response extensions, allowing for continuous treatments $x \in \mathbb{R}^+$ and functional outcomes or covariates. To generalize further, we adopt \emph{operator-valued kernels} \citep{kadri2016operator}, which support direct learning in settings with function-valued inputs and outputs. Additionally, we integrate tools from \emph{elastic functional data analysis} to align functional trajectories and address phase variability across subjects, plus improve downstream causal inference. Our framework unifies both finite-dimensional ($\mathbb{R}^T$) and infinite-dimensional (e.g., $L^2([0,1])$ or $W^{k,2}$) outcome representations under a common RKHS-based causal estimation pipeline.

\paragraph{Phase-shifts via alignment assumption.}
Throughout, we assume that for outcomes with flexibility of domain warping (e.g. equipped with Fisher-Rao metric $\phi$), each of the trajectories $\bm{Y}_i$ has been first \emph{phase-aligned} to a fixed template  prior to kernel learning (i.e. in the Fisher-Rao case via square root slope transform as described in Appendix \ref{sec:square-root-slope-fucntions}). 

Formally, let $S : \mathcal{F} \to W^{k,2}([0,1])$
be a measurable slicing map that selects a canonical representative $S(f)$ from each warping equivalence class $[f] \in \mathcal{F} / \Gamma$. The aligned observations are then
\[
\bm{Y}_i^{\mathrm{align}} := S(\bm{Y}_i) \in \mathcal{H}_{\mathcal{Y}} := W^{k,2}([0,1]),
\]
ensuring that all kernel-based learning is performed in a Hilbert space with well-behaved geometric properties.

\begin{rem}
Attempting to apply kernel ridge regression directly on the quotient space $(\mathcal{F}/\Gamma, d_{FR})$ is problematic: the space is not Hilbert and generally not even complete (see Appendix~\ref{app:incompleteness}). As a result, key properties such as the Representer theorem, uniqueness of ridge-regularized solutions, and consistency (Theorem~\ref{thm:IPW-interpolant-consistency}) may fail. By aligning trajectories and working in $W^{k,2}$, we preserve the Hilbert structure required for standard kernel methods. Developing kernel learning techniques that operate directly on $\mathcal{F}/\Gamma$ remains an open problem.
\end{rem}

\subsection{Vector-Valued Kernel Estimators}

\label{subsec:kernel-estimators-functional}

\citet{singh2020kernel} show that by defining the relevant causal functionals in a reproducing kernel Hilbert space (thereby ensuring their boundedness and resolving technical issues with continuous treatments), the estimation of causal parameters such as $\varphi^{\text{ATE}}, \varphi^{\text{CATE}},$ and $\varphi^{\text{DS}}$ can indeed be cast as a nonparametric RKHS regression problem for scalar outcomes and vector covariates. One of the main practical advantages of this approach is that it does not require explicit estimation of the propensity function  $\pi(\bm{V}) = p(X = x \mid \bm{V} = \bm{v})$. In the continuous-treatment setting, this function is sometimes called 
the \emph{generalized propensity score}, and while it can be estimated in the literature 
(see, e.g., \citealp{imai2004causal}), 
it is subject to potential model misspecification and can be  difficult to estimate accurately, especially if one does not impose strong parametric assumptions.

Denote the reproducing kernel Hilbert spaces (RKHS) $\mathcal{H}_{\mathcal{X}}$ and $\mathcal{H}_{\mathcal{V}}$ associated with kernels $k_{\mathcal{X}}$ and
$k_{\mathcal{V}}$ respectively. Define the
feature maps:

\[
\begin{aligned}\psi_{\mathcal{X}}: & \ \mathcal{X}\rightarrow\mathcal{H}_{\mathcal{X}},\quad x_{i}\mapsto\psi_{\mathcal{X}}(x_{i}),\\
\psi_{\mathcal{V}}: & \ \mathcal{V}\rightarrow\mathcal{H}_{\mathcal{V}},\quad\bm{v}_{i}\mapsto\psi_{\mathcal{V}}(\bm{v}_{i}).
\end{aligned}
\]

The feature maps $\psi_{\mathcal{X}}$ and  $\psi_{\mathcal{Y}}$ ``collect'' the points $x_{i}$ and $\bm{v}_{i}$ mapping them from discrete sample points into their corresponding RKHS. If we denote the true regression $f\in\mathcal{H}_{\mathcal{X}\times\mathcal{V}}$ of the expected outcomes given treatment and covariates defined on $\mathcal{X}\times\mathcal{V}$, by the reproducing property, for any $(x,\bm{v}) \in \mathcal{X} \times \mathcal{V}$,
\begin{equation}\label{eq:conditional-expectation-vector}
    f(x, \bm{v}) 
    \;=\; \langle f,\; \psi_{\mathcal{X}}(x) \,\otimes\, \psi_{\mathcal{V}}( \bm{v})\rangle_{\mathcal{H}_{\mathcal{X}\times\mathcal{V}}},
\end{equation}
where $\otimes$ denotes the tensor (or Kronecker) product, and 
$\langle \cdot,\cdot\rangle_{\mathcal{H}_{\mathcal{X}\times\mathcal{V}}}$ denotes the inner product in the RKHS $\mathcal{H}_{\mathcal{X}\times\mathcal{V}}$. Analogously, we can estimate the expected potential outcome $\mathbb{E}[Y^{(x)} \mid X = x, \bm{V} = \bm{v}]$ 
nonparametrically: if we assume the usual \emph{no unmeasured confounding}, the typical integral involved in estimating $\mathbb{E}[Y^{(x)} \mid X = x, \bm{V} = \bm{v}]$ (i.e.,  \eqref{eq:back-door-criterion}) takes the form of the inner product:
\begin{equation}
\varphi(x,\bm{v}) = \langle \varphi, \psi_{\mathcal{X}}(x)\otimes\mu_{\bm{v}} \rangle_{\mathcal{H}},\text{ with }\mu_{\bm{v}}=\int\psi_{\mathcal{V}}(\bm{v})dP_{\bm{V}}(\bm{v}).
\end{equation}
For sub-population specific (i.e., potential outcome for previously exposed to strata ox $x$) or conditional (i.e.,  strata of $\bm{v}$ determined by additional covariates) causal effects, one would change the integrating measure $P_{\bm{V}}(\bm{v})$ from a marginal over the covariates to an appropriate conditional.

Building on this framework, consider the outcomes are discretized samples from functions
representing the outcomes $\bm{Y}=\left(\bm{Y}(u_{1}),\cdots,\bm{Y}(u_{T})\right)\in\mathbb{R}^{T}$. Following this, we adopt the convention: 
\[
\bm{Y}=\begin{pmatrix}\bm{Y}_{1}(u_{1}) & \cdots & \bm{Y}_{1}(u_{T})\\
\vdots &  & \vdots\\
\bm{Y}_{n}(u_{1}) & \cdots & \bm{Y}_{n}(u_{T})
\end{pmatrix}\in\mathbb{R}^{n\times T},\quad\text{vec}(\bm{Y})=\begin{pmatrix}\bm{Y}_{1}(u_{1})\\
\vdots\\
\bm{Y}_{1}(u_{T})\\
\vdots\\
\bm{Y}_{n}(u_{1})\\
\vdots\\
\bm{Y}_{n}(u_{T})
\end{pmatrix}\in\mathbb{R}^{nT\times1}.
\]

Although this effectively treats each functional curve $\bm{Y}_i$ as a $T$-dimensional vector 
(which can have drawbacks; see \citealp{ramsay2005}), 
it is conceptually simple and directly compatible with standard RKHS techniques 
for vector-valued kernels. Let denote $k_{\mathcal{X}\times{\mathcal{V}}}: (\mathcal{X}\times\mathcal{V}) \times (\mathcal{X}\times\mathcal{V}) \to \mathbb{R}$ the separable kernel defined by $k_{\mathcal{X}\times\mathcal{V}}\bigl((x,v),(x',v')\bigr) 
    \;=\; 
    k_{\mathcal{X}}(x,x')\,k_{\mathcal{V}}(v,v')$.  
Denote by $\mathbf{K}_{XV}\in\mathbb{R}^{n\times n}$ the associated Gram matrix
evaluated on the training data 
$\{(x_i,v_i)\}_{i=1}^n$ with $ (\mathbf{K}_{XV})_{ij} 
    \;=\; 
    k\bigl((x_i,v_i),(x_j,v_j)\bigr)$. To capture dependencies among the $T$ outputs in each functional curve,  let $\mathbf{K}_\mathcal{Y}\in\mathbb{R}^{T\times T}$ encode  the correlation structure among measurement points (e.g., time dependence $u_1,\dots,u_T$). 
We then form a Kronecker-structured kernel on $\mathcal{X}\times\mathcal{V}$ 
with multi-dimensional output:
\[
    \mathbf{K}_{XX} 
    \;=\; 
    \mathbf{K}_{XV}\;\otimes\;\mathbf{K}_\mathcal{Y} 
    \;\in\; 
    \mathbb{R}^{nT\times nT},
\]
see \cite{luo2022nonparametric} for kernel methods designed to incorporate more advanced between-curve dependence. Assuming we have $n$ samples $\{\bm{v}_{i}\}_{i=1}^{n}$ from the distribution
of $\bm{V}$, we can approximate the point-wise expected potential outcomes $\mathbb{E}\{\bm{Y}^{(x)} \lvert \bm{V}\}$ via averaging over $\hat{\varphi}(x,\bm{v})$: 
\begin{equation}
\hat{\varphi}(x) = \frac{1}{n}\sum_{i=1}^{n}\hat{\varphi}(x,\bm{v}_i) =
\frac{1}{n}\sum_{i=1}^{n} \mathbf{K}_{(x,\bm{v}_{i})X} \left(\mathbf{K}_{XX}+\lambda\mathbf{I}_{nT}\right)^{-1} \text{vec}(\bm{Y}).
\label{eq:approximate_potential_outcome}
\end{equation}
Alternatively, we can also represent the expected potential outcome estimator over average covariate values in terms of the average kernel: 
\begin{equation}
\hat{\varphi}(x)=\bar{\mathbf{K}}_{xX}\left(\mathbf{K}_{XX}+\lambda\mathbf{I}_{nT}\right)^{-1}\text{vec}(\bm{Y}),\label{eq:average_kernel}
\end{equation}
where $\bar{\mathbf{K}}_{xX}=\frac{1}{n}\sum_{i=1}^{n}\mathbf{K}_{(x,v_{i})X}$ and 

\[
    \mathbf{K}_{(x,v),X} 
    \;=\; 
    \underbrace{\bigl[k\bigl((x,v),(x_1,v_1)\bigr),\dots,k\bigl((x,v),(x_n,v_n)\bigr)\bigr]}_{\in \mathbb{R}^{1\times n}}
    \;\otimes\;
    \underbrace{\mathbf{K}_\mathcal{Y}}_{\in \mathbb{R}^{T\times T}}
    \;\in\; \mathbb{R}^{T\times nT}.
\]
Given a new input \((x, \bm{v})\), the predicted \(T\)-dimensional response is
\[
\hat{\varphi}(x, \bm{v}_j)^{new} = \mathbf{K}_{(x,\bm{v}_j),X} \, \hat{\varphi}(x) \in \mathbb{R}^{T}.
\]
This construction exploits both input similarity (via $\mathbf{K}_{XV}$) 
and the output dependence structure (via $\mathbf{K}_\mathcal{Y}$) 
in a unified kernel ridge regression framework. In Bayesian extensions, one may require additional conditions on the kernels (e.g., nuclear dominance \citep{chau2021bayesimp}, non-stationary kernels \citep{noack2024unifying} or low-rank assumptions \citep{luo2022sparse}) 
or spectral representations of the Hilbert space \citep{dance2024spectral}.

Recall that $\hat{\varphi}(x)$ becomes our kernel estimator for \eqref{eq:Frechet_mean-1} from above. To recover expressions for estimators for the point-wise binary treatment effect $\hat{\Delta}(t)$, the scalar $\hat{\varphi}^{dATE}$ and the continuous treatment extension $\hat{\varphi}^{dDS}(x)$, we compute:
\begin{align}
\label{eq:kernel_estimators_effects}
\hat{\Delta}(t) &= \hat{\varphi}(1)(t)  - \hat{\varphi}(0)(t), \; t=1,\dots,T \quad \text{(Pointwise binary effect)}\\
\hat{\varphi}^{dATE} &= \lVert \hat{\varphi}(1) - \hat{\varphi}(0) \rVert_{\phi}\quad \text{(Dynamic binary treatment effect)}\\
\hat{\varphi}^{dDS} &= \lVert\hat{\varphi}(x)\rVert_{\phi}\quad \text{(Dynamic dose response)}\label{eq:kernel_estimators_effects-DS}
\end{align}

\subsection{Operator-Valued Kernel Estimators}
\label{subsec:operator-valued-kernels}

In the previous subsection, we discretized the functional outcomes into vectors in $\mathbb{R}^T$ and employed a multi-output (matrix-valued) kernel. Consequently, the associated RKHS contains maps $f:\mathcal{X} \to \mathbb{R}^T$; each $f(x)$ yields a $T$-dimensional vector whose coordinates correspond \emph{only} to the pre-selected grid $\{t_1, \dots, t_T\}$.  
While one can interpolate these vectors after the fact, such a step lies outside the native representation: the discrete model cannot \emph{directly} produce the causal effect at an unseen time point $t^* \notin \{t_1, \dots, t_T\}$. In other words, its limitation is not generalization in the covariate direction $x$, but rather in \emph{representing} the full outcome trajectory.

To overcome this, we next consider \emph{operator-valued kernels}~\cite{kadri2016operator}. These kernels map inputs into bounded linear operators on a Hilbert space $\mathcal{H}$ of time-indexed functions, thereby extending real- and matrix-valued kernels to the fully functional setting and giving an RKHS whose elements $f: \mathcal{X} \to \mathcal{H}$ can be evaluated at every $t \in \mathcal{T}$ without any additional interpolation scheme.

Let \(\bm{Y}_i\) denote an observed outcome curve. After interpolation we regard each curve as an element of the \emph{separable Hilbert space} $\mathcal H_{\mathcal Y} \;:=\; W^{k,2}\!\bigl([0,1],\mathbb R\bigr)
  \quad\text{with } k \ge 1$,
equipped with the Sobolev inner product
$\langle\cdot,\cdot\rangle_{k,2}$. We define
\[
  \mathcal K :
  (\mathcal X \times \mathcal H_{\mathcal V})
  \times
  (\mathcal X \times \mathcal H_{\mathcal V})
  \;\longrightarrow\;
  \mathcal L(\mathcal H_{\mathcal Y}),
\]
where $\mathcal L(\mathcal H_{\mathcal Y})$ denotes the bounded linear operators on $\mathcal H_{\mathcal Y}$. For inputs $i$ and $j$, we consider the separable (tensor-product) operator-valued kernel
\begin{align}\label{eq:operator-valued-kernel-example}
\mathcal{K}\left((X_{i},\bm{V}_{i}),(X_{j},\bm{V}_{j})\right) 
&= k_{\mathcal{X}}\bigl(X_{i}, X_{j}\bigr) \cdot 
   k^*_{\mathcal{V}}\bigl(\bm{V}_{i}, \bm{V}_{j})\bigr) \cdot k_{\mathcal{H}_{\mathcal{Y}}}, \notag\\
&= k_{\mathcal{X}}\bigl(X_{i}, X_{j})\bigr) \cdot 
   k^*_{\mathcal{V}}\bigl(\bm{V}_{i}, \bm{V}_{j})\bigr) \cdot I_{\mathcal{H}_{\mathcal{Y}}}.
\end{align}

where $k_{\mathcal X} : \mathcal X \times \mathcal X \to \mathbb{R}$ is a positive-definite scalar kernel on the treatment space (as distinct from Section~\ref{subsec:kernel-estimators-functional}); $k^*_{\mathcal V} : \mathcal H_{\mathcal V} \times \mathcal H_{\mathcal V} \to \mathbb{R}$ is a positive-definite kernel on the embedded covariate space\footnote{If raw covariates $\bm{V}_i \in \mathcal{V}$ are not already elements of $\mathcal H_{\mathcal V}$, we embed them via a feature map $\psi_{\mathcal V} : \mathcal V \to \mathcal H_{\mathcal V}$ and write $\bm{v}_i = \psi_{\mathcal V}(\bm V_i)$.}; $I_{\mathcal H_{\mathcal Y}}$ is the identity operator on $\mathcal H_{\mathcal Y}$, ensuring the kernel values lie in $\mathcal L(\mathcal H_{\mathcal Y})$.

Because $k_{\mathcal X} \cdot k^*_{\mathcal V}$ is positive‑definite and $I_{\mathcal H_{\mathcal Y}}$ is self‑adjoint, the kernel $\mathcal K$ is positive‑definite in the sense of operator‑valued kernels \citep{micchelli2005learning}. It therefore induces a vector-valued RKHS of functions $(x, \bm{v}) \mapsto f_{(x, \bm{v})} \in \mathcal H_{\mathcal Y}$, which we use for operator-valued kernel ridge regression. 

%When covariates are themselves functional (e.g.\ $\bm V_i \in L_2(\mathcal U)$), the embedding $\psi_{\mathcal V}$ can be the identity or a registration transform, enabling $k^*_{\mathcal V}$ to reflect richer notions of functional similarity or alignment.

The regression of the conditional expectation \(\mathbb{E}[\bm{Y} \mid X,\bm{V}]\) from \eqref{eq:conditional-expectation-vector} naturally extends to the operator-valued RKHS setting. By the reproducing property of an operator-valued RKHS \(\mathcal{H}_{\mathcal{K}}^{\text{op}}\), the true regression function can be written as: 
\[
    m(x,\bm{v})
    \;=\;
    \left\langle\,
      m,\;
      \Psi\bigl((x,\bm{v})\bigr)
    \right\rangle_{\!\mathcal{H}_{\mathcal{K}}^{\text{op}}},
    \quad
    \text{for } \Psi : \mathcal{X} \times \mathcal{H}_{\mathcal{V}} \to \mathcal{H}_{\mathcal{K}}^{\text{op}} \text{ the operator-valued feature map}.
\]
Following the same ridge regression objective as in Section~\ref{subsec:kernel-estimators-functional}, we estimate the conditional expectation operator 
\begin{equation}
m : \mathcal{X} \times \mathcal{H}_{\mathcal{V}} \rightarrow \mathcal{H}_{\mathcal{Y}}, \qquad (x,\bm{v}) \mapsto \mathbb{E}[\bm{Y} \mid X = x, \bm{V} = \bm{v}],
\end{equation}
by solving the regularized least squares problem:
\begin{equation}
\min_{m \in \mathcal{H}_{\mathcal{K}}^{\text{op}}} \sum_{i=1}^{n} \left\| \bm{Y}_{i} - m(X_i, \bm{v}_i) \right\|_{\mathcal{H}_{\mathcal{Y}}}^2 + \lambda \left\| m \right\|_{\mathcal{H}_{\mathcal{K}}^{\text{op}}}^2,
\end{equation}
where $\bm{v}_i := \psi_{\mathcal{V}}(\bm{V}_i)$ denoting the appropriately interpolated to $\mathcal{H}_{\mathcal{V}}$ covariate and \(\lambda > 0\) is a regularization parameter. By the Representer Theorem for operator-valued kernels, the minimizer admits the form:
\[
m(\cdot, \cdot) = \sum_{j=1}^{n} \mathcal{K} \bigl((\cdot, \cdot), (X_j, \bm{v}_j)\bigr) \, \alpha_j,
\]
for coefficients \(\alpha_j \in \mathcal{H}_{\mathcal{Y}}\). Substituting the separable kernel from \eqref{eq:operator-valued-kernel-example}, the estimator becomes:
\[
m(\cdot, \cdot) = \sum_{j=1}^{n} k_{\mathcal{X}}(\cdot, X_j) \cdot k^*_{\mathcal{V}}(\cdot, \bm{v}_j) \cdot \alpha_j.
\]
Letting \(\bm{\alpha} = [\alpha_1, \dots, \alpha_n]^\top\) and defining the kernel matrix \(\mathcal{K}_{ij} := k_{\mathcal{X}}(X_i, X_j) \cdot k^*_{\mathcal{V}}(\bm{v}_i, \bm{v}_j)\), the optimization reduces to:
\begin{equation}
\min_{\bm{\alpha}} \sum_{i=1}^{n} \left\| \bm{Y}_i - \sum_{j=1}^{n} \mathcal{K}_{ij} \, \alpha_j \right\|_{\mathcal{H}_{\mathcal{Y}}}^2 + \lambda \sum_{i,j=1}^{n} \left\langle \alpha_i, \mathcal{K}_{ij} \, \alpha_j \right\rangle_{\mathcal{H}_{\mathcal{Y}}}.
\end{equation}

We can summarise our estimators for potential outcomes in the case where the outcomes lie in a \emph{separable Hilbert space} $\bigl(\mathcal{F},\langle\!\cdot,\cdot\!\rangle_{\mathcal{F}}\bigr)$ (e.g. a Sobolev space $W^{k,2}([0,1])$) with the induced metric $\phi(f,g)=\|f-g\|_{\mathcal{F}}$, and treatments $X\in\mathbb{R}$, as an operator-valued map
\[
  \hat{\varphi}(x):\mathbb{R}\longrightarrow\mathcal{F}.
\]
This map integrates over covariates $\bm{V}$ to yield an estimate of the (average) dose–response at treatment level $x$:
\begin{align}
\hat{\varphi}(x)
&= \int_{\mathcal{V}} F\bigl(\bm{Y}\mid\bm{V}= \bm{v},\,X=x\bigr)\;dP_{\bm{V}}(\bm{v}) \notag\\
&= \sum_{j=1}^{n}
   \mathcal{K}\!\bigl(
     (\psi_{\mathcal{X}}(x),\psi_{\mathcal{V}}(\bm{V}_j)),
     (\psi_{\mathcal{X}}(X_j),\psi_{\mathcal{V}}(\bm{V}_j))
   \bigr)
   \bigl(\mathcal{K}+\lambda I_n\bigr)^{-1}\bm{Y},
\label{eq:operator-value-estimator}
\end{align}
where $\psi_{\mathcal{X}}(\cdot)$ and $\psi_{\mathcal{V}}(\cdot)$ are the feature maps induced by kernels $k_{\mathcal{X}}$ and $k_{\mathcal{V}}$, respectively. Here $P_{\bm{V}}$ is the distribution over the covariate domain $\mathcal{V}$, and the sum corresponds to the kernel ridge solution in the operator-valued RKHS induced by $\mathcal{K}$. The use of kernel regression allows us to approximate the integral over $P_{\bm{V}}$ via a data-adaptive weighted average, without requiring explicit knowledge of the generalized propensity score or covariate density; this parallels the balancing weight estimator from \eqref{eq:Frechet_mean-2}. The estimator \eqref{eq:operator-value-estimator} can then be substituted into the expressions for dynamic binary treatment effects and dose response (e.g.\ \eqref{eq:kernel_estimators_effects}), generalizing those estimators to the fully functional setting.

\begin{rem}[Fisher-Rao kernel]
When phase variability remains after the global slice $S$, one can inject additional \emph{warp-invariance} directly into the operator-valued kernel by replacing the output factor
$I_{\mathcal H_{\mathcal Y}}$ in \eqref{eq:operator-valued-kernel-example} with a Fisher–Rao similarity:

\[
k_{\mathrm{FR}}(f,g)
  \;=\;
  \exp\!\bigl\{-\zeta\,d_{\mathrm{FR}}(f,g)^{2}\bigr\},
  \quad
  \zeta>0,
\qquad
k_{\mathcal H_{\mathcal Y}}(\bm Y_i,\bm Y_j)
  = k_{\mathrm{FR}}(\bm Y_i,\bm Y_j)\,I_{\mathcal H_{\mathcal Y}}.
\]
$k_{\mathrm{FR}}$ is positive-definite (proof follows immediately from \citet{schoenberg1938metric}, included for completeness in Appendix~\ref{subsec:proof-operator-kernel}), so the resulting operator-valued kernel remains positive‑definite and kernel ridge regression is well posed. However, since $d_{\mathrm{FR}}$ is defined on the quotient $\mathcal F/\Gamma$, the induced RKHS lives on a \emph{non-Hilbert} manifold. Solutions of the ridge problem therefore exist but need not be unique; standard Representer theorem guarantees
require the extra alignment step discussed above. Algorithm~\ref{alg:Iterative-Kernelized-Causal}
implements an iterative scheme (producing an estimator $\hat{\varphi}^{dATE}$ that enjoys existence and empirical stability) that alternates
\emph{(i)} SRSF alignment of covariates and outcomes and
\emph{(ii)} operator‑valued kernel ridge regression with the
$k_{\mathrm{FR}}$ factor.  

\end{rem}

\paragraph{Uniform consistency and rates.}
Let  
$k_{\!\mathcal X}:\mathcal X\times\mathcal X\!\to\!\mathbb R$
and  
$k_{\!\mathcal V}:\mathcal V\times\mathcal V\!\to\!\mathbb R$
be continuous, bounded, positive-definite kernels with associated scalar RKHSs $\mathcal{H}_{\mathcal{X}},\mathcal{H}_{\mathcal{V}}$. We work with the product kernel  
$k\!=\!k_{\!\mathcal X}\cdot\,k_{\!\mathcal V}$  
on
$\mathcal X\times\mathcal V$  
and its RKHS  
$\mathcal{H}:=\mathcal{H}_{\mathcal X}\otimes \mathcal{H}_{\mathcal V}$.
Denote by $\zeta_j(\mathcal{H})$  
the non-increasing eigenvalues of the corresponding
convolution operator.  
\begin{assumption}[Smoothness]
\label{ass:input-regularity-smoothness}
The true regression operator  
\(m_0\)  
satisfies the source condition  
\(m_0\in \mathcal{H}^{\,c}\)  
for some  
\(c\in(1,2]\).
\end{assumption}
\begin{assumption}[Spectral decay]
\label{ass:input-regularity-spectral-decay}
There exists a constant $C > 0$ such that the eigenvalues $\{\zeta_j(\mathcal{H})\}_{j \ge 1}$ of the convolution operator associated with the kernel $k = k_{\!\mathcal{X}} \cdot k_{\!\mathcal{V}}$ satisfy the polynomial decay condition
\[
\zeta_j(\mathcal{H}) \le C\, j^{-b}
\quad\text{for all } j \ge 1,
\]
for some decay rate \(b \ge 1\).
\end{assumption}

\noindent
For the heterogeneous-effect estimators, we additionally impose
smoothness-decay conditions analogous to Assumption~\ref{ass:input-regularity-spectral-decay}
on the conditional mean-embedding operators
$E_1 : \mathcal{H}_{D} \to \mathcal{H}_{\mathcal X}$
and
$E_2 : \mathcal{H}_{\mathcal V} \to \mathcal{H}_{\mathcal X}$,
with respective exponents $(c_1,b_1)$ and $(c_2,b_2)$;
cf.\ \citet[Assumption 6.3]{singh2024kernel}.

On the \emph{output} side, the response trajectories are modelled in the
Sobolev Hilbert space
$\mathcal{H}_{\mathcal Y}=W^{k,2}([0,1]),\; k\ge 1,$
equipped with its canonical inner product
$\langle\!\cdot,\cdot\rangle_{k,2}$.
We choose the identity operator $I_{\mathcal H_{\mathcal Y}}$ as the output
kernel; hence no additional spectral assumption on
$\mathcal H_{\mathcal Y}$ is required.

Let $\lambda>0$ be the ridge regularisation parameter that enters our
operator-valued kernel ridge estimator (cf.\ ~\eqref{eq:operator-value-estimator}).
Under Assumptions~\ref{ass:input-regularity-smoothness}–
\ref{ass:input-regularity-spectral-decay} and their
mean-embedding counterparts, choose
\[
  \lambda
  \;=\;
  n^{-\frac{1}{c+1/b}},
  \qquad
  \lambda_\ell
  \;=\;
  n^{-\frac{1}{c_\ell+1/b_\ell}},
  \;\; \ell\in\{1,2\},
\]
(which coincides with the rule of
\citealp[Thm.\;6.1]{singh2024kernel})
yields the uniform rates
\[
  \bigl\|
    \widehat{\theta}_{\mathrm{ATE}}
    -
    \theta^{0}_{\mathrm{ATE}}
  \bigr\|_{\infty}
  \;=\;
  \mathcal O_{\mathbb P}\!\Bigl(
    n^{-\frac{c-1}{2(c+1/b)}}
  \Bigr),
  \qquad
  \text{and similarly for ATT, CATE, DS.}
\]
The operator-valued nature of the regression only changes constants (in the Hilbert-Schmidt norm) but \emph{not} the exponent; hence our method inherits the minimax‐optimal
$n^{-(c-1)/\{2(c+1/b)\}}$ behaviour established by \citet{singh2024kernel}.  
Uniqueness of the estimator is guaranteed because learning is performed after the alignment map
$S:\mathcal F\!\to\!W^{k,2}$ projects each trajectory to a Hilbert slice, where the Representer theorem and convexity apply; attempting to learn directly on the quotient space $\mathcal F/\Gamma$ would forfeit these
guarantees (as mentioned in Section~\ref{subsec:Fisher--Rao-Metric}).

\begin{algorithm}
\begin{algorithmic}[1] \Require Functional covariates $\{V_{i}(u)\}_{i=1}^{n}$,
outcomes $\{Y_{i}(v)\}_{i=1}^{n}$, treatment $\{X_{i}\}_{i=1}^{n}$, metric $\phi$, kernel functions $k_{\mathcal{X}}$, $k_{\mathcal{V}}$, $k_{\mathcal{Y}}$, corresponding RKHS maps $\psi_{\mathcal{X}}$, $\psi_{\mathcal{V}}$ and $\psi_\mathcal{Y}$
regularization parameter $\lambda>0$, maximum iterations $R_{\text{max}}$,
convergence threshold $\epsilon$. \Ensure Registered curves $\tilde{V}_{i}(u)$,
$\tilde{Y}_{i}(\nu)$; estimated conditional expectation $m(x,v)$;
causal effects $\varphi^{dATE}$ (or $\varphi^{dDS}$ in the last step).
\State Initialize $\tilde{V}_{i}^{(0)}(u)=V_{i}(u)$, $\tilde{Y}_{i}^{(0)}(\nu)=Y_{i}(\nu)$,
and set $r=0$. \While{$r<R_{\text{max}}$ \textbf{or} convergence
criterion not met} %\State \textbf{Compute Mean Curves:}
 \State Compute the mean curve $\mu_{V}$ for functional covariates:
\[
\mu_{V}^{(r)}\gets\arg\min_{\mu}\sum_{i=1}^{n}\phi(\tilde{V}_{i}^{(r)},\mu_{V})
\]
\State Compute the mean curve $\mu_{Y}$ for functional outcomes:
\[
\mu_{Y}^{(r)}\gets\arg\min_{\mu}\sum_{i=1}^{n}\phi(\tilde{Y}_{i}^{(r)},\mu_{Y})
\]

%\State \textbf{Register to Mean Curves:}
\For{$i=1,\dots,n$} \State Align covariates: 
\[
\tilde{V}_{i}^{(r+1)}(u)\gets\tilde{V}_{i}^{(r)}(u)\circ\arg\min_{\gamma\in\Gamma}\phi(\tilde{V}_{i}^{(r)},\mu_{V}^{(r)}\circ\gamma)
\]
\State Align outcomes: 
\[
\tilde{Y}_{i}^{(r+1)}(\nu)\gets\tilde{Y}_{i}^{(r)}(\nu)\circ\arg\min_{\gamma\in\Gamma}\phi(\tilde{Y}_{i}^{(r)},\mu_{Y}^{(r)}\circ\gamma)
\]
\EndFor

%\State \textbf{Kernel Estimation:}
\State Map registered curves into RKHS: $\psi_{\mathcal{X}}(X_{i})\in \mathcal{H}_{\mathcal{X}}$,
$\psi_{\mathcal{Y}}(\tilde{Y}_{i}^{(r+1)})\in \mathcal{H}_{\mathcal{Y}}$, $\psi_{\mathcal{V}}(\tilde{V}_{i}^{(t+1)})\in \mathcal{H}_{\mathcal{V}}$.
\State Compute the kernel matrix $\mathcal{K}^{(r+1)}$ with its $(i,j)$-th
entry as: 
\[
\mathcal{K}_{ij}^{(t+1)}\gets k_{\mathcal{X}}(\psi_{\mathcal{X}}(X_{i}),\psi_{\mathcal{X}}(X_{j}))\cdot k_{\mathcal{V}}(\psi_{\mathcal{V}}(\tilde{V}_{i}^{(r+1)}),\psi_{\mathcal{V}}(\tilde{V}_{j}^{(r+1)}))
\]
\State Solve kernel ridge regression: 
\[
\alpha^{(r+1)}\gets(\mathcal{K}^{(r+1)}+\lambda I)^{-1}\tilde{Y}^{(r+1)}
\]
%\textcolor{red}{Justin: What is $Y^{(r+1)}$ here? Not sure if I spot where it is defined.}
%\State \textbf{Update Conditional Estimates:}
\State For a new input $(x,v)$, compute: 
\[
\hat{m}^{(r+1)}(x,v)\gets\sum_{i=1}^{n}k_{\mathcal{X}}(\psi_{\mathcal{X}}(x),\psi_{\mathcal{X}}(X_{i}))\cdot k_{\mathcal{V}}(\psi_{\mathcal{V}}(v),\psi_{\mathcal{V}}(\tilde{V}_{i}^{(r+1)}))\cdot\alpha_{i}^{(r+1)}
\]

%\State Check convergence:
\If{$\|\hat{m}^{(r+1)}-\hat{m}^{(r)}\|_{2}<\epsilon$, where the
norm is evaluated on the training set of size $n$} \State \textbf{Break.}
\EndIf %\State Increment iteration counter: $t\gets t+1$. 
\EndWhile

%\State \textbf{Causal Effect Computation:}
\State Compute dynamic average treatment effect (dATE): 
\[
\varphi^{dATE}\gets\frac{1}{m}\sum_{j=1}^{m}\big(\hat{m}^{(r+1)}(1,V_{j})-\hat{m}^{(r+1)}(0,V_{j})\big)
\]

%\State Compute dynamic dose-response (dDS):
%\[
%\phi_{dDS}(x) \gets \frac{1}{m} \sum_{j=1}^m \hat{m}^{(t+1)}(x, V_j)
%\]

\end{algorithmic}

\caption{\label{alg:Iterative-Kernelized-Causal}Iterative SRVF based Kernel Causal Effect Estimation}
\end{algorithm}

\section{Experimental Results}
\label{sec:experiments}

To assess the impact of the proposed methods, we conducted synthetic experiments simulating functional data scenarios where traditional causal inference approaches struggle. These experiments illustrate how the estimators introduced in Sections \ref{sec:dynamic-outcomes-binary-treatment} and \ref{sec:dynamic_treatment_func_outcome}, including functional treatment effect estimators, alignment techniques, and operator-valued kernels, handle challenges posed by functional outcomes and covariates. We analyze two key scenarios: (1) binary treatment with functional outcomes and (2) continuous treatment with functional co-variates and outcomes which are temporally misaligned. 

\subsection{Binary Treatment with Synthetic Data}\label{subsec:Binary-treatment-synthetic}

The first experiment simulates functional outcomes $Y(t)$ generated as time-dependent curves influenced by a binary treatment $X \in \{0,1\}$ and baseline covariates $V$. We consider two scenarios:
\begin{itemize}
    \item Functional outcomes modeled as:
    \begin{equation}
    Y(t) = \mu_{0}(t) + \beta_{X}(t)X + \epsilon(t),
    \end{equation}
    where $\mu_{0}(t)$ represents the baseline curve, $\beta_{X}(t)$ is the time-varying treatment effect, and $\epsilon(t)$ is independent Gaussian noise.
    \item Monotonic functional outcomes defined as the cumulative sum of an underlying process $Z(t)$:
    \begin{equation}
    Z(t) = \mu_0(t) + \beta_X(t) X + \epsilon(t),
    \end{equation}
    leading to the observed outcome function:
    \begin{equation}
    Y(t) = \sum_{\tau=1}^{t} Z(\tau).
    \end{equation}
\end{itemize}

In both cases, the function $\beta_{X}(t) = \sum_{i=1}^{3} a \cdot \exp\left(-\frac{(t - c_i)^2}{2w^2}\right)$ is parameterized by $a$, $c_i$, and $w$, where $c_i$ are the centers of three equally spaced peaks, controlling the treatment effect’s amplitude, location, and spread. Both cases also simulate the challenge of temporal misalignment, where the outcome curves $Y(t)$ exhibit random shifts in their peak locations across individuals samples. Covariates $V$ are simulated to correlate with both $X$ and $Y(t)$, introducing realistic confounding structures. Theoretical results from Theorem \ref{thm:fr_dte_asymp_normality} guarantee the consistency of SRVF-based estimators in the monotonic outcome case. However, we empirically evaluate the performance of different estimators in both scenarios. We compare the following estimators: Inverse Probability Weighting (IPW) ATE estimator \cite{imai2004causal}; Doubly Robust ATE estimator; Kernel ATE estimator \cite{singh2020kernel}; proposed Operator Kernel ATE estimator \eqref{eq:kernel_estimators_effects}; SRVF-based Operator Kernel ATE estimator, which registers path outcomes to their Fréchet mean.

\begin{table}[h]
    \centering
    \begin{tabular}{lccc} 
        \toprule
        Method & $n=50$ & $n=100$ & $n=250$ \\
        \midrule
        IPW ATE & 22.70 (14.93) & 23.18 (15.31) & 22.64 (15.09) \\
        Doubly Robust ATE & 21.74 (14.84) & 21.98 (15.10) & 21.65 (14.88) \\
        Kernel ATE & 22.63 (15.25) & 23.14 (15.47) & 22.61 (15.10) \\
        Operator Kernel ATE & 15.18 (10.50) & 15.42 (10.70) & 13.32 (9.79) \\
        SRVF Operator Kernel ATE & 15.15 (10.32) & 15.60 (10.49) & 12.65 (9.60) \\
        \bottomrule
    \end{tabular}
    \caption{Mean absolute error and standard deviation (in brackets) of causal effect estimators for binary treatment and \textit{monotonic} path outcomes $\bm{Y}$.}
    \label{tab:methods_comparison_monotonic}
\end{table}

\begin{table}[h]
    \centering
    \begin{tabular}{lccc} 
        \toprule
        Method & $n=50$ & $n=100$ & $n=250$ \\
        \midrule
        IPW ATE & 0.88 (0.77) & 0.89 (0.77) & 0.89 (0.79) \\
        Doubly Robust ATE & 0.67 (0.67) & 0.68 (0.67) & 0.66 (0.70) \\
        Kernel ATE & 0.87 (0.78) & 0.89 (0.78) & 0.89 (0.79) \\
        Operator Kernel ATE & 0.65 (0.62) & 0.62 (0.60) & 0.62 (0.60) \\
        SRVF Operator Kernel ATE & 0.65 (0.61) & 0.64 (0.58) & 0.62 (0.57) \\
        \bottomrule
    \end{tabular}
    \caption{Mean absolute error and standard deviation (in brackets) of causal effect estimators for binary treatment and \textit{nonmonotonic} path outcomes $\bm{Y}$.}
    \label{tab:methods_comparison_nonmonotonic}
\end{table}

For each scenario, we simulate five datasets from the described super-population $\left\{\bm{Y}, X, \bm{V}\right\}$ and evaluate the estimators for sample sizes $n \in \{50, 100, 250\}$. We use squared exponential kernel for the covariate kernel (and outcome kernel in the operator setting), and binary kernel for the treatments as proposed originally in \cite{singh2020kernel}. The kernel parameters for the kernel estimators are set using the \textit{median inter-point heuristic} and the regularization terms are set using hyperparameter grid search optimizing the out-of-sample performance holding $20\%$ of the data for testing.  Figures \ref{fig:binary_box1} and \ref{fig:binary_box2} present box plots of ATE estimation errors across simulations, while Tables \ref{tab:methods_comparison_monotonic} and \ref{tab:methods_comparison_nonmonotonic} summarize mean absolute errors and standard deviations across the time grid. Dynamic estimation errors over time of the pointwise estimators $\Delta(t)$ are shown in Figures \ref{fig:binary_line1} and \ref{fig:binary_line2}, respectively. To estimate standard deviation across time, we first average across simulation datasets for each sample size. Our results indicate that the proposed estimators incorporating outcome structure achieve lower mean absolute errors and reduced standard deviations across the time grid. As expected, the doubly robust ATE achieves lower estimation error for smaller $n$, even without properly accounting for the outcome structure. Functional alignment further reduces error variance, particularly in more complex outcome scenarios, where it allows to flatten the estimation error across the time-grid.

\begin{figure}[h]
\centering
\includegraphics[width=16cm]{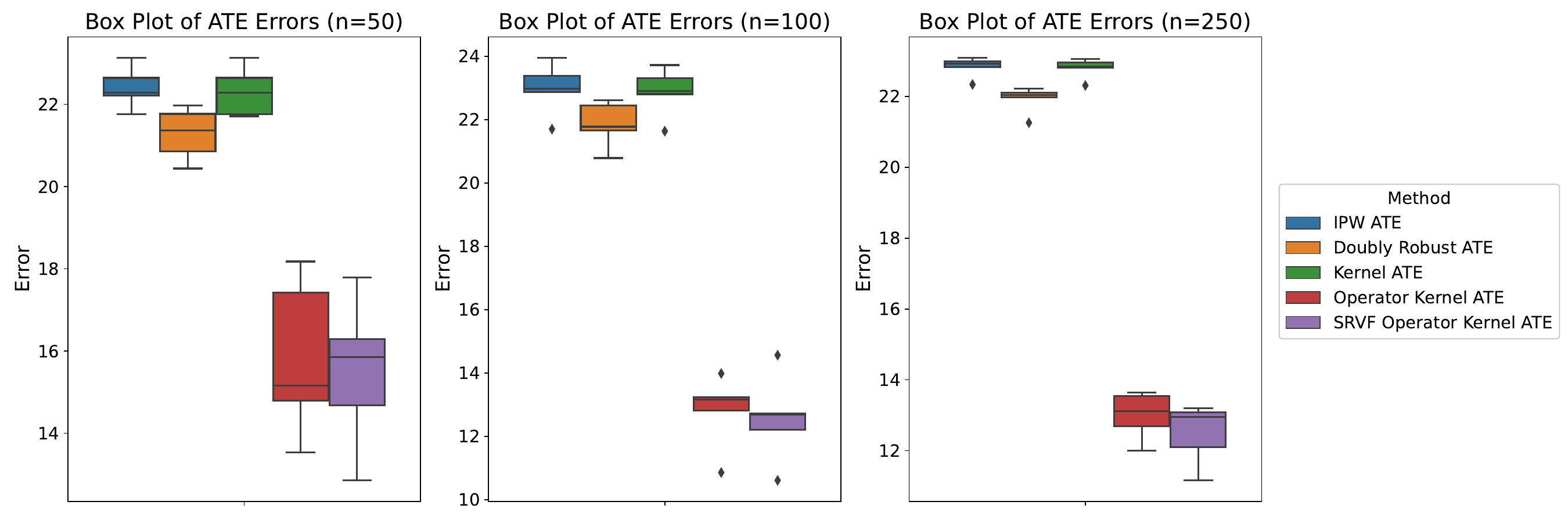}
\caption{Box plot of ATE estimation errors for \textit{monotonic} path-valued outcomes, across different training sample sizes ($n=50, 100, 250$). The IPW ATE, Doubly Robust ATE, and Kernel ATE estimators ignore the multivariate structure, whereas the SRVF-based approach accounts for amplitude and phase variability.}
\label{fig:binary_box1}
\end{figure}

\begin{figure}[h]
\centering
\includegraphics[width=16cm]{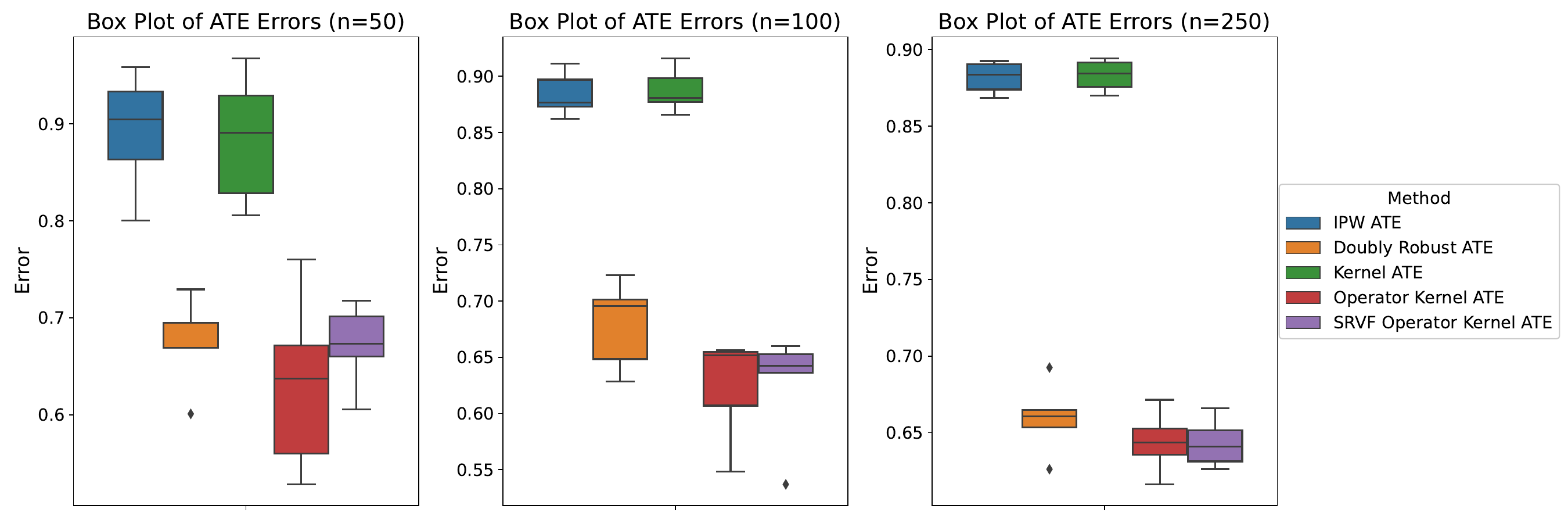}
\caption{Box plot of ATE estimation errors for \textit{nonmonotonic} path-valued outcomes, across different training sample sizes ($n=50, 100, 250$). The IPW ATE, Doubly Robust ATE, and Kernel ATE estimators ignore the multivariate structure, whereas the SRVF-based approach accounts for amplitude and phase variability.}
\label{fig:binary_box2}
\end{figure}

\begin{figure}[h]
\centering
\includegraphics[width=16.5cm]{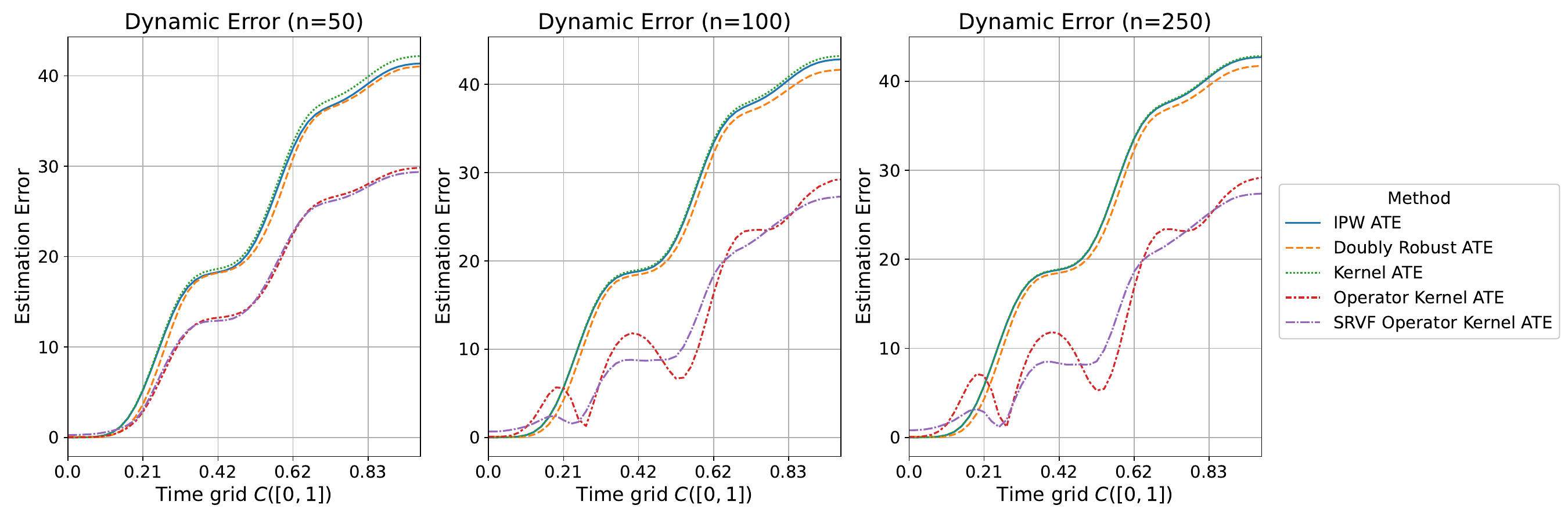}
\caption{Dynamic error as a function of time for different causal inference estimators for the ATE, assuming \textit{monotonic outcomes}. The plots display the average error (i.e., estimated across 5 draws from the super-population) for varying sample size $n$.}
\label{fig:binary_line1}
\end{figure}

\begin{figure}[h]
\centering
\includegraphics[width=16.5cm]{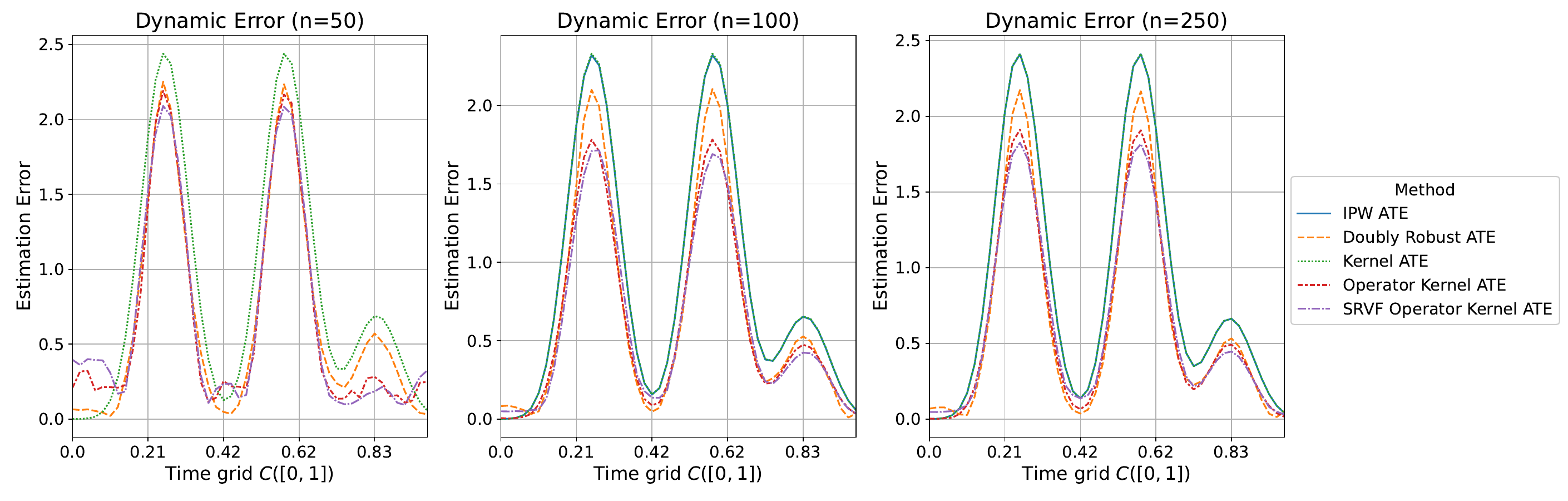}
\caption{Dynamic error as a function of time for different causal inference estimators for the ATE, assuming \textit{nonmonotonic outcomes}. The plots display the average error (i.e., estimated across 5 draws from the super-population) for varying sample size $n$.}
\label{fig:binary_line2}
\end{figure}
\subsection{Continuous Treatment with Synthetic Data  \label{subsection:Temporal-Misalignment}}

The next experiment simulates functional outcomes $Y(t)$ generated as time-dependent curves influenced by
a \textit{continuous} treatment $X\in\mathbb{R}$ and time-dependent baseline covariates $V(t)$. Kernel causal estimators \citep{singh2020kernel} and our proposed extensions from Section \ref{sec:dynamic_treatment_func_outcome} can readily deal with the continuous treatment $X\in\mathbb{R}$ by a simple change of kernel $k_{\mathcal{X}}$ describing the feature maps $\psi_{\mathcal{X}}$. We simulate functional outcomes as:
\begin{equation}
Y(t) = \mu_{V}(t) + \beta_{X}(t)X + \epsilon(t),
\end{equation}
where $\mu_{V}(t)$ represents a curve effect (arc parameterized with expected peak location and hight) dependent on the covariates, $\beta_{X}(t)$ is the time-varying treatment effect, and $\epsilon(t)$ is independent Gaussian noise. The function $\beta_{X}(t) = \sum_{i=1}^{3} a \cdot \exp\left(-\frac{(t - c_i)^2}{2w^2}\right)$ is again parameterized by $a$, $c_i$, and $w$, where $c_i$ are the centers of three equally spaced peaks controlling the treatment effect's amplitude, location, and spread. However, in this setup the outcomes are further modulated by time-dependent covariates $V(t)$ which introduce realistic time-dependent confounding. Independent random shifts are introduced in the peak locations for both $Y(t)$ and $V(t)$ across individual samples from the super-population $\{\bm{Y}, X, \bm{V}\}$. We compare the different kernel causal effect estimators (Kernel DS estimator as in \cite{singh2020kernel}, proposed Operator Kernel DS estimator \eqref{eq:kernel_estimators_effects-DS}, and the Iterative SRVF-based Kernel DS estimator) for their mean absolute error in estimating the dose-response effect in the setup of dynamic outcomes and covariates. Mimicking Section \ref{subsec:Binary-treatment-synthetic}, we simulate five datasets from the described super-population $\{\bm{Y}, X, \bm{V}\}$ for the scenarios of sample size $n=50$, $n=100$ and $n=250$. The squared exponential kernel parameters\footnote{Replacing the covariates kernel \( k_{\mathcal{V}} \) with kernels explicitly designed for sequential data (e.g., the signature kernel \cite{lee2023signature}) would be a sensible approach if we wish to capture more complex temporal features of \( \bm{V} \), but we defer this to future work due to the challenges in characterizing the approximation error, convergence properties, and statistical efficiency of using truncated signature features in high-dimensional settings.} used for covariates and treatment are set using the median inter-point heuristic with regularization terms set using hyperparameter grid search optimizing the out-of-sample
performance holding 20$\%$ of the data for testing. Figure \ref{fig:continuous_box1} present a box plot of the DS estimation error across simulations, while Table \ref{tab:methods_comparison_continuous_treatment} summarizes the mean absolute error and standard deviation across the time grid. Dynamic estimation errors over time of the pointwise estimators $\Delta(t)$ of the DS are shown in Figures \ref{fig:continuous_line1}. The standard deviation across time is computed over the average estimation error across the simulation datasets for selected sample size. 

Our results indicate that isolating and estimating the treatment effects in this more complex scenarios does require more data samples to converge to a robust estimator. The proposed iterative algorithm and operator-valued kernel approach both reduce the standard deviation of the estimation error across the time grid with (i.e.,  flattens the estimation error) and  
%\hl{Use one legend, bottom and shared legend.}
%\textcolor{red}{Justin: Is this too big a claim?  Just looking at the table, the standard deviations are pretty similar.} 
as we increase the number of samples from $\{\bm{Y}, X, \bm{V}\}$ which can be explained with the increase complexity of the effect function (i.e.,  continuous treatment and time-dependent confounding). Incorporating the outcome structure eventually reduces the error, but requires more samples to achieve a robust estimate.  
\begin{table}[h]
    \centering
    \begin{tabular}{lccc} 
        \toprule
        Method & $n=50$ & $n=100$ & $n=250$ \\
        \midrule
        Kernel DS & 0.32 (0.16) & 0.32 (0.16) & 0.37 (0.16) \\
        Operator Kernel DS & 0.33 (0.15) & 0.25 (0.15) & 0.27 (0.15) \\
        Iterative SRVF Operator Kernel DS & 0.30 (0.15) & 0.33 (0.16) & 0.26 (0.15) \\
        \bottomrule
    \end{tabular}
    \caption{Mean absolute error and standard deviation (in brackets) of causal effect estimators for continuous treatment and time-depedent outcomes $\bm{Y}$ and covariates $\bm{V}$.}
    \label{tab:methods_comparison_continuous_treatment}
\end{table}

\begin{figure}
\centering

\includegraphics[width=17cm]{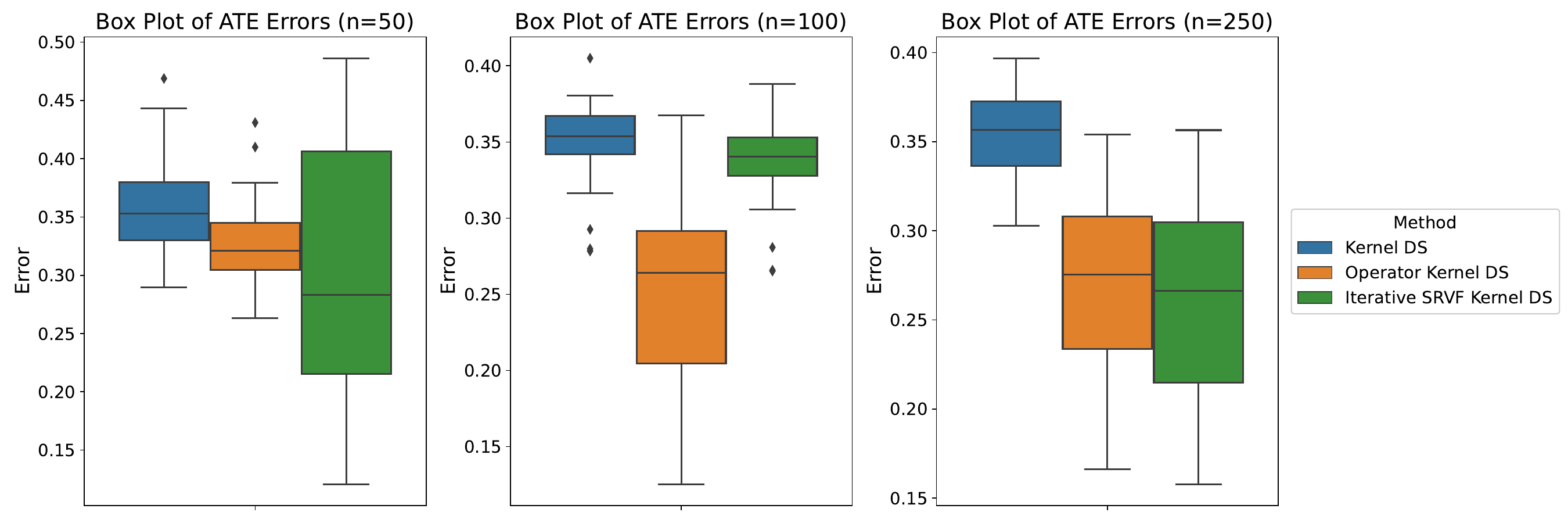}

\caption{Box plot of DS estimation errors for different methods under continuous treatment and time-dependent confounding, computed across different training sample sizes ($n=50, 100, 250$). The Kernel DS estimator ignore the multivariate structure, the Operator Kernel DS accounts for the multivariate outcome structure and the Iterative SRVF-based approach also infers the temporal alignment of $\bm{V}$ and $\bm{Y}$.}
\label{fig:continuous_box1}
\end{figure}

\begin{figure}[h]
\centering
\includegraphics[width=17cm]{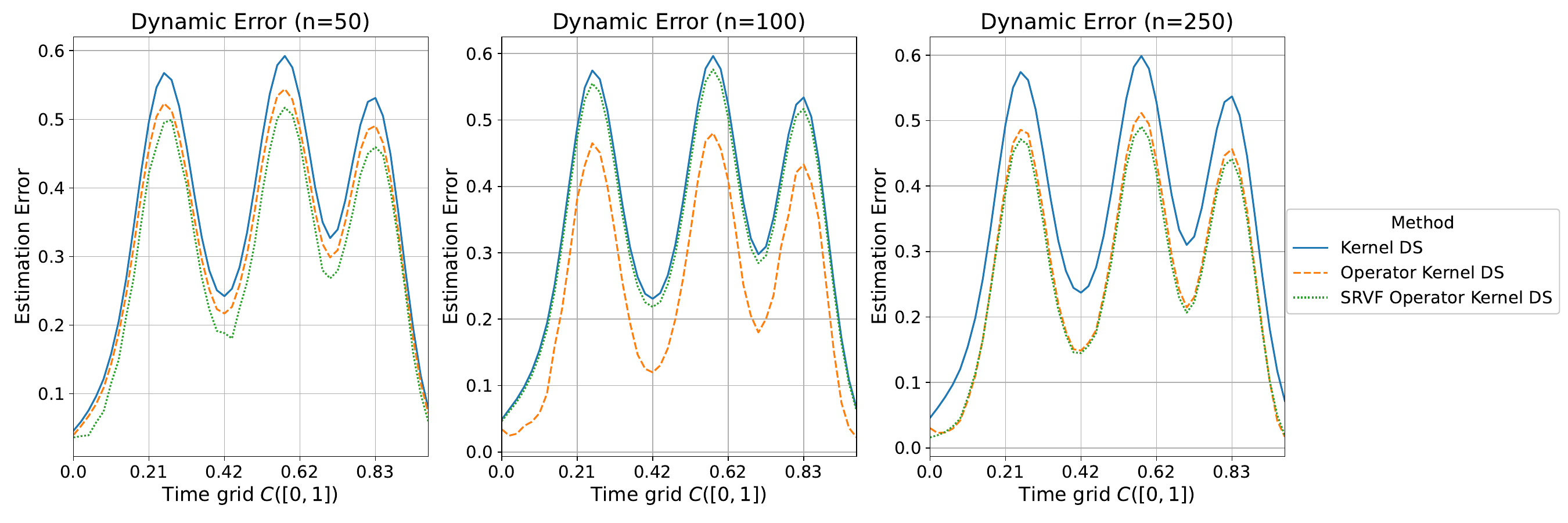}
\caption{Dynamic error as a function of time for different causal inference estimators for the DS,  under continuous treatment and time-dependent confounding. The plots display the average error (i.e., estimated across 5 draws from the super-population) for varying sample size $n=50, 100, 250$.}
\label{fig:continuous_line1}
\end{figure}

\subsection{Causal Effects on Digital Outcomes\label{subsec:Causal-effects-on}}

Digital biomarkers are increasingly recognized as reliable and sensitive clinical endpoints for assessing disease progression and treatment responses in Parkinson’s disease (PD). The ability to collect near real-time health measurements at scale makes these biomarkers particularly well-suited for functional data analysis. In this section, we apply our proposed framework to estimate the causal effects of (i) resting tremors and (ii) bradykinetic gait on their respective digital outcomes: tremor probability and gait energy. Our analysis is based on data from the Parkinson@Home Validation Study \citep{evers2020real}, which involved two weeks of passive monitoring using a wrist-worn device (Moto 360) in participants with and without PD.  

The study included 25 PD patients with motor fluctuations and 25 age-matched non-PD controls. Participants were monitored during both standardized clinical assessments (including MDS-UPDRS Part III) and unscripted daily life activities at home. The PD group underwent two monitoring sessions: once after overnight withdrawal of dopaminergic medication and again one hour post-medication. Additionally, all participants were continuously monitored in free-living conditions for two weeks. For our analysis, we estimate causal effects based on predicted symptom trajectories over this 2-week period. 

For PD patients with tremor, we selected data from the most affected arm (determined by MDS-UPDRS Part III, items 3.15 \& 3.17). For those without tremor, we selected the corresponding side matched for hand dominance. Due to technical issues with sensor devices in one PD patient and one non-PD control, the final dataset for training symptom detection models included 24 PD patients and 24 controls (further details about the tremor data are available in \citet{evers2025passive}). Performance of the trained symptom prediction classifiers is available in Appendix \ref{sec:digital-outcomes}. Out-of-sample symptom profiles were estimated for:  

\begin{itemize}
    \item 16 PD participants (8 with annotated tremor, 8 without) and 8 age-matched controls to assess the effect of disease status on tremor probability.
    \item 13 PD participants and 8 age-matched controls to estimate the effect of disease status on gait energy.
\end{itemize}

For further details on cohort demographics, please refer to Table~\ref{tab:Demographics}. Based on video recordings, trained research assistants annotated the main activities (e.g., walking, sitting, standing still) and symptoms (e.g., tremor and freezing of gait) occurring during unscripted activities. A movement disorders expert reviewed the symptom annotations. We used tremor presence annotations from the arm with the most severe tremor (i.e.,  the same side as the accelerometer sensor) and annotations for gait activity. In 8 PD patients, tremor was observed during unscripted daily life activities, whereas in the remaining 16 PD patients, no tremor was observed.

As a first step, we down-sampled the three-axial accelerometer data from the Physilog devices from 200Hz to 50Hz after \textit{anti-aliasing} with a fourth-order moving average filter. To remove the effect of orientation changes of the device, we applied $l_{1}$-trend filtering to each individual axis, assuming piecewise linear changes (\cite{raykov2020probabilistic}, i.e.,  setting $\lambda$ to 10,000). We then segmented the accelerometer data into non-overlapping 5-second windows and extracted features from each axis of the pre-processed accelerometer data (resulting in a total of 84 features for all axes combined, see Appendix \ref{sec:digital-outcomes}). Following common symptom detection practices, we trained a logistic classifier on home-based video annotations from 48 participants to predict tremor and gait episodes. Classification performance was evaluated against expert annotations and is reported in Table \ref{tab:detection_metrics}. The two classifiers were then used to estimate digital markers reflective of tremor and gait: tremor probability and bout energy during walking, respectively. 

These digital markers serve as proxies for symptom severity, allowing us to model their causal relationships with disease status, as outlined in Figure \ref{fig:DAGs-tremor-gait}. In this figure, we list the causal assumptions regarding known factors expected to affect disease category and the corresponding digital marker.\footnote{The example presents a simplified version of the problem and does not reflect the full list of factors affecting each digital outcome (e.g., geographic location, disease severity, and other contextual factors).}

\paragraph{Effect of PD on Gait Energy}

First, we quantify the effect of PD on expected gait bout energy during walking periods using 2-week follow-up data from 22 participants (14 PD, 8 non-PD age-matched controls). Gait bouts are predicted out-of-sample via the trained logistic classifier. The 2-week wearable outcomes are averaged to obtain a single daily profile per participant, where outcome curves are smoothed over 15-minute intervals. Daily measurements range between 5 and 12 hours, and the final outcome curves are constructed by aggregating across available days per time bin. 

We model the diagnostic category (binary PD vs. non-PD control) as the intervention variable $D$ while controlling for the confounding effects of \textit{hours awake} and \textit{gender}. Figure \ref{fig:gait-causal-function} presents both the average treatment effect estimated using the IPW estimator and the proposed dATE estimator from Section \ref{subsection:Temporal-Misalignment}. While we observe a strong overlap in potential outcome distributions (z-scored), many PD participants exhibit gait bouts with energy levels indistinguishable from non-PD controls. However, the PD group also displays a higher frequency of low-energy bouts. A Welch t-test on the IPW-estimated distributions fails to reject the null hypothesis that the two gait bout distributions originate from the same population. However, time-specific conditioning, as seen in Figure \ref{fig:Conditional-on-time-effects} (Appendix \ref{sec:digital-outcomes}), reveals significant fluctuations, which are better captured by the dATE estimator (Figure \ref{fig:gait-causal-function}) and confirmed by the corresponding Welch t-test. We estimate significant variations in PD gait bout energy, which can be at least partially attributed to dopaminergic therapy (average of four daily levodopa doses among the 14 included PD participants).

\paragraph{Effect of Diagnostic Category on Tremor}

Next, we quantify the effect of PD tremor diagnosis on expected tremor probabilities using 2-week follow-up data from 24 participants (8 PD with annotated tremor, 8 PD without annotated tremor, and 8 non-PD age-matched controls). Tremor probabilities are predicted out-of-sample via the trained logistic classifier. We follow the same setup as above to obtain single daily tremor profiles per participant.  Based on the assumptions in Figure \ref{fig:DAGs-tremor-gait}, we condition on the presence of non-gait activity to improve the precision of tremor score estimation. We first define the intervention variable $D$ as a binary tremor vs. non-tremor classification, where non-tremor includes both non-tremor PD and non-PD controls. Figure \ref{fig:tremor-causal-function} presents both the average treatment effect estimated using the IPW estimator and the proposed dATE estimator from Section \ref{subsection:Temporal-Misalignment}. Unlike gait energy, we observe much less overlap between the potential outcome distributions for tremor and non-tremor groups, a finding confirmed by the Welch t-test. The dynamic ATE is also significant but exhibits less structured temporal variation, likely due to tremor symptoms being less responsive to dopaminergic medication in most participants. This is further supported by the time-conditioned ATE estimates in Figure \ref{fig:Conditional-on-time-effects} (Appendix \ref{sec:digital-outcomes}). If we redefine $D$ by grouping both PD cohorts together and comparing them to non-PD controls, Figure \ref{fig:tremor-causal-function-dynamic} shows a natural reduction in ATE differences. This aligns with the presence of low but detectable tremor probabilities in PD participants without annotated tremor. Simultaneously, the dynamic ATE effect becomes more pronounced, reflecting the periodic levodopa-induced fluctuations expected in PD tremor cases.

\begin{figure}[h!]
    \centering

    % First subplot
    \begin{minipage}{0.48\textwidth}
    \centering
    \begin{tikzpicture}[
        node distance=1cm and 1.5cm,
        every node/.style={draw, circle, minimum size=0.6cm, align=center, inner sep=1pt},
        every edge/.style={->, thick}
    ]
    % Nodes
    \node (D1) [label={[label distance=-0mm]above:{\shortstack{Disease \\ Category}}}] {$D$};
    \node (Y1) [right=of D1, label={[label distance=0mm]above:{\shortstack{Bout \\ Energy}}}] {$Y$};
    \node (X11) [below left=0.8cm and 0.8cm of D1, label=below:{Gender}] {$X_1$};
    \node (X21) [below right=0.8cm and 0.8cm of Y1, label=below:{\shortstack{Hours \\ Awake}}] {$X_2$};

    % Directed Edges
    \draw[->] (D1) -- (Y1);
    \draw[->] (X11) -- (D1);
    \draw[->] (X11) -- (Y1);
    \draw[->] (X21) -- (D1);
    \draw[->] (X21) -- (Y1);

    \end{tikzpicture}
    \end{minipage}
    \hfill
    % Second subplot
    \begin{minipage}{0.48\textwidth}
    \centering
    \begin{tikzpicture}[
        node distance=1cm and 1.5cm,
        every node/.style={draw, circle, minimum size=0.6cm, align=center, inner sep=1pt},
        every edge/.style={->, thick}
    ]
    % Nodes
    \node (D2) [label=above:{\shortstack{Disease \\ Category}}] {$D$};
    \node (Y2) [right=of D2, label=above:{\shortstack{Tremor \\ Probability}}] {$Y$};
    \node (X12) [below left=0.8cm and 0.8cm of D2, label=below:{Gender}] {$X_1$};
    \node (X22) [below right=0.8cm and 0.8cm of Y2, label=below:{\shortstack{Hours \\ Awake}}] {$X_2$};
    \node (X32) [below=0.8cm of Y2, label=below:{\shortstack{Gait \\ Activity}}] {$X_3$};

    % Directed Edges
    \draw[->] (D2) -- (Y2);
    \draw[->] (X12) -- (D2);
    \draw[->] (X12) -- (Y2);
    \draw[->] (X22) -- (D2);
    \draw[->] (X22) -- (Y2);
    \draw[->] (X32) -- (Y2);

    \end{tikzpicture}
    \end{minipage}
\caption{\label{fig:DAGs-tremor-gait}Causal Directed Acyclic Graphs (DAG) displaying the assumed association and its direction between estimated digital outcomes and clinical annotations. (Left) displays the minimal assumptions made for the gait bout energy, a known marker of bradykinetic gait in PD; (Right) displays the minimal assumptions made for factors affecting tremor probability estimated from a wrist-worn device.}
\end{figure}
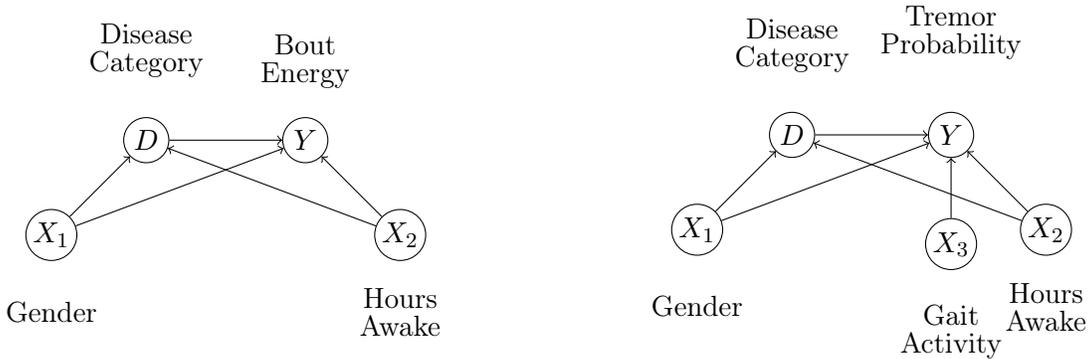

\begin{figure}[h!]
\centering

\includegraphics[height=4.4cm]{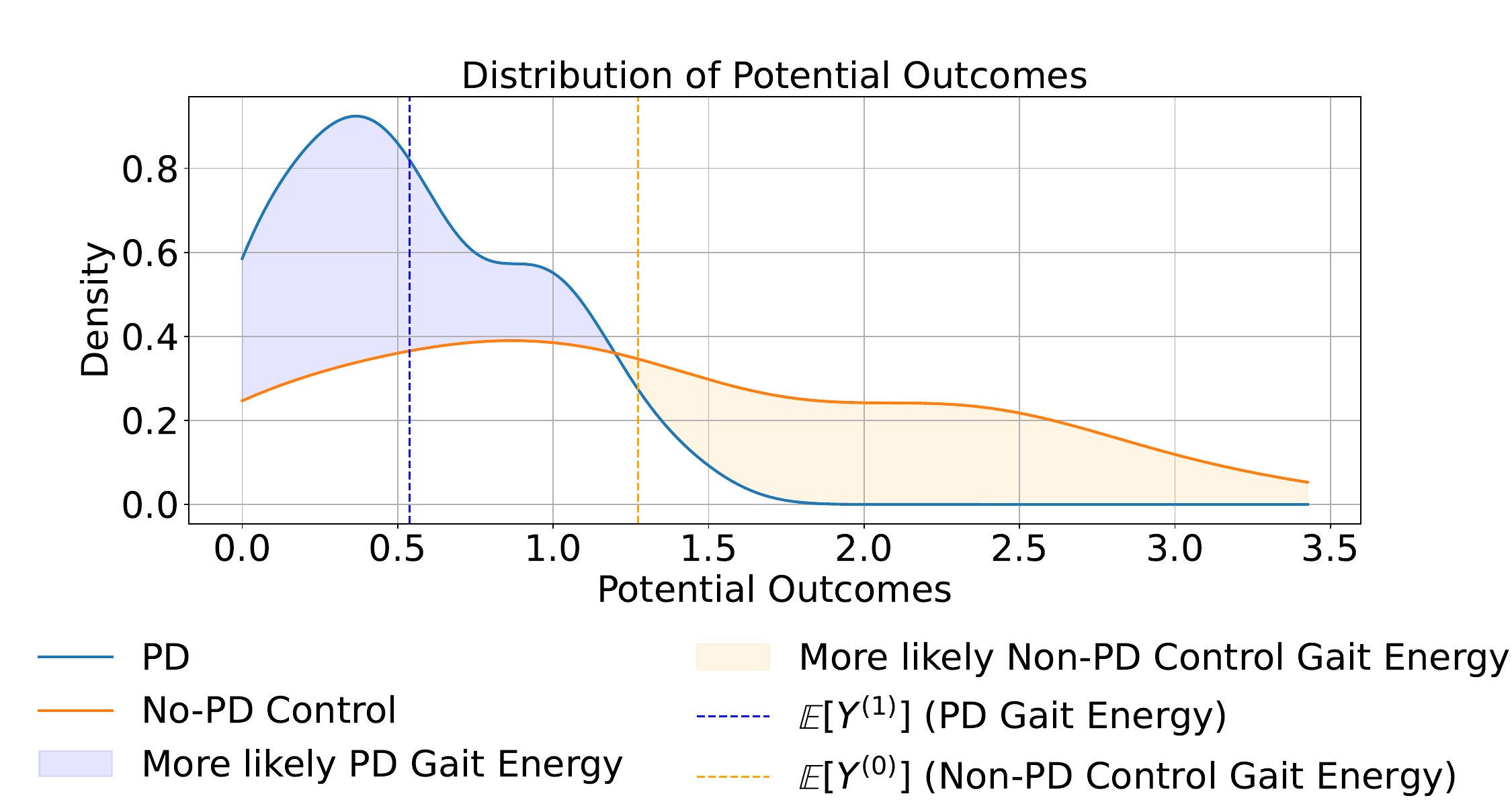}
\includegraphics[height=4.4cm]{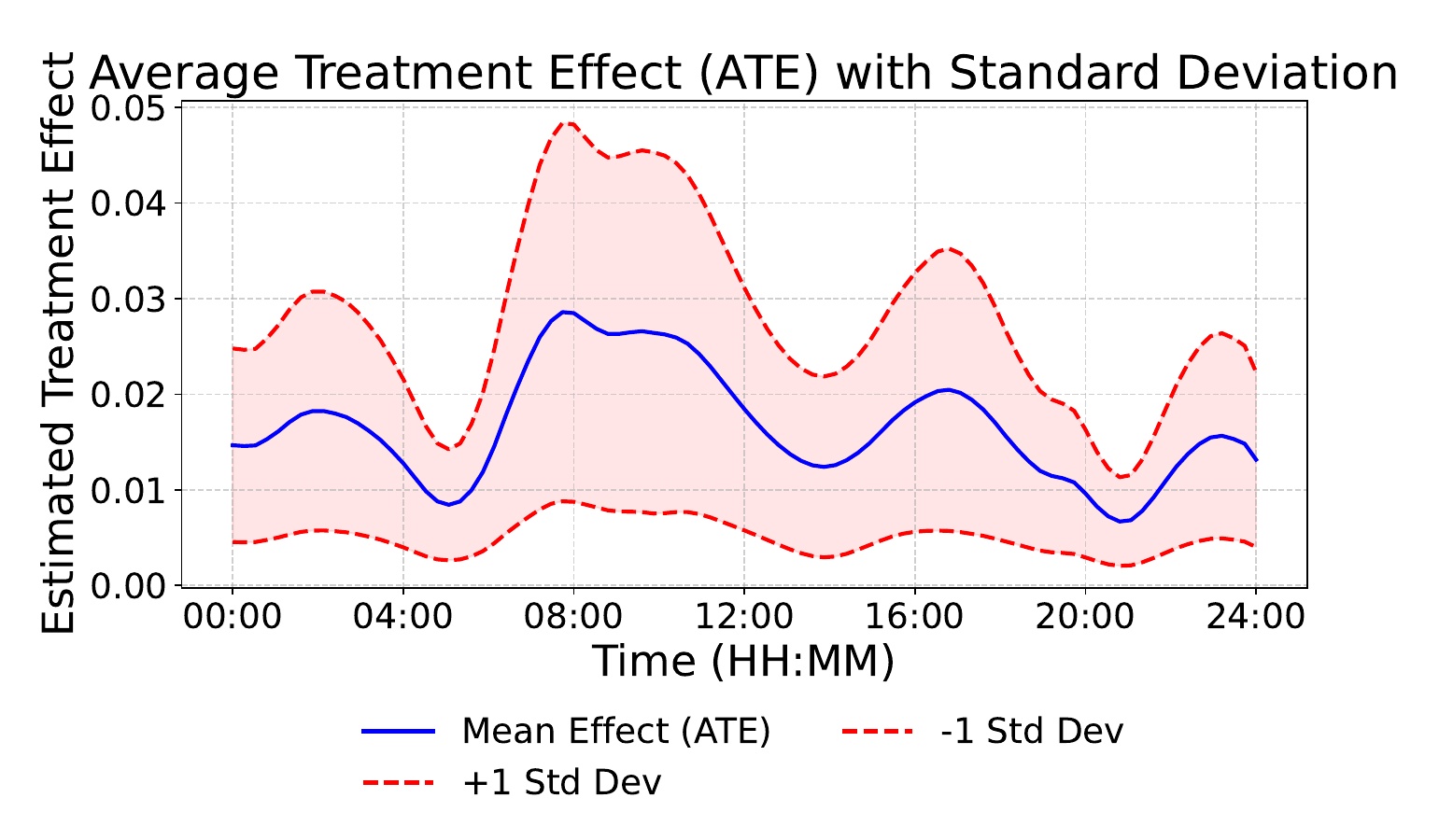}
\caption{\label{fig:gait-causal-function} Effect of disease category on gait
energy, estimated from average weekly digital outcomes. (Left) shows
a density plot of the potential outcomes of daily outcome values for
the two groups after controlling for the confounding effect of gender
and hours awake using \textit{IPW} estimator; (Right) shows the estimated
dynamic average treatment effect using the proposed kernel estimator
in Section \ref{subsection:Temporal-Misalignment}. Welch t-test is
performed to compare the difference in between the potential outcomes
in both settings: \textbf{p-value=0.12} is obtained for the IPW estimator
when comparing daily outcomes; \textbf{p-value<0.01} is obtained for
the averaged daily potential outcomes estimated from the kernel estimator.}
\end{figure}

\begin{figure}
\centering

\includegraphics[height=4.4cm]{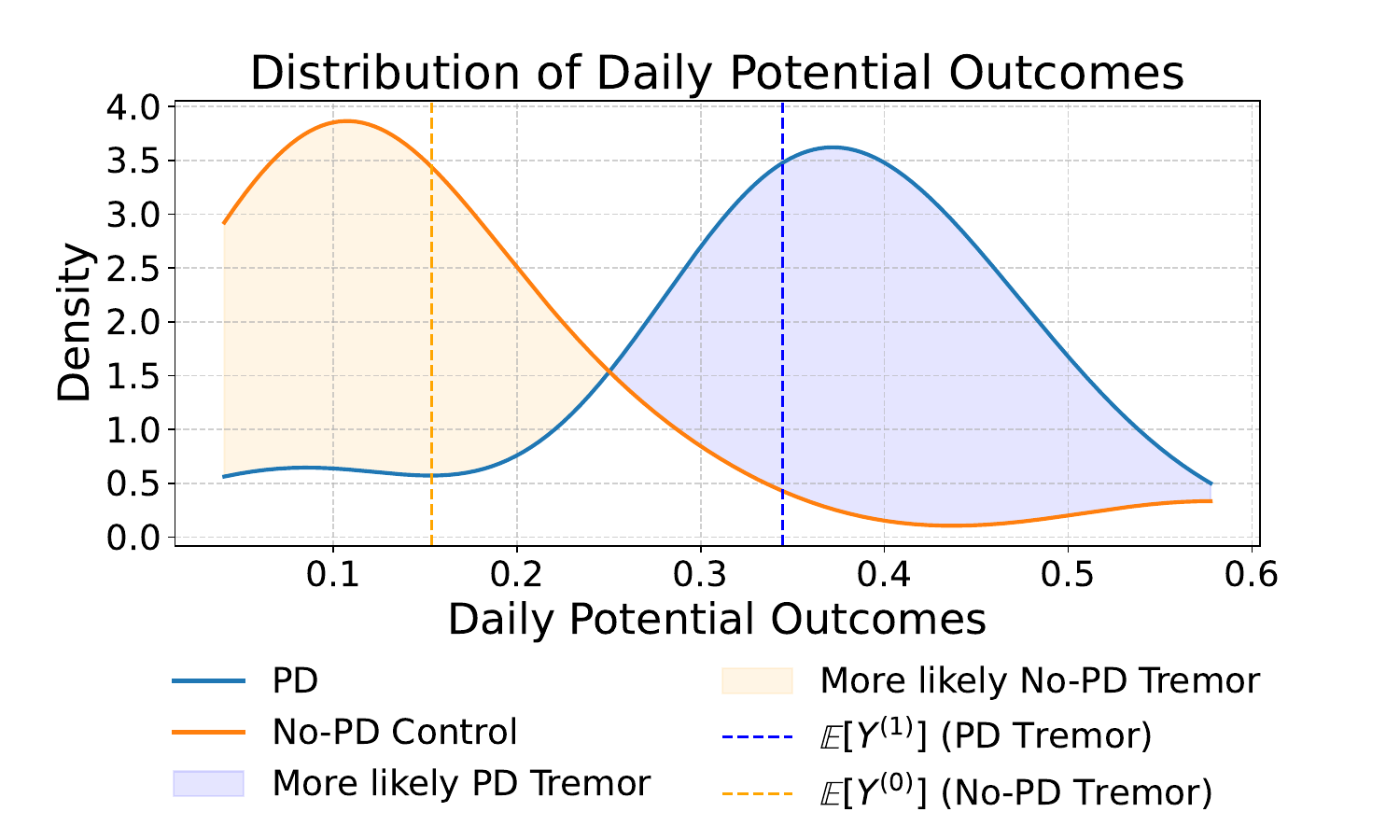}
\includegraphics[height=4.2cm]{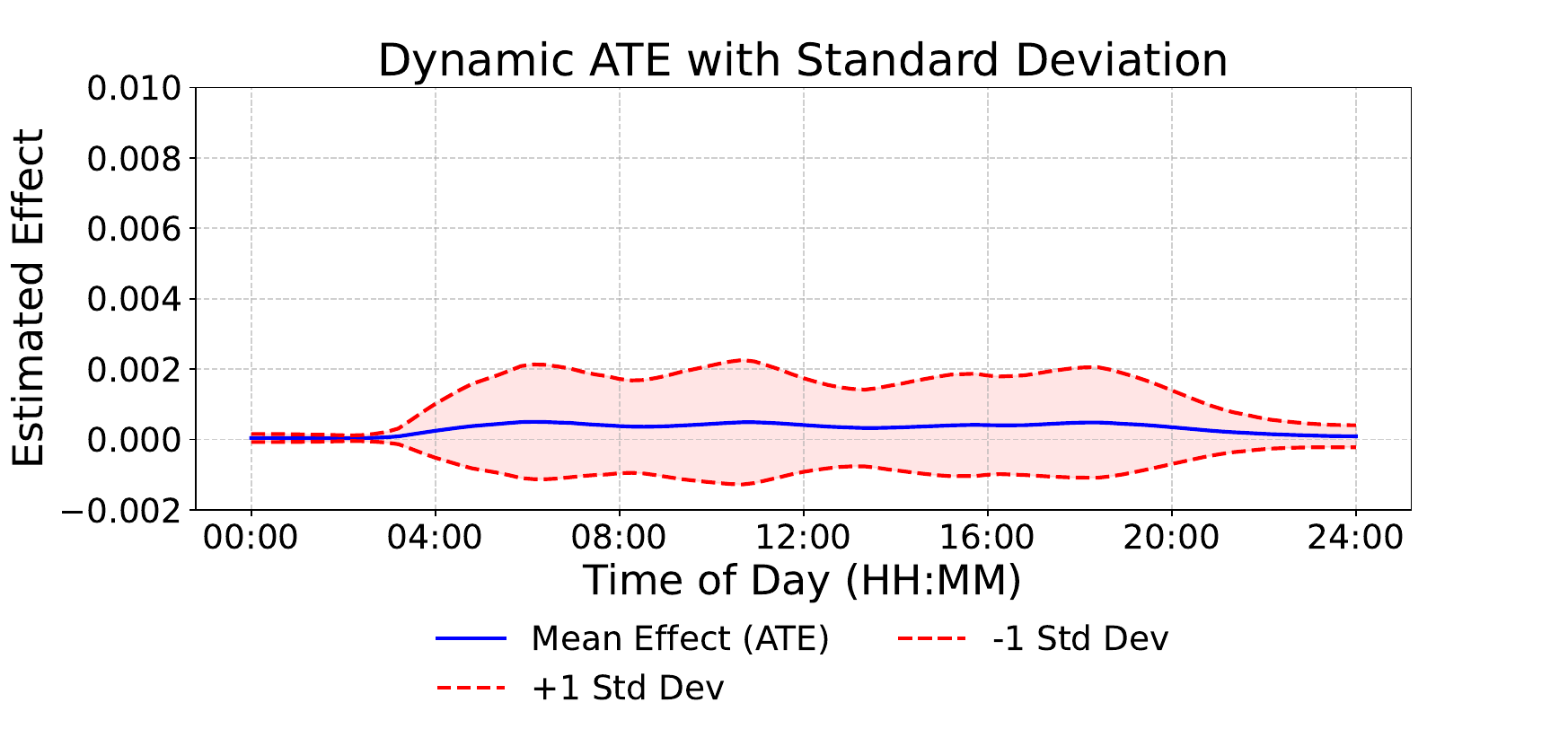}

\caption{\label{fig:tremor-causal-function} Effect of tremor annotation status
on tremor probabilities at-home, estimated from average weekly digital
outcomes. (Left) shows a density plot of the potential outcomes of
daily outcome values for the two groups after controlling for the
confounding effect of gender and hours awake using \textit{IPW} estimator;
(Right) shows the estimated dynamic average treatment effect using
the proposed kernel estimator in Section \ref{subsection:Temporal-Misalignment}.
Welch t-test is performed to compare the difference in between the
potential outcomes in both settings: \textbf{p-value<0.01} is obtained
for the IPW estimator when comparing daily outcomes; \textbf{p-value<0.01}
is obtained for the averaged daily potential outcomes.}
\end{figure}

\begin{figure}[h!]
\centering

\includegraphics[height=4.9cm]{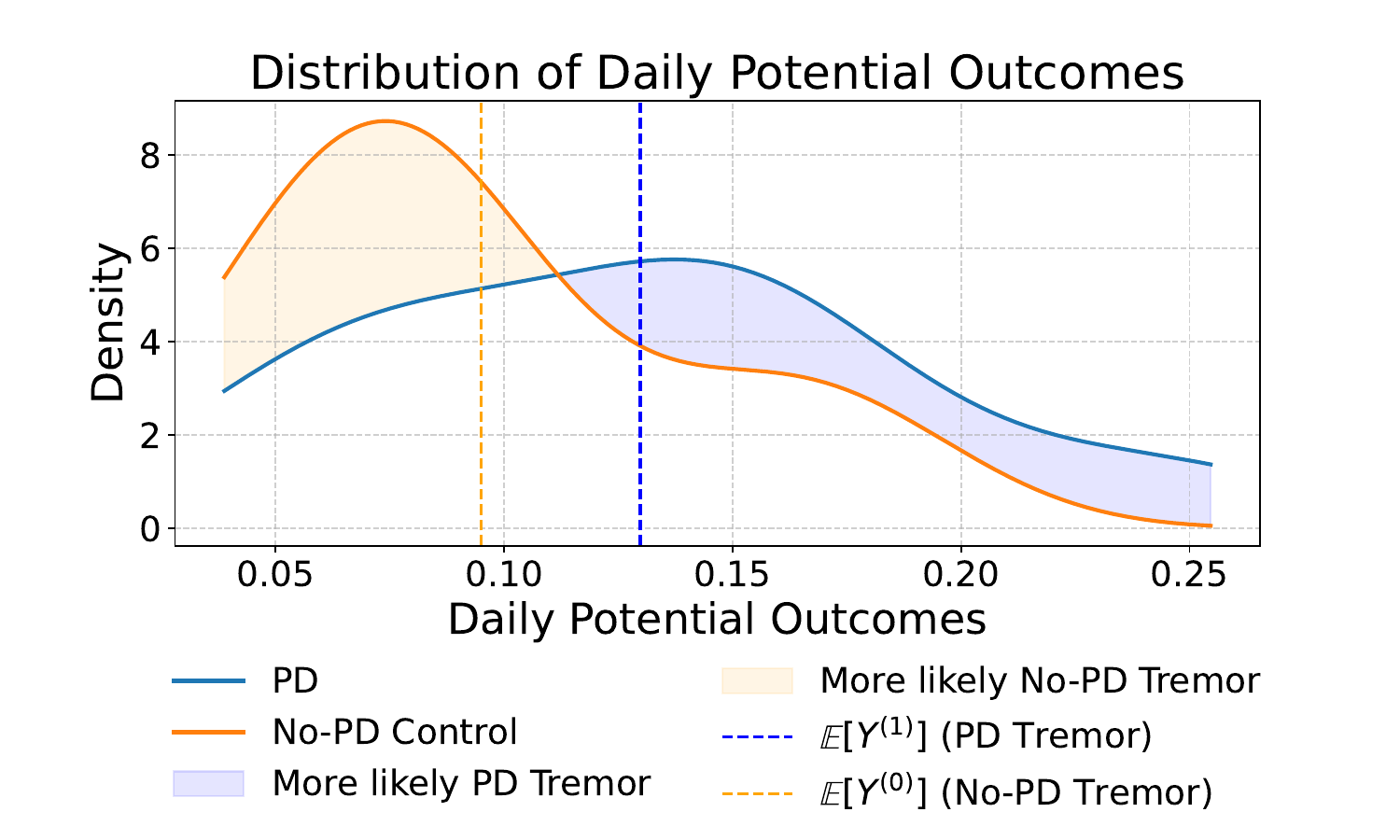}
\includegraphics[height=4.9cm]{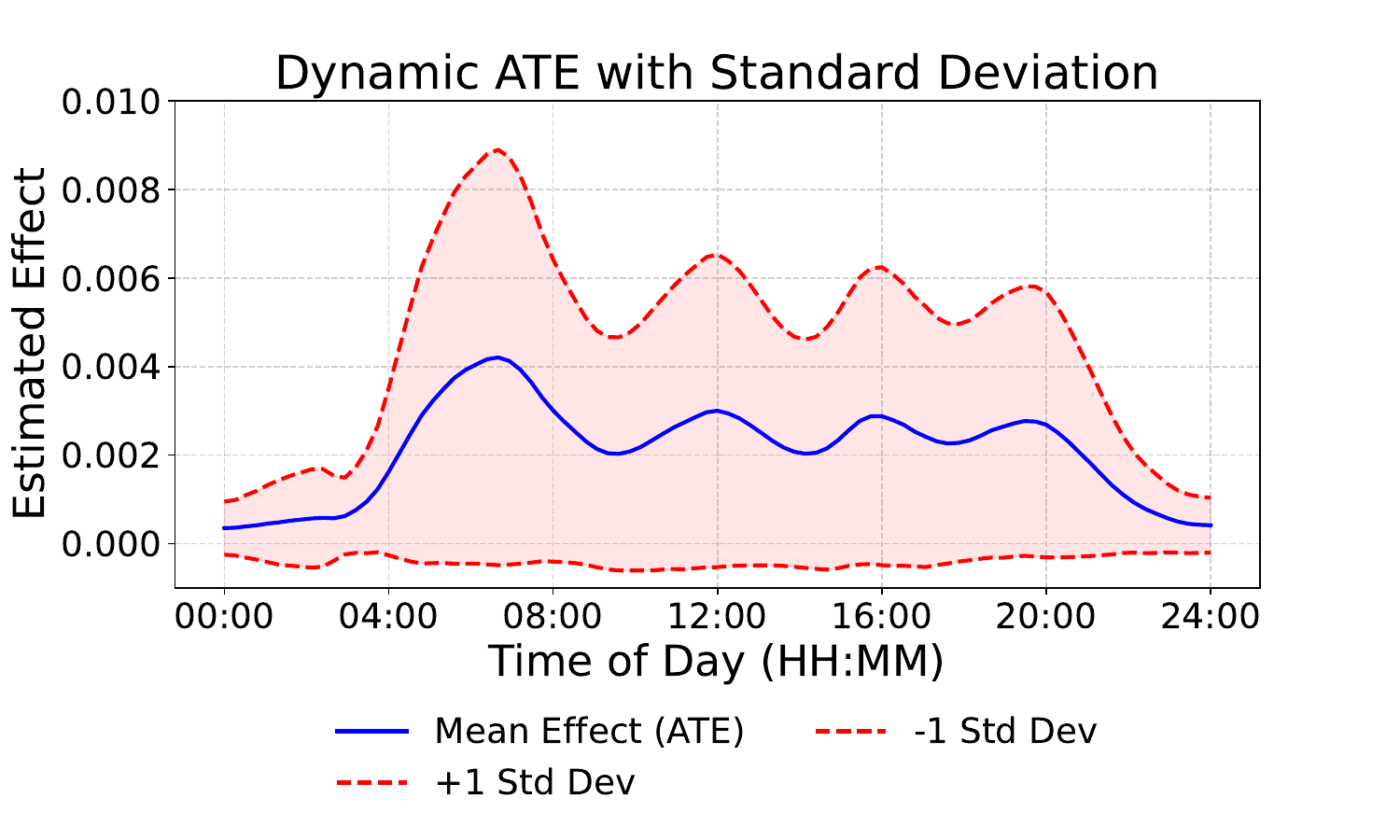}

\caption{\label{fig:tremor-causal-function-dynamic} Effect of PD status on
tremor probabilities at-home, estimated from average weekly digital
outcomes. (Left) shows a density plot of the potential outcomes of
daily outcome values for the two groups after controlling for the
confounding effect of gender and hours awake using \textit{IPW} estimator;
(Right) shows the estimated dynamic average treatment effect using
the proposed kernel estimator in Section \ref{subsection:Temporal-Misalignment}.}
\end{figure}

\section{Discussion}\label{sec:Discussion-Future-Directions}

This work introduces a novel framework for causal inference in settings involving functional data, extending traditional methods to accommodate dynamic and non-linear domains. 

The primary contribution lies in studying multivariate causal effects and developing appropriate kernel methods that integrate structural assumptions about the outcome space to derive closed-form estimators of the causal effects. Specifically, the framework addresses both binary
and continuous treatments with functional outcomes, demonstrating how these treatments can be modeled using Fréchet means in metric
spaces such as $L^{2}$ and Fisher--Rao (Section \ref{subsec:Euclidean-Metric-on}
and Section \ref{subsec:Fisher--Rao-Metric}). By leveraging operator-valued kernels, the proposed methods enable the modeling of function-valued covariates and outcomes, effectively capturing complex temporal dynamics
(Section \ref{subsec:kernel-estimators-functional}).

In comparison to existing approaches, this work generalizes prior efforts such as \citet{ecker2024causal}, which employ linear parametric
models to estimate potential outcomes for functional data under specific assumptions, and \citet{belloni2017program}, which focus on approximate
local treatment effect estimators for non-Euclidean outcomes without providing finite sample inference. Unlike these works, the present framework adopts a nonparametric paradigm for estimating causal effects
in infinite-dimensional functional settings, introducing joint aligning-kernelization procedures to handle path-valued random variables. Furthermore, it surpasses the methods in \citet{singh2020kernel}, which rely on scalar
outputs, by extending kernel-based causal estimators to functional outcomes. This contribution provides a more comprehensive approach to estimating causal effects on non-linear and temporally structured data.

Theorem \ref{thm:l2_dte_asymp_normality} establishes the asymptotic normality of the dynamic average treatment effect (dATE) under the $L^{2}$ metric, providing theoretical guarantees for estimator reliability, while Theorem \ref{thm:fr_dte_asymp_normality} formalizes the consistency of Fréchet mean estimators under the Fisher--Rao metric in functional spaces. These advancements represent significant extensions to the
methods proposed by \citet{lin2023causal}, by enabling causal effect estimation for complex functional outcomes and demonstrating their efficacy in high-dimensional settings. Theorem \ref{thm:l2_dte_asymp_normality} complements recent work in \citet{testa2025doubly} with focus towards scalar treatment effect estimators.

The proposed framework is validated through extensive experiments, including synthetic setups and real-world applications to digital monitoring of PD. The experiments illustrate the ability of the framework to analyze
high-dimensional functional data with temporal misalignment, evaluate more complex continuous treatment effects, and ensure appropriate estimation of the empirical estimators. In the presented PD application, we demonstrate how the dynamic ATE, $\varphi^{dATE}$, reveals different ways in which therapy affects symptoms, highlighting its previously underexplored potential 
(Section \ref{subsec:Causal-effects-on}). In summary, this work bridges a gap in the literature by introducing a nonparametric methodology for causal inference in functional data settings, combining theoretical rigor with practical algorithms that enable alignment. It also highlights open problems related to consistency guarantees (see Theorem \ref{thm:fr_dte_asymp_normality}) for causal effect estimators under more complex assumptions about the outcome and covariate spaces.

%\textcolor{red}{Justin: We don't actually have any future work listed here, so might either remove that from the section name, or add a couple lines about future work.}

\section*{Acknowledgements}
The authors wish to thank all participants in the PD@Home validation study for their enthusiasm to contribute to this study and for welcoming us into their homes. We also thank Dr. Luc J.W. Evers for his input in developing clinically relevant hypotheses in Section~\ref{subsec:Causal-effects-on}. Yordan P. Raykov was funded by the Michael J. Fox Foundation for Parkinson’s Research (Grants 10231 and 17369) and supported in part by the EPSRC Horizon Digital Economy Research grant Trusted Data Driven Products under Grant EP/T022493/1. Wasiur R. KhudaBukhsh was supported by the Engineering and Physical Sciences Research Council (EPSRC) under Grant EP/Y027795/1.
 \newpage{}
 \bibliographystyle{plainnat}
\bibliography{main}
\newpage{}
\appendix
%dummy comment inserted by tex2lyx to ensure that this paragraph is not empty%dummy comment inserted by tex2lyx to ensure that this paragraph is not empty%dummy comment inserted by tex2lyx to ensure that this paragraph is not empty%dummy comment inserted by tex2lyx to ensure that this paragraph is not empty

\section{Causal graphical models}

\label{subsec-causal-graphical-models} Below, we outline some fundamental
concepts frequently discussed in the causal graphical modeling literature
\citet{pearl2009causal}.

\subsection{Graphs }

A graph $G$ consists of a set of directed edges $E$ and indices
$[m]=\{1,\ldots,m\}$. The edges of $G$ can be represented by a parent
function $\text{pa}:[m]\to2^{[m-1]}$, such that $l\in\text{pa}(j)\iff l\to j$
in $G$. If $j$ has no incoming edges, then $\text{pa}(j)=\emptyset$.
In this case, the graph $G$ can be written as $G=(E,\text{pa})$.
%\wasiur{you mean $G=([m],\text{pa})$?} \wasiur{Can two vertices $i$ and $j$ be parents of each other? }

\subsection{Causal DAGs }

A graph $G=(A,\text{pa})$ is a causal directed acyclic graph (DAG)
over random variables $A:=(A_{i})_{i=1}^{n}\sim P$, if the factorization
$P=\prod_{i}P_{i|\text{pa}(i)}$ holds, and the conditionals $(P_{j|\text{pa}(j)})_{j=1}^{n}$
represent actual (e.g., physical) data-generating mechanisms for observations. 
\begin{defn}[Back-Door Criterion, \citet{pearl1995causal}]
Consider a causal DAG $G$ over $(A_{i})_{i=1}^{n}$, and let $(X,V,Y)\subseteq(A_{i})_{i=1}^{n}$.
The set $V$ satisfies the back-door criterion with respect to $(X,Y)$
if: 
\begin{itemize}
\item No node in $V$ is a descendant of $X$. 
\item $V$ blocks every path between $X$ and $Y$ that contains an edge
pointing into $X$. 
\end{itemize}
\end{defn}

\begin{defn}[Front-Door Criterion, \citet{pearl1995causal}]
Consider a causal DAG $G$ over $(A_{i})_{i=1}^{n}$, and let $(X,V,Y)\subseteq(A_{i})_{i=1}^{n}$.
The set $V$ satisfies the front-door criterion with respect to $(X,Y)$
if: (1) $V$ intercepts all directed paths from $X$ to $Y$; (2) there
is no back-door path between $X$ and $V$; (3) every back-door path
between $V$ and $Y$ is blocked by $X$. 
\end{defn}

\section{Fisher--Rao via square-root slope functions\label{sec:square-root-slope-fucntions}}

\subsection{Square root slope functions\label{subsec:square-root-slope-fucntions}}
For function $f_{i}:[0,1]\rightarrow\mathbb{R}$, we define the \emph{square-root
slope function (SRSF)} as follows: 
\begin{equation}
q_{i}(t)=\text{sgn}(\dot{f}_{i}(t))\sqrt{\vert \dot{f}_{i}(t)\vert}\,,\label{eq:srsf}
\end{equation}
where \(\dot{\;}\) denotes the weak derivative.
Note that without loss of generality, $f_{i}$ can be defined on a
common closed interval $[t_{1},t_{m}]$. We also note that given SRSF
$q_{i}$, $f_{i}$ can be recovered up to its starting point $f_{i}(0)$
via $f_{i}(t)=f_{i}(0)+\int_{0}^{t}q_{i}(s)\vert q_{i}(s)\vert\ ds$.
Given two functions $f_{1},f_{2}$ with SRSFs $q_{1},q_{2}$,the elastic
pairwise registration problem means optimizing the following loss
function over the group of warping functions $\Gamma=\{\gamma:[0,1]\rightarrow[0,1]\ \vert\ \gamma(0)=0,\gamma(1)=1,\dot{\gamma}>0\}$:
\begin{equation}
\gamma^{*}=\underset{\gamma\in\Gamma}{\text{argmin}}\ \vert\vert q_{1}-(q_{2}\circ\gamma)\sqrt{\dot{\gamma}}\vert\vert_{L_2}^{2}\,,\label{eq:elastic_pairwise}
\end{equation}
where $\vert\vert\cdot\vert\vert_{L_2}$ denotes the $L^{2}$ norm, and
$\circ$ denotes function composition. This is traditionally solved
using dynamic programming or gradient-descent algorithms \citep{ramsay1998curve}.
Once estimated, let $q_{2}^{*}=(q_{2}\circ\gamma^{*})\sqrt{\dot{\gamma}^{*}}$
be the optimally-warped SRSF of $f_{2}$, and $f_{2}^{*}=f_{2}\circ\gamma^{*}$
be the corresponding optimally-warped $f_{2}$ which is best aligned
to $f_{1}$. Estimation of $\gamma^{*}$ can be sensitive to function
noise, as taking derivatives to compute the SRSF will exacerbate the
noise. As a remedy, one can add an additional penalty term to the
loss function in \eqref{eq:elastic_pairwise}, to penalize properties
of the estimated warping function (e.g., roughness).

For a collection of curves $\{f_{i}\}_{i=1}^{n}$, joint elastic registration
requires one to have a template function to which all functions are
jointly matched to. In the context of this work, those are selected
to be estimated from the data and we adopt the \emph{Karcher mean
function} with respect to the elastic metric\footnote{The Karcher mean is a specific case of the Fréchet 2-mean applied
in the context of Riemannian manifolds}. Karcher means are suitable for any metric space, but are particularly
advantageous when closed-form sample means are not easily specified
for the collection of data objects. We define the Karcher mean SRSF
as: 
\begin{equation}
\tilde{\mu}_{i}=\underset{q\in\mathbb{L}^{2}([0,1])}{\text{argmin}}\ \sum_{i=1}^{n}\vert\vert q-(q_{i}\circ\gamma_{i}^{*})\sqrt{\dot{\gamma}_{i}^{*}}\vert\vert_{L_2}^{2}\,\label{eq:karch_mean_fn}
\end{equation}
where $\gamma_{i}^{*}$ solves the pairwise registration optimization
in \eqref{eq:elastic_pairwise}. A two-step iterative algorithm is
required here, as one iterates between averaging functions to form
an estimate of the mean SRSF, and then performing separate pairwise
alignments to align functions to this current mean SRSF estimate.

\paragraph{Discrete approximation.}

In data we observe only the grid values
$Y_{i}(u_1),\dots,Y_{i}(u_T)$.
Write $\Delta u_t=u_{t+1}-u_t$.
After a smoothing step (e.g. monotone splines) we approximate
$\dot Y_i(u_t)\approx\Delta Y_{i,t}/\Delta u_t$ with
$\Delta Y_{i,t}=Y_i(u_{t+1})-Y_i(u_t)$
and define the discrete SRSF vector
\[
  q_{i,t}
  \;=\;
  \operatorname{sgn}\!\bigl(\Delta Y_{i,t}\bigr)\,
  \sqrt{\bigl|\Delta Y_{i,t}\bigr|/\Delta u_t},
  \qquad t=1,\dots,T-1 .
\]
The inner product
\(
  \langle q_f,q_g\rangle_{L_2}
\)
is consistently approximated by the Riemann sum
\(
  \sum_{t=1}^{T-1} q_{f,t}\,q_{g,t}\,\Delta u_t ,
\)
turning \eqref{eq:elastic-FR} into a computable formula on $\mathbb R^{T-1}$.

\subsection{Incompleteness of the phase-quotient space}
\label{app:incompleteness}

\begin{prop}
Let \(\Gamma\) act on \(\mathcal F\) by right composition
\(f\mapsto f\circ\gamma\).  
Endow \(\mathcal F/\Gamma\) with the
Fisher-–Rao metric
\(d_{FR}([f_1],[f_2])
   =\inf_{\gamma\in\Gamma}\|q_{f_1}-q_{f_2\circ\gamma}\|_{L_2}\).
Then \((\mathcal F/\Gamma,d_{FR})\) is \emph{not complete}.
\end{prop}

\begin{proof}[Proof Sketch]
Fix \(f^\star(t)=t\) and define the warps
\(\gamma_n(t)=t^{\,n}\) (\(n\ge1\)).
Set \(f_n=f^\star\circ\gamma_n=t^{\,n}\).
One checks  
\(d_{FR}\bigl([f_n],[f_{n+1}]\bigr)=\|q_{f_n}-q_{f_{n+1}}\|_{L_2}
  \le C n^{-3/2}\),
hence \(\bigl([f_n]\bigr)\) is Cauchy. The pointwise limit of \(f_n\) is
\(f_\infty(t)=\mathbf 1\{t=1\}\), whose derivative vanishes a.e.,
so \(f_\infty\notin\mathcal F\).
Therefore the sequence has no limit in \(\mathcal F/\Gamma\).
A more delicate construction shows that even allowing post-composition
by \(\Gamma\) cannot “repair’’ the limit, proving incompleteness.
\end{proof}

\noindent
Because completeness is a necessary condition for the existence of
Fréchet means and for the strong law of large numbers on metric spaces
\citep{Sturm2003NPC}, additional constraints or a regularisation of \(\Gamma\) are required if one wishes to obtain global uniqueness and consistency results.

\subsection{Phase-quotient space with Hilbert Structure}
\label{app:hilbert-vs-group-actions}

In this appendix, we investigate under which conditions the quotient space \(\mathcal{F}/\Gamma\), arising from group actions on space in $\mathcal{F}$ with some Hilbert space structure (e.g. Sobolev space $W^{k,2}([0,1],\mathbb{R})$), inherits a Hilbert space structure. This is important for establishing convexity, uniqueness, and consistency guarantees for Fréchet mean estimators under warping invariance.
 
The quotient of a Hilbert space by a closed linear subspace is itself a Hilbert space. However, domain warpings of the form \(f \mapsto f \circ \gamma\), where \(\gamma \in \Gamma\), are nonlinear transformations on \(L_2([0,1])\). As soon as \(\gamma\) depends nontrivially on \(t\), the orbit \(\{\,q_f \circ \gamma : \gamma \in \Gamma\,\}\) becomes a curved manifold in \(L_2\), and the quotient cannot inherit a linear or Hilbert space structure. Consider the restricted action group of phase shifts:
\[
\Gamma_c \;=\; \left\{\, \gamma_c(t) = t + c \mod 1 : c \in [0,1) \,\right\}.
\]
Here, the group acts by isometric translations on the circle. The associated quotient space \(\mathcal{F}/\Gamma_c\) now satisfies some improved properties:
\begin{itemize}
    \item The quotient \(\mathcal{F}/\Gamma_c\) is \emph{complete} — the counterexample from Appendix~\ref{app:incompleteness} no longer applies since \(\gamma_c' \equiv 1\).
    \item The Fisher--Rao metric reduces to the \emph{orbit distance}:
  \[
  d([q],[p]) \;=\; \inf_{c \in [0,1)} \| q - p(\cdot + c) \|_{L_2},
  \]
  which is well-known in signal processing and time series alignment.
\end{itemize}

Despite these improvements, the orbit distance \(d\) is not induced by an inner product and it violates the parallelogram law: therefore \(\mathcal{F}/\Gamma_c\) is still not a Hilbert space. Uniqueness of the Fréchet mean may fail if multiple shifts yield the same minimum.

However, the space is now a complete, proper length space with non-positive curvature (a CAT(0) “quotient cylinder”), allowing for geodesic convexity and weaker versions of consistency and uniqueness to hold under suitable regularity assumptions.

Let $\mathcal{H}$ be a Hilbert space, and let a group $G$ act on $\mathcal{H}$ by isometries. We now characterise when the quotient $\mathcal{H}/G$ is itself a Hilbert space.

\begin{prop}
Let $\mathcal{H}$ be a Hilbert space and $G$ a group acting isometrically on $\mathcal{H}$. Then the quotient space $\mathcal{H}/G$ is a Hilbert space if and only if:

\begin{enumerate}
    \item Every orbit is a closed affine subspace of $\mathcal{H}$; and
    \item The action of $G$ is by translations along a closed linear subspace $\mathcal{S} \subset \mathcal{H}$.
\end{enumerate}

In this case, $\mathcal{H}/G \cong \mathcal{H}/\mathcal{S}$, with the quotient norm given by
\[
\|[h]\| \;=\; \inf_{s \in S} \| h - s \|_{\mathcal{H}}.
\]
\end{prop}

\begin{proof}[Proof Sketch]
“Only if” direction: If $\mathcal{H}/G$ is Hilbert, then it satisfies the parallelogram identity. This implies convexity of orbits, which in turn implies they must be affine subspaces. Preservation of midpoints forces the group to act by translations. Completeness implies the orbits are closed.

“If” direction: If $G$ acts by translation along a closed subspace $\mathcal{S}$, then the quotient $\mathcal{H}/S$ is a Hilbert space under the usual norm.
\end{proof}

\paragraph{Example: Constant shifts} 
Let $\Gamma_c$ act on $\mathcal{H}$ by shifts \(T_c q = q(\cdot + c)\). This is a unitary action, but the orbit \(\{T_c q\}\) is not affine unless \(q \equiv 0\). Thus the quotient $\mathcal{H}/\Gamma_c$ is not Hilbert. For any non-constant warp $\gamma$, the operator $T_\gamma: q \mapsto q \circ \gamma$ is nonlinear, and the conditions of the proposition fail entirely.

\subsection{Restricting the original space of outcomes $\mathcal{F}$}
\label{app:restricting-outcome-space}

An alternative approach we explore is considering the $T$-dimensional vector domain $\mathcal{T}$ of the realizations of the outcomes $f(u_t)$ on a fixed grid $u_1,\dots,u_T$ which is sufficiently restricted so we can state strong consistency results for the empirical Fréchet mean estimators over a finite grid. The key assumption our analysis is best on is monotonicity constraint for the evaluations $f(u_1),\dots,f(u_T)$ over the grid: consider a subset 
$\mathcal{T} \subset \mathcal{D} \subset \mathbb{R}^T$. Consequently, we focus on functions that are bounded over $\mathcal{T}$. Formally, we restrict our original space $\mathcal{F}$ to  
\[
\mathcal{G} \;=\; B(\mathcal{T}) \;\subseteq\; \mathcal{F},
\]
the space of bounded functions on $\mathcal{T}$. Likewise, we restrict the pushforward measure $\eta_x$ to $\mathcal{G}$, obtaining a new measure $P_x$ defined by  
\[
P_x(A) \;=\; \eta_x(A \cap \mathcal{G}) 
\quad\text{for all measurable sets }A \subseteq \mathcal{G}.
\]
In other words, $P_x$ is a more specialized (conditional) version of $\eta_x$, supported only on those functions in $\mathcal{F}$ that lie in $\mathcal{G}$.

\begin{thm}[Fréchet means under the \emph{extended} Fisher–Rao metric]
\label{thm:fr_dte_asymp_normality}
Fix an integer $T\ge 2$ and let
\[
\mathcal D
   =\bigl\{\bm Y=(Y(1),\dots,Y(T))^\top\in\mathbb R^{T}
                : Y(1)<\dots<Y(T)\bigr\},
\qquad
\mathcal T
   =\mathcal D\cap[\,0,\;T^{-1}-\delta T^{-1}\,]^{T},
\]
for some $0<\delta<1$.  Assume $n$ i.i.d.\ observations
$\bm Y_1,\dots,\bm Y_n\in\mathcal T$ with
$\bm Y_i=(Y_i(1),\dots,Y_i(T))^\top$.  
Each $\bm Y_i$ is obtained by evaluating an underlying positive
function $f_i\in\mathcal G$
at the design points $\{1/T,\dots,T/T\}$, where  
\(
  \mathcal G=B(\mathcal T)
\)
is the Banach space of bounded functions on $\mathcal T$.  Denote by
$P_x$ the distribution of $f_i$ on $\bigl(\mathcal G,d_{\mathrm{EFR}}\bigr)$,
where $d_{\mathrm{EFR}}$ is the \emph{extended} Fisher–Rao distance

\[
d_{\mathrm{EFR}}(f,g)
  =2\arccos\!\Bigl(
     \frac{\int_0^1\sqrt{f(t)g(t)}\,dt}
          {\sqrt{\int_0^1\!f(t)\,dt}\,
           \sqrt{\int_0^1\!g(t)\,dt}}
    \Bigr),\qquad f,g>0.
\]

\medskip\noindent
\textbf{(1) Population Fréchet mean in $\mathcal G$.}\;
Under Assumption \ref{assu:(uniqueFrechet)} the Fréchet mean of
$\{f_i\}_{i=1}^n$ is unique and equals
\begin{equation}
\overline f
  =\arg\min_{g\in\mathcal G}\,
     \sum_{i=1}^n d_{\mathrm{EFR}}\!\bigl(f_i,g\bigr)^2
  =\arg\min_{g\in\mathcal G}\,
     \sum_{i=1}^n
       \Bigl[
         2\arccos
            \Bigl(
              \frac{\int_0^1\sqrt{f_i(t)g(t)}\,dt}
                   {\sqrt{\int_0^1\!f_i}\;
                    \sqrt{\int_0^1\!g}}
            \Bigr)
       \Bigr]^2.
\label{eq:def_f_0}
\end{equation}

\medskip\noindent
\textbf{(2) Empirical Fréchet mean on the discrete grid.}\;
For $\bm g=(g_1,\dots,g_T)\in\mathcal T$ define the extended FR distance
on~$\mathcal T$ by
\[
\phi(\bm Y_i,\bm g)
   =2\arccos\Bigl(
       \frac{\sum_{j=1}^T\sqrt{Y_i(j)g_j}}
            {\sqrt{\sum_{j=1}^T Y_i(j)}
             \,\sqrt{\sum_{j=1}^T g_j}}
     \Bigr).
\]
The empirical Fréchet mean is
\begin{equation}
\overline f_{\,n}
   =\arg\min_{\bm g\in\mathcal T}\,
         \frac1n\sum_{i=1}^n \phi(\bm Y_i,\bm g)^2
   =\arg\min_{\bm g\in\mathcal T}
       \frac1n\sum_{i=1}^n
       \Bigl[
         2\arccos
            \Bigl(
              \frac{\sum_{j=1}^T\sqrt{Y_i(j)g_j}}
                   {\sqrt{\sum_{j=1}^T Y_i(j)}
                     \sqrt{\sum_{j=1}^T g_j}}
            \Bigr)
       \Bigr]^2.\label{eq:def_f_n-1}
\end{equation}

\medskip\noindent
\textbf{(3) Uniqueness and consistency.}\;
Assume the minimiser $\bm g_\ast$ in
\eqref{eq:def_f_n-1} is not proportional to~$\bm1$.
Under conditions (1)–(4) of
Theorem~\ref{thm:IPW-interpolant-consistency} and mild smoothness of
$d_{\mathrm{EFR}}$ on $\mathcal T$, the empirical mean
$\overline f_{\,n}$ is almost‐surely unique and
\[
\overline f_{\,n}\;\xrightarrow[n\to\infty]{\;P\;}
       \overline f,
\qquad
\text{equivalently }\;
\inf_{\bm g\in\mathbb R^T}\frac1n\sum_{i=1}^n\phi(\bm Y_i,\bm g)
   \;\xrightarrow[n\to\infty]{\;} 
   \inf_{g\in\mathcal G}\,E_{P_x}\bigl[d_{\mathrm{EFR}}(f,g)\bigr].
\]
Thus $\overline f_{\,n}$ is a Huber–type $\rho$-estimator:
it minimises a sample loss, is consistent, and is (a.s.) unique, even
though $\phi(\cdot,\cdot)$ is not convex in $\mathbb R^{T}$.
\end{thm}

\begin{proof}
The idea of our proof is as follows, we restrict the vector domain
$\mathcal{T}$ so small that the $\phi(\bm{Y},\bm{g})$ is injective
and nearly linear in each of these $T$-coordinates in $\bm{Y}$ when
domain $\mathcal{T}$ is assumed; then we verify the (A1)-(A5) in
\citet{huber1967behavior} to elicit the $\bar{f}_{n}$ as a consistent
solution to the finite dimensional minimization problem; then we need
to verify assumptions of Lemma 1 in \citet{cox2020almost} to ensure
that this solution is unique.

%\textbf{Step 1: injectivity. }The function $f(a_{1},\dots,a_{T})=\sum_{i=1}^{T}w_{i}\sqrt{a_{i}}$
%is injective if each unique input $(a_{1},\dots,a_{T})$ produces
%a distinct output. Suppose $f(a_{1},\dots,a_{T})=f(b_{1},\dots,b_{T})$,
%which implies 
%\[
%\sum_{i=1}^{T}w_{i}(\sqrt{a_{i}}-\sqrt{b_{i}})=0.
%\]
%To ensure injectivity under assumption, we impose the constraint $a_{1}\leq a_{2}\leq\cdots\leq a_{T}$, which prevents any swapping of values between indices. To prove injectivity under this constraint, assume $f(a_{1},\dots,a_{T})=f(b_{1},\dots,b_{T})$.
%Starting with the largest term $w_{T}\sqrt{a_{T}}$, it corresponds uniquely to the largest weight $w_{T}$, so $a_{T}=b_{T}$. Subtracting $w_{T}\sqrt{a_{T}}$ from both sides reduces the problem to 
%\[
%\sum_{i=1}^{T-1}w_{i}\sqrt{a_{i}}=\sum_{i=1}^{T-%1}w_{i}\sqrt{b_{i}}.
%\]
%By induction, $a_{i}=b_{i}$ for all $i$. Thus, $f(a_{1},\dots,a_{T})$ is injective on the domain 
%\[
%\mathcal{D}=\{(a_{1},\dots,a_{T})\in[0,\infty)^{T}\mid a_{1}\leq a_{2}\leq\cdots\leq a_{T}\}.
%\]
\textbf{Step 1: injectivity.} Let $\bm{a} = (a_1, \dots, a_T)$ and $\bm{b} = (b_1, \dots, b_T)$ be two elements of $\mathcal{T} \subset \mathbb{R}^T$.
Define the function
\[
F(\bm{a}) = \frac{\sum_{i=1}^{T} w_i \sqrt{a_i}}{\sqrt{\sum_{i=1}^T a_i}},
\]
where $w_1 > w_2 > \cdots > w_T > 0$ are fixed strictly decreasing weights. We aim to show that $F$ is injective on the domain
\[
\mathcal{D} = \left\{ (a_1, \dots, a_T) \in [0, \infty)^T \mid a_1 \le a_2 \le \cdots \le a_T < \tfrac{1}{T} - \tfrac{\delta}{T} \right\}.
\]

Suppose $F(\bm{a}) = F(\bm{b})$ for $\bm{a}, \bm{b} \in \mathcal{D}$.
Let $S_{\bm{a}} = \sum_{i=1}^{T} a_i$ and $S_{\bm{b}} = \sum_{i=1}^{T} b_i$.
Then
\[
\frac{\sum_{i=1}^{T} w_i \sqrt{a_i}}{\sqrt{S_{\bm{a}}}} = \frac{\sum_{i=1}^{T} w_i \sqrt{b_i}}{\sqrt{S_{\bm{b}}}}.
\]
Multiplying both sides by $\sqrt{S_{\bm{a}}}$ and $\sqrt{S_{\bm{b}}}$ gives
\begin{equation}
\sum_{i=1}^{T} w_i \sqrt{a_i} = r^{-1} \sum_{i=1}^{T} w_i \sqrt{b_i},
\qquad \text{where } r := \frac{\sqrt{S_{\bm{b}}}}{\sqrt{S_{\bm{a}}}} > 0.
\label{eq:extended-proj-A}
\end{equation}

To isolate the contribution of the largest coordinate, rewrite \eqref{eq:extended-proj-A} as:
\begin{equation}
w_T \big( \sqrt{a_T} - r^{-1} \sqrt{b_T} \big)
= \sum_{i=1}^{T-1} w_i \big( r^{-1} \sqrt{b_i} - \sqrt{a_i} \big).
\label{eq:extended-proj-B}
\end{equation}
The right-hand side is bounded in absolute value by $w_{T-1} \sum_{i=1}^{T-1} |\sqrt{b_i} - r \sqrt{a_i}|$.
Since $w_T > w_{T-1}$, the equality in \eqref{eq:extended-proj-B} can only hold if
\begin{equation}
\sqrt{a_T} = r^{-1} \sqrt{b_T}
\quad \Rightarrow \quad
a_T = r^{-2} b_T.
\label{eq:extended-proj-C}
\end{equation}

Summing both sides of \eqref{eq:extended-proj-C} over $i=1,\dots,T$ gives
\[
S_{\bm{a}} = r^{-2} S_{\bm{b}}.
\]
But by definition $r^2 = S_{\bm{b}} / S_{\bm{a}}$, so substituting in gives
\[
S_{\bm{a}} = \frac{1}{r^2} S_{\bm{b}} = \frac{1}{r^2} (r^2 S_{\bm{a}}) = S_{\bm{a}} \quad \Rightarrow \quad r^2 = 1.
\]
Hence $r = 1$ and therefore $a_T = b_T$. Substituting this into \eqref{eq:extended-proj-A} and cancelling the $T$th terms yields
\[
\sum_{i=1}^{T-1} w_i \sqrt{a_i} = \sum_{i=1}^{T-1} w_i \sqrt{b_i}.
\]
Repeating the same argument recursively (using that $w_{T-1} > w_{T-2} > \cdots$) we conclude that $a_i = b_i$ for all $i=1, \dots, T$. Thus, $F$ is injective on the domain $\mathcal{D}$.\\

\textbf{Step 2: almost-sure uniqueness.}
Because $\phi_{\textsc{EFR}}$ is not convex on $\mathbb R^{T}$,
classical $M$–estimator theory cannot guarantee a single minimiser.
We therefore verify the three requirements of Lemma 1 in
\citet{cox2020almost}.

\begin{itemize}

\item \textbf{(Absolute continuity).}\;
      For any fixed $\bm g\in\mathcal T$ the map
      \[
        \bm Y\;\longmapsto\;
        \phi_{\textsc{EFR}}(\bm Y,\bm g)
        =2\arccos\!\Bigl(
           \frac{\sum_{j=1}^{T}\sqrt{Y(j)\,g_j}}
                {\sqrt{\sum_{j=1}^{T}Y(j)}\,
                 \sqrt{\sum_{j=1}^{T}g_j}}
        \Bigr)
      \]
      is $\mathcal B(\mathbb R^{T})$–measurable and takes values in
      the bounded interval $\bigl[0,\,2\arccos(1-\varepsilon)\bigr)$ with
      $\varepsilon=\delta/(1-\delta)>0$.  
      Hence its distribution under $P_x$ is absolutely continuous on
      the full support $\mathcal T$.

\item \textbf{(Manifold).}\;
      The optimisation set is the single compact, convex polytope
      $\mathcal T\subset\mathbb R^{T}$, which is a second–countable
      Hausdorff manifold of dimension $T$.

\item \textbf{(Continuous differentiability).}\;
      Write
      \( \displaystyle
        \Psi(\bm y,\bm g)=
        \frac{\sum_{j}\sqrt{y_j g_j}}
             {\sqrt{\sum_{j}y_j}\sqrt{\sum_{j}g_j}}
      \).
      Both numerator and denominator are $C^{\infty}$ on
      $(0,\infty)^{2T}$ and strictly positive on
      $\mathcal T\times\mathcal T$, so $\Psi$ is $C^{\infty}$.  
      Because $\Psi\le 1-\varepsilon$, the composition
      $\phi_{\textsc{EFR}}=2\arccos\circ\Psi$ is also $C^{\infty}$.
      Consequently:

      \begin{enumerate}
      \item $\phi_{\textsc{EFR}}(\bm Y,\bm g)$ is continuous in
        $(\bm Y,\bm g)\in\mathcal T\times\mathcal T$.
      \item For every $\bm g\in\mathcal T$ the map
        $\bm Y\mapsto\phi_{\textsc{EFR}}(\bm Y,\bm g)$ is
        differentiable, and its gradient is continuous in
        $(\bm Y,\bm g)$.
      \item For every $\bm Y\in\mathbb R^{T}$, $\bm g\in\mathcal T$
        and every direction
        $\Delta\in A(\bm g)$ (the $T$-dimensional tangent cone of
        $\mathcal T$ at $\bm g$),
        the directional derivative
        $D_\Delta\phi_{\textsc{EFR}}(\bm Y,\bm g)$ exists and is
        continuous in $(\bm Y,\bm g)$.
      \end{enumerate}
\end{itemize}

\medskip
\noindent
\emph{Genericity (assumption\,a).}  
Define
\[
\xi(\bm g_{1},\bm g_{2},\bm Y)
   =\phi_{\textsc{EFR}}(\bm Y,\bm g_{1})
    -\phi_{\textsc{EFR}}(\bm Y,\bm g_{2}).
\]
By Step 1 (injectivity) the normalised inner–product term inside the
two arccosine expressions is different whenever
$\bm g_{1}\neq\bm g_{2}$, so
\(\xi(\bm g_{1},\bm g_{2},\bm Y)\neq 0\) for all
\[
(\bm g_{1},\bm g_{2},\bm Y)\in
\Xi=\{(\bm g_{1},\bm g_{2},\bm Y):
      \bm g_{1},\bm g_{2},\bm Y\in\mathcal T,\;
      \bm g_{1}\neq\bm g_{2}\}.
\]

\medskip
All three Cox–Reid assumptions and the genericity condition are thus
satisfied; therefore the empirical minimiser in
\eqref{eq:def_f_n-1}$^\star$ is almost surely \emph{unique}.

\paragraph{Step 3: Consistency.}

Let $\bm Y_{1},\dots,\bm Y_{n}\in\mathcal{T}\subset\mathbb R^{T}$ be
i.i.d.\ draws from a distribution $P_x$ with finite second moment.
Write
\[
\phi_{\textsc{EFR}}(\bm y,\bm g)
   \;=\;
   2\arccos
     \Bigl(
       \frac{\displaystyle\sum_{j=1}^{T}\sqrt{y_j g_j}}
            {\sqrt{\displaystyle\sum_{j=1}^{T} y_j}\,
             \sqrt{\displaystyle\sum_{j=1}^{T} g_j}}
     \Bigr),
   \qquad \bm y,\bm g\in\mathcal{T}.
\]

Define the empirical loss  
\(
  \hat\gamma_n(\bm g)=\tfrac1n\sum_{i=1}^{n}
                      \phi_{\textsc{EFR}}(\bm Y_i,\bm g)^2
\)
and let  
\(
  \bar f_n=\arg\min_{\bm g\in\mathcal T}\hat\gamma_n(\bm g).
\)
We verify Huber’s conditions (A1)–(A5) for
$\rho(\bm Y,\bm g)=\phi_{\textsc{EFR}}(\bm Y,\bm g)^2$; then
$\bar f_n\!\to\!\bar f$ in probability (and a.s.) by
\citeauthor{huber1967behavior}’s theorem.

\begin{itemize}
\item[\textbf{(A1)}] \textbf{Measurability.}
      For fixed $\bm g$, $\phi_{\textsc{EFR}}(\bm Y,\bm g)$ is
      $\mathcal B(\mathbb R^T)$-measurable.  Because
      $\bm y\!\mapsto\!\sum_j\sqrt{y_j g_j}$ and
      $\bm y\!\mapsto\!\sum_j y_j$ are continuous, the arccosine of their
      ratio is measurable as well.

\item[\textbf{(A2)}] \textbf{Lower semicontinuity.}
      $\phi_{\textsc{EFR}}(\bm y,\bm g)$ is $C^{1}$ on
      $\mathcal T\times\mathcal T$; thus it is continuous and
      $P_x$-a.s.\ lower semicontinuous in $\bm y$.

\item[\textbf{(A3)}] \textbf{Finite envelope.}
      On $\mathcal T$ we have $y_j,g_j\le 1/T-\delta/T$ so
      \[
        \frac{\sum_{j}\sqrt{y_j g_j}}
             {\sqrt{\sum_{j}y_j}\sqrt{\sum_{j}g_j}}
        \;\le\;1-\varepsilon,
        \qquad\varepsilon:=\tfrac{\delta}{1-\delta}>0.
      \]
      Hence $\phi_{\textsc{EFR}}(\bm y,\bm g)\in[0,2\arccos(1-\varepsilon)]$
      for all $(\bm y,\bm g)\in\mathcal T\times\mathcal T$.
      Setting $a(\bm y)\equiv0$ gives
      \(\sup_{\bm g}\rho(\bm y,\bm g)<\infty\) and
      \(\mathbb{E}_{P_x}\sup_{\bm g}\rho(\bm Y,\bm g)<\infty\).

\item[\textbf{(A4)}] \textbf{Identification.}
      By Step 1 (injectivity) the map
      $\bm y\mapsto\phi_{\textsc{EFR}}(\bm y,\bm g)$
      is one-to-one for each $\bm g$, hence the expected loss
      \(
        \gamma(\bm g)=\mathbb{E}_{P_x}\rho(\bm Y,\bm g)
      \)
      has a unique minimiser $\bar f$ in $\mathcal T$.

\item[\textbf{(A5)}] \textbf{Compact parameter set.}
      $\mathcal T$ is compact, so any positive continuous
      $b(\bm g)$ e.g.\ $b(\bm g)\equiv1$ satisfies
      \(\inf_{\bm g\in\mathcal T}
          \tfrac{\rho(\bm y,\bm g)-a(\bm y)}{b(\bm g)}\ge0\)
      with $a(\bm y)=0$.
\end{itemize}

All Huber conditions hold; therefore
\[
\frac1n\sum_{i=1}^{n}\rho(\bm Y_i,\bar f_n)
\;-\;
\inf_{\bm g\in\mathcal T}\frac1n\sum_{i=1}^{n}\rho(\bm Y_i,\bm g)
\;\xrightarrow{n\to\infty}0,
\quad\text{and}\quad
\bar f_n\xrightarrow{P_x}\bar f.
\]
Since $\mathcal T$ is compact,
$\bar f_n\to\bar f$ also holds almost surely
(by the usual subsequence argument).
\end{proof}

This sufficiency result is interesting in application, it means that the uniqueness of empirical Fréchet mean is ensured in a narrower and narrower cube, as the number of sample points $T$ over the grid increases. In other words, if we have more and more sample points, there could be identifiability issue, warning us of the ``extra-flexibility'' brought by functional data. However, since our condition is only sufficient,
there could be a wider domain (e.g., removing the monotonicity induced by $\mathcal{D}$) where the uniqueness is ensured.

\begin{cor}
Let
\[
\lambda(\bm g)
   :=\mathbb{E}_{P_x}
        \bigl[\phi_{\textsc{EFR}}(\bm Y,\bm g)\bigr],
   \qquad\bm g\in\mathbb R^{T},
\]
and assume the Jacobian
$\frac{\partial\lambda}{\partial \bm{g}}\mid_{\bm{g}=\bar{f}}
  =:\bm\Lambda$
is non-singular.  Then
\[
\sqrt{n}\,\bigl(\,\overline f_{\,n}-\overline f\,\bigr)
   \;\;\xrightarrow{d}\;\;
   \mathcal N\!\bigl(
      \bm 0,\;
      \bm\Lambda^{-1}\bm C\,\bm\Lambda^{-\top}
   \bigr),
\]
where $\bm C=\operatorname{Cov}_{P_x}
        \bigl[\phi_{\textsc{EFR}}(\bm Y,\bar f)\bigr].$
\end{cor}

\begin{proof}
Since $\overline f_{\,n}$ is a Huber--type $\rho$-estimator
(Theorem~\ref{thm:IPW-interpolant-consistency}) and
$\phi_{\textsc{EFR}}$ is continuously differentiable on
$\mathcal T$, Theorem~6.6 of \citet{huber2011robust}
applies directly, yielding the stated asymptotic normality.
\end{proof}
\section{Properties of balancing weights}\label{sec:Balancing-weights}
Here we illustrate explicitly why under the \emph{back-door} criterion, the resulting \(\mathbb{E}_{Q}[\omega_x]=1\). For a fixed treatment arm \(x\in\{0,1\}\) define  
\[
\omega_x \;=\;
\frac{\mathbf{1}\{X=x\}}{\pi_x(\bm V)},
\qquad
\pi_x(\bm V) \;=\; P(X=x\mid\bm V).
\]

Take expectation under the observational law \(Q\) of \((X,\bm V,\bm Y)\):
\begin{equation}
\mathbb{E}_{Q}[\omega_x]
\;=\;
\mathbb{E}_{Q}
\!\Bigl[
  \tfrac{\mathbf 1\{X=x\}}{\pi_x(\bm V)}
\Bigr]
\;=\;
\mathbb{E}_{\bm V}
\Bigl[
  \mathbb{E}
  \bigl[
    \tfrac{\mathbf 1\{X=x\}}{\pi_x(\bm V)}
    \,\big|\,\bm V
  \bigr]
\Bigr].
\label{eq:expectation-balancing-weights-1}
\end{equation}
Conditional on \(\bm V=\bm v\),
\[
\mathbb{E}\bigl[\mathbf 1\{X=x\}\mid\bm V=\bm v\bigr]
  \;=\;
  P(X=x\mid\bm V=\bm v)
  \;=\;
  \pi_x(\bm v).
\]
Hence
\begin{equation}
\mathbb{E}
\Bigl[
  \tfrac{\mathbf 1\{X=x\}}{\pi_x(\bm V)}
  \,\Big|\,\bm V=\bm v
\Bigr]
\;=\;
\frac{\pi_x(\bm v)}{\pi_x(\bm v)}
\;=\;1.
\label{eq:expectation-balancing-weights-2}
\end{equation}

Substituting \eqref{eq:expectation-balancing-weights-2} into \eqref{eq:expectation-balancing-weights-1} gives
\[
\;
  \mathbb{E}_{Q}[\omega_x]=\mathbb{E}_{\bm V}[1]=1.
\;
\]

Thus the IPW weight integrates to one, confirming
\(\omega_x\) is a valid likelihood ratio that re-weights the
observational law \(Q\) to the interventional law for \(X=x\).
\section{Asymptotic normality of dynamic effects}
\subsection{\label{subsec:Asymptotic-normality-L2} Asymptotic normality of residuals in \texorpdfstring{$L_{2}$}{L2}}

In this section, we establish the asymptotic normality of the residuals of our
pointwise and norm-based estimators of the treatment effect. We first directly prove Theorem~\ref{thm:l2_dte_asymp_normality} for the simpler finite $T$ case, we extend the argument as $T\to\infty$ following a functional Central Limit Theorems (fCLT) argument and point to the additional assumptions that have to be made.

\paragraph{Proof of Theorem~\ref{thm:l2_dte_asymp_normality} for finite $T$}

\begin{proof}
 Each entry \(\hat{\Delta}(t)\) is an average (or
difference of two averages) of i.i.d.\ observations, so the multivariate Central Limit Theorem ensures
\[
  \sqrt{n}\,\bigl(\hat{\Delta} - \Delta\bigr)
  \;\;\xrightarrow{d}\;
  \mathcal{N}\Bigl(\mathbf{0},\mathbf{K}\Bigr).
\]

When \(\|\Delta\|\neq 0\), the map \(g(x)=\|x\|_{2}\) is differentiable around
\(\Delta\), and the usual delta method yields the normal limit with variance
\(\Delta^\top\,\mathbf{K}\,\Delta \,/\,\|\Delta\|_{2}^2\).  

If \(\Delta=0\), then \(g\) is not differentiable at \(0\), but by the continuous
mapping theorem we immediately get \(\sqrt{n}\,\|\hat{\Delta}\|_{2}\xrightarrow{d} \|\mathcal{Z}\|\) where \(\|\mathcal{Z}\|\) is the Euclidean norm of a mean-zero Gaussian vector with covariance \(\mathbf{K}\). Its square \(\|\mathcal{Z}\|_2^2\) follows a generalized \(\chi^2\)-distribution (the distribution of
\(\sum_{j=1}^T \lambda_j \,\nu_j^2\) for \(\nu_j \sim \mathcal{N}(0,1)\), \(\lambda_j\)
the eigenvalues of \(\mathbf{K}\)).
\end{proof}

\paragraph{Infinite-dimensional extension (\texorpdfstring{$T\to\infty$}{T->infinity}).}

Consider $\mathcal{F}=L^{2}([0,1])$ with $\phi(f,g)=\left\langle f-g,f-g\right\rangle $,
each observation $\bm{Y}_{i}$ is associated with a treatment variable $X_{i}$ and covariates $\bm{V}_{i}$. In this setting, the Fréchet mean corresponds to the pointwise, cross-sectional mean of functions, which is unique due to the strict convexity of the squared $L_{2}$ norm.
\begin{proof}
For the infinite dimensional setting, a similar argument for asymptotic normality of the estimator residuals can be made using some recent results from \cite{kennedy2023semiparametric} demonstrating that under standard \textit{identifiability} assumptions outlined above, the defined functional potential outcomes $\bm{Y}^{(1)}$ and $\bm{Y}^{(0)}$ belong to the Donsker class (a formal proof is provided as a special case of \citep[Theorem 3]{kennedy2023semiparametric}). If, instead of a fixed dimension \(T\), the potential outcomes 
\(\bm{Y}^{(1)}\) and \(\bm{Y}^{(0)}\) are functions in a suitable
infinite-dimensional space (e.g.\ \(L_2([0,1])\)), one can invoke a
\emph{functional} CLT. Under conditions ensuring that the empirical processes
\(\hat{F}(\bm{Y}^{(x)})\) converge in distribution in some function space (often
requiring \(\bm{Y}^{(x)}\) to lie in a Donsker class, see
\cite{van2000asymptotic,kennedy2023semiparametric}), one obtains
\[
  \sqrt{n} \Bigl[\hat{F}\bigl(\bm{Y}^{(1)}\bigr) - F\bigl(\bm{Y}^{(1)}\bigr)\Bigr]
  \;\xrightarrow{d}\;\mathcal{GP}^{(1)},
  \qquad
  \sqrt{n} \Bigl[\hat{F}\bigl(\bm{Y}^{(0)}\bigr) - F\bigl(\bm{Y}^{(0)}\bigr)\Bigr]
  \;\xrightarrow{d}\;\mathcal{GP}^{(0)},
\]
where \(\mathcal{GP}^{(1)}, \mathcal{GP}^{(0)}\) are mean-zero Gaussian \emph{processes}.
Hence their difference is also asymptotically Gaussian in the relevant function
space:
\[
  \sqrt{n}\,\Bigl(\hat{\Delta}(\cdot) - \Delta(\cdot)\Bigr)
  \;\;=\;\;
  \sqrt{n}\Bigl[\hat{F}\bigl(\bm{Y}^{(1)}\bigr)-F\bigl(\bm{Y}^{(1)}\bigr)\Bigr]
  \;-\;
  \sqrt{n}\Bigl[\hat{F}\bigl(\bm{Y}^{(0)}\bigr)-F\bigl(\bm{Y}^{(0)}\bigr)\Bigr]
  \;\xrightarrow{d}\;\mathcal{GP}^{(1)} \,-\, \mathcal{GP}^{(0)}.
\]
Applying the \emph{functional} delta method to the map
\(\Delta(\cdot)\mapsto \|\Delta(\cdot)\|_{L^2}\) requires Fréchet
differentiability at non-zero \(\Delta\).  The derivative of
\(\|\cdot\|_{L^2}\) at \(\Delta\) is
\[
  h'_{\Delta}\bigl(g\bigr)
  \;=\; \frac{\langle g,\Delta\rangle_{L^{2}}}{\|\Delta\|_{L^{2}}}
  \quad\text{for}\quad \|\Delta\|\neq 0.
\]
From this, one obtains precisely the same asymptotic normality result for
\(\sqrt{n}\,\bigl(\|\hat{\Delta}\|_{L^2}-\|\Delta\|_{L^2}\bigr)\), but with
\(\mathcal{GP}^{(x)}\) now interpreted as infinite-dimensional Gaussian processes
(and their inner products with \(\Delta\) giving the limiting variance). For the
zero-function case \(\|\Delta\|_{L^2}=0\), one again obtains convergence in
distribution to the norm of a mean-zero Gaussian process, i.e.\ \(\|\mathcal{Z}\|_{L_2}\)
for \(\mathcal{Z}\) in a suitable function space; see
\cite[Theorem 3.9]{testa2025doubly} for a more detailed study of the asymptotic normality of the residuals of $\hat{\Delta}$ in the functional setting. For the infinite-dimensional scenario, the primary additional requirement
is that the \emph{class} of (potential) outcome functions
\(\bm{Y}^{(1)}(\cdot)\) and \(\bm{Y}^{(0)}(\cdot)\) be Donsker (ensuring a
functional CLT), and that the map \(\Delta(\cdot)\mapsto \|\Delta(\cdot)\|_{L^2}\)
be sufficiently regular (Fréchet differentiable away from zero). Under these
assumptions, all the conclusions of
Theorem~\ref{thm:l2_dte_asymp_normality} extend naturally to \(T\to\infty\).
\end{proof}
\subsection{Constructing operator kernel using SRSF embedding\label{subsec:proof-operator-kernel}}

Although one can formally define a Riemannian metric on $\mathcal{F}$  (where $\mathcal{F}$ is a suitable set of curves), a key result from 
\citet{srivastava2016functional} shows that the geodesic distance under the Fisher--Rao metric between any two curves $f,g\in\mathcal{F}$  can be computed directly via their SRSFs. Concretely, we claim:
\[
    d_{\mathrm{FR}}\bigl(f,g\bigr)
    \;=\;
    \|\,q_{f} - q_{g}\|_{L^2}.
\]
To see this, construct a path 
$\{f_\tau\}_{\tau\in[0,1]}\subset\mathcal{F}$ corresponding to a straight line in SRSF space:

\begin{enumerate}
\item Let $q_f(t)$, $q_g(t)$ be the SRSFs of 
$f$ and $g$. A straight line in $L^2$ between $q_f$ and $q_g$ is 
\[
    q_\tau(t) \;=\; (1-\tau)\,q_f(t) \;+\;\tau\,q_g(t).
\]
\item  Define
\[
    f_\tau(t) \;=\; \int_{0}^{t} q_\tau(s)^{2}\,ds.
\]
Then $f_\tau(0)=0$ and $\dot{f}_\tau(s)=q_\tau(s)^2$, so $f_\tau\in\mathcal{F}$ 
for each $\tau$.
\item The Fisher--Rao length of $\tau\mapsto f_\tau$ 
can be shown to equal
\[
    \int_{0}^{1}
      \left\|
         \frac{d}{d\tau} f_\tau
      \right\|_{\mathrm{FR}} d\tau
    \;=\;
    \|\,q_f - q_g\|_{L^2},
\]
matching the length of a straight line in $L^2$.
\item  One shows no other path in $\mathcal{F}$ yields a smaller Fisher--Rao length. Thus $\|\,q_f - q_g\|_{L^2}$ is indeed the geodesic 
distance between $f$ and $g$.
\end{enumerate}
Hence,
\[
  d_{\mathrm{FR}}(f,g)
  \;=\;
  \|\,q_{f} - q_{g}\|_{L^{2}},
  \quad
  \forall\,f,g\in\mathcal{F}.
\]

\smallskip
\noindent
Next we demonstrate the isometric embedding. Define 
\[
    \Phi:\mathcal{F}\;\to\; L^2([0,1]), 
    \quad
    \Phi(f) \;=\; q_f.
\]
By the above claim,
\[
  d_{\mathrm{FR}}(f,g)
  \;=\;
  \|\Phi(f)-\Phi(g)\|_{L^2}.
\]
Hence $\Phi$ is an \emph{isometric embedding} of $\mathcal{F}$, endowed 
with the Fisher--Rao distance, into the Hilbert space $L^2([0,1])$. 
It is injective provided we fix $f(0)=0$ so that no two distinct curves 
have the same SRVF. Finally we utilize that Gaussian (RBF) Kernel is positive definite. Recall the well-known fact: if $\mathcal{H}$ is a Hilbert space (e.g.\ $L^2([0,1])$), 
then
\[
   \psi(h,h') 
   \;=\; 
   \exp\!\bigl(-\alpha\,\|h - h'\|^2\bigr),
   \quad \alpha>0,
\]
is a positive-definite kernel on $\mathcal{H}$. Since 
$\Phi:\mathcal{F}\to L^2([0,1])$ is an isometry, define
\[
   k_{\mathrm{FR}}(f,g)
   \;=\;
   \exp\!\Bigl(
     -\,\alpha\,d_{\mathrm{FR}}(f,g)^2
   \Bigr)
   \;=\;
   \exp\!\Bigl(
     -\,\alpha\,\|\Phi(f)-\Phi(g)\|_{L^2}^2
   \Bigr).
\]
For any finite set $\{f_1,\dots,f_n\}\subset\mathcal{F}$ and real coefficients 
$\{c_1,\dots,c_n\}$, we have
\[
  \sum_{i,j=1}^{n} 
  c_i\,c_j\,
  k_{\mathrm{FR}}(f_i,f_j)
  \;=\;
  \sum_{i,j=1}^{n}
  c_i\,c_j\,
  \exp\!\Bigl(
    -\,\alpha\,\|\Phi(f_i)-\Phi(f_j)\|_{L^2}^2
  \Bigr).
\]
Because $\{\Phi(f_i)\}\subset L^2([0,1])$, the standard RBF kernel in $L^2$ is positive-definite, so the above sum is nonnegative. Therefore, 
$k_{\mathrm{FR}}(\cdot,\cdot)$ is positive definite on $\mathcal{F}$. Thus
\[
   k_{\mathrm{FR}}(f,g)
   \;=\;
   \exp\!\bigl(
     -\,\alpha\,d_{\mathrm{FR}}(f,g)^2
   \bigr)
\]
is a valid positive-definite kernel on $\mathcal{F}$.

\section{Digital outcomes for Parkinson's disease}

\label{sec:digital-outcomes}

\subsection{Features for symptom detection}

Within stationary segments, 28 different features are extracted from
each axis of the pre-processed accelerometer data, based on non-overlapping
5 second windows: the standard deviation; the power in different frequency
bands (0.3-2Hz, 4-8Hz, 8-12Hz, 0.2-14Hz); the frequency and height
of the dominant peak in the PSD within different frequency bands (0.3-2Hz,
4-8Hz, 8-12Hz, 0.2-14Hz); the sample entropy, and the spectral entropy;
13 mel cepstral coefficients. We then perform z-score normalization
of the feature vectors, using the data from the unscripted activities
of all 24 PD patients (both with and without annotated tremor episodes)
and all 24 non-PD controls. These features are then used to train a logistic classifier (i.e.,  applying $l_1$-regularization) to predict the annotated gait and tremor episodes in the home visits segment of the Parkinson@Home cohort. The performance measures for both classifiers are computed using leave-one-subject-out cross validation and are reported in Table \ref{tab:detection_metrics}.

\subsection{Annotation protocol}

If the patient performed significant upper limb activities for more
than 3 seconds, the assistant only annotated whether tremor was present
or not. Otherwise, both the presence and severity of the tremor were
annotated, similar to the MDS-UPDRS part III tremor items. However,
because of the low prevalence of moderate and severe tremor in this
dataset, we aimed to model only the presence of tremor.

\begin{table}
\caption{Demographic and clinical characteristics of PD patients and non-PD
controls included in the analyses. IQR: inter-quartile range. MDS-UPDRS:
Movement Disorder Society-Sponsored Revision of the Unified Parkinson's
Disease Rating Scale. Part 1: non-motor experiences of daily living.
Part 2: motor experiences of daily living. Part 3: motor examination.
Part 4: motor complications. {*}: 1 missing value. \label{tab:Demographics}}
\centering

\begin{tabular}{ccccc}
\hline 
 & %
\begin{tabular}{@{}c}
\textbf{PD patients with}\tabularnewline
\textbf{annotated tremor $(n=8)$}\tabularnewline
\end{tabular}\tabularnewline
\hline 
 & 61.0 (58.3 - 69.0)\tabularnewline
\hline 
\multicolumn{2}{l}{Gender (men), n (\%)}\tabularnewline
\hline 
\multicolumn{2}{l}{Time since diagnosis of PD (years), median (IQR)}\tabularnewline
\hline 
\multicolumn{2}{l}{Hoehn and Yahr stage in off state, n (\%)}\tabularnewline
\multicolumn{2}{l}{\quad{}Stage 1}\tabularnewline
\multicolumn{2}{l}{\quad{}Stage 2}\tabularnewline
\multicolumn{2}{l}{\quad{}Stage 3}\tabularnewline
\multicolumn{2}{l}{\quad{}Stage 4}\tabularnewline
\hline 
\multicolumn{2}{l}{MDS-UPDRS, median (IQR)}\tabularnewline
\multicolumn{1}{l}{\quad{}Part 1} & \multicolumn{1}{l}{(scale range: 0 to 52)}\tabularnewline
\multicolumn{1}{l}{\quad{}Part 2} & \multicolumn{1}{l}{(scale range: 0 to 52)}\tabularnewline
\multicolumn{1}{l}{\quad{}Part 3 (off state)} & \multicolumn{1}{l}{(scale range: 0 to 132)}\tabularnewline
\multicolumn{1}{l}{\quad{}Part 3 (on state)} & \multicolumn{1}{l}{(scale range: 0 to 132)}\tabularnewline
\multicolumn{1}{l}{\quad{}Part 4} & \multicolumn{1}{l}{(scale range: 0 to 24)}\tabularnewline
\hline 
\multicolumn{2}{l}{Tremor sub-score of MDS-UPDRS part III, median (IQR)}\tabularnewline
\multicolumn{1}{l}{\quad{}Off state} & \multicolumn{1}{l}{(scale range: 0 to 40)}\tabularnewline
\multicolumn{1}{l}{\quad{}On state} & \multicolumn{1}{l}{(scale range: 0 to 40)}\tabularnewline
\hline 
\multicolumn{2}{l}{Rest tremor severity (arm of most affected side), n (\%)}\tabularnewline
\multicolumn{2}{l}{\quad{}0: normal (off $|$ on)}\tabularnewline
\multicolumn{2}{l}{\quad{}1: slight (off $|$ on)}\tabularnewline
\multicolumn{2}{l}{\quad{}2: mild (off $|$ on)}\tabularnewline
\multicolumn{2}{l}{\quad{}3: moderate (off $|$ on)}\tabularnewline
\multicolumn{2}{l}{\quad{}4: severe (off $|$ on)}\tabularnewline
\hline 
\end{tabular}
\end{table}

\begin{table}[h]
    \centering
    \renewcommand{\arraystretch}{1.2}
    \setlength{\tabcolsep}{10pt}
    \definecolor{headerblue}{RGB}{0, 153, 204}
    \begin{tabular}{lccc} 
        \textbf{Event detection} & \textbf{{AUROC}} & \textbf{Sensitivity} & \textbf{Specificity} \\
        \midrule
        Tremor detection & 0.89 (0.07) & 0.74 (0.16) & 0.95 (0.07) \\
        Gait detection   & 0.95 (0.03) & 0.73 (0.17) & 0.95 (0.04) \\
        \bottomrule
    \end{tabular}
    \caption{Performance metrics for tremor and gait detection algorithms trained and evaluated from the Parkinson@Home annotated visits. We use leave-one-subject-out cross-validation; mean measures across folds are reported with standard deviation in the brackets.}
    \label{tab:detection_metrics}
\end{table}
\subsection{Conditional Treatment Effects}
To provide context on the digital monitoring application discussed in Section \ref{subsec:Causal-effects-on}, we estimate the conditional ATE using the standard IPW estimator \cite{imai2004causal}, explicitly conditioning on the time covariate. Figure \ref{fig:Conditional-on-time-effects} presents the mean and standard deviation of the potential outcomes for both the treated and non-treated groups. We observe that the more complex causal estimators yield a similar dynamic ATE for the effect of levodopa therapy on gait energy. However, while the IPW conditional ATE estimator obscures the significance of the therapy's effect on tremor episodes, the purpose-designed functional causal kernel ATE estimator better highlights this effect.

\begin{figure}
\centering

\includegraphics[width=0.4\textwidth]{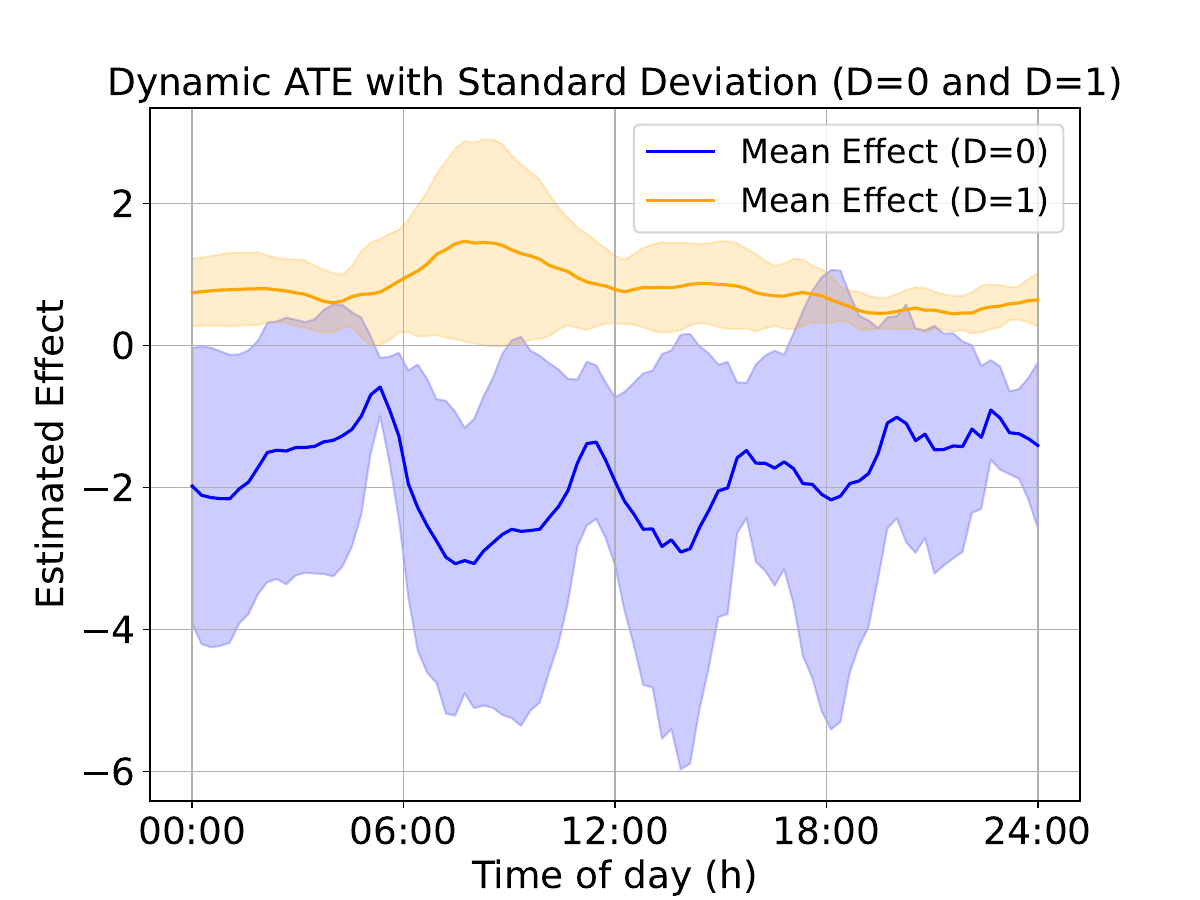}
\includegraphics[width=0.41\textwidth]{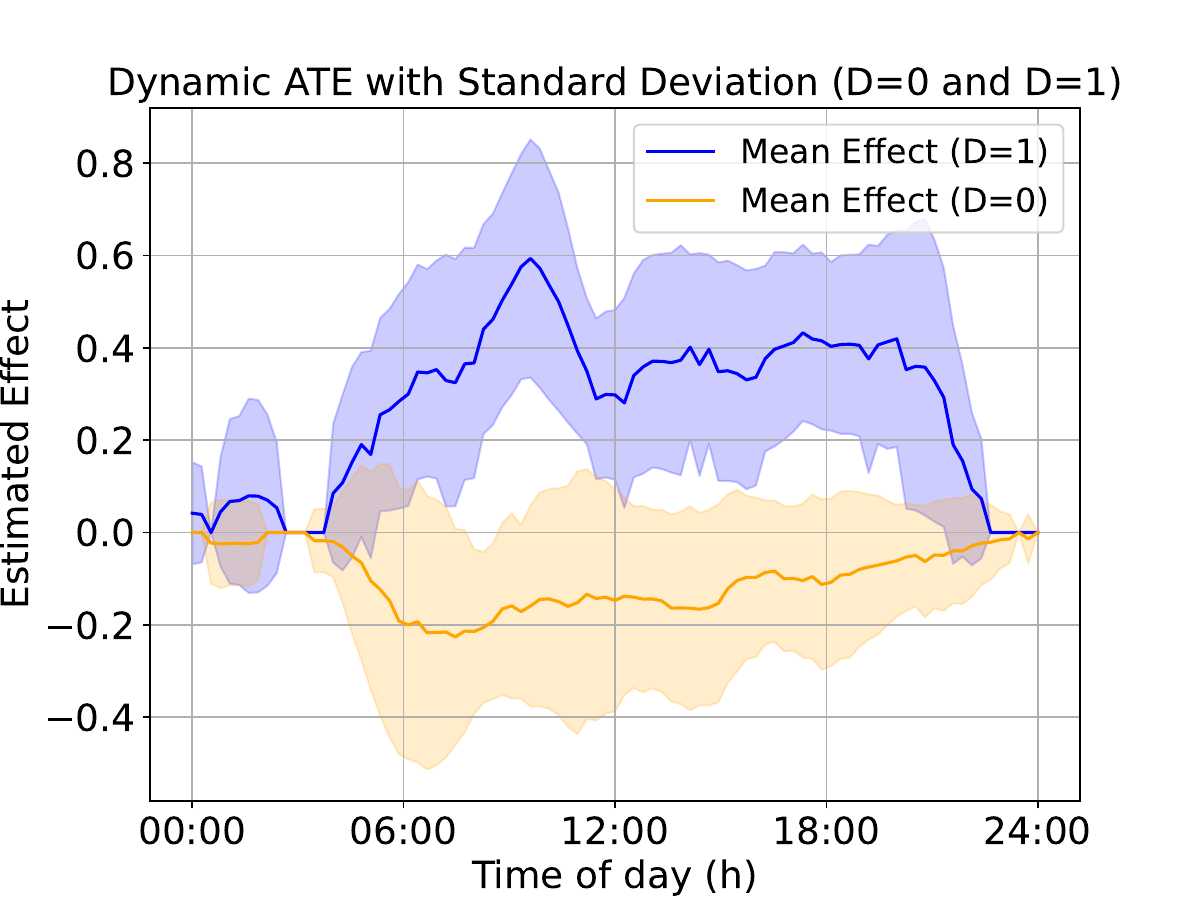}
\caption{\label{fig:Conditional-on-time-effects} Average effect of disease
category on digital outcomes (left) gait, (right) tremor. IPW estimator
is used to condition explicitly on time and display the $\varphi^{cATE}$
estimator on the time grid. }
\end{figure}

\end{document}